  \providecommand\BibTeX{{%
    \normalfont B\kern-0.5em{\scshape i\kern-0.25em b}\kern-0.8em\TeX}}}
\newtheorem{thm}{\protect\theoremname}
  \theoremstyle{plain}
  \newtheorem{lem}[thm]{\protect\lemmaname}
  \theoremstyle{remark}
  \newtheorem{rem}[thm]{\protect\remarkname}
  \theoremstyle{plain}
  \newtheorem*{lem*}{\protect\lemmaname}
  \theoremstyle{plain}
  \theoremstyle{plain}
  \newtheorem{cor}[thm]{\protect\corollaryname}
  \providecommand{\corollaryname}{Corollary}
  \providecommand{\lemmaname}{Lemma}
  \providecommand{\propositionname}{Proposition}
  \providecommand{\remarkname}{Remark}
\providecommand{\theoremname}{Theorem}
\begin{document}

\global\long\def\ve{\varepsilon}
\global\long\def\R{\mathbb{R}}
\global\long\def\Rn{{\mathbb{R}^{n}}}
\global\long\def\Rd{{\mathbb{R}^{d}}}
\global\long\def\E{\mathbb{E}}
\global\long\def\P{\mathbb{P}}
\global\long\def\bx{\mathbf{x}}
\global\long\def\vp{\varphi}
\global\long\def\ra{\rightarrow}
\global\long\def\smooth{C^{\infty}}
\global\long\def\symm{\mathcal{S}^n}
\global\long\def\psd{\mathcal{S}^n_{+}}
\global\long\def\pd{\mathcal{S}^n_{++}}
\global\long\def\dom{\mathrm{dom}\,}
\global\long\def\intdom{\mathrm{int}\,\mathrm{dom}\,}
\global\long\def\Tr{\mathrm{Tr}}

\newcommand{\LL}[1]{\textcolor{blue}{[LL:#1]}}
\newcommand{\DA}[1]{\textcolor{cyan}{[DA:#1]}}
\newcommand{\REV}[1]{\textcolor{red}{#1}}
\newcommand{\REVV}[1]{\textcolor{red}{#1}}

\newcommand{\bvec}[1]{\mathbf{#1}}
\renewcommand{\Re}{\mathrm{Re}}
\renewcommand{\Im}{\mathrm{Im}}
\newcommand{\textred}[1]{\textcolor{red}{#1}}

\newcommand{\mc}[1]{\mathcal{#1}}
\newcommand{\mf}[1]{\mathfrak{#1}}
\newcommand{\mcV}{\mathcal{V}}
\newcommand{\Vin}{V_{\mathrm{in}}}
\newcommand{\Vstar}{V^{\ast}}
\newcommand{\Jstar}{J_{\ast}}
\newcommand{\tJstar}{\wt{J}_{\ast}}
\newcommand{\Vout}{V_{\mathrm{out}}}
\newcommand{\RPA}{\mathrm{RPA}}
\newcommand{\xc}{\mathrm{xc}}
\newcommand{\vF}{\bvec{F}}
\newcommand{\vg}{\bvec{g}}
\newcommand{\vR}{\bvec{R}}
\newcommand{\vq}{\bvec{q}}
\newcommand{\vx}{\bvec{x}}
\newcommand{\ud}{\,\mathrm{d}}
\newcommand{\ext}{\mathrm{ext}}
\newcommand{\KS}{\mathrm{KS}}
\newcommand{\Exc}{E_{\mathrm{xc}}}
\newcommand{\Vxc}{\hat{V}_{\mathrm{xc}}}
\newcommand{\Vion}{\hat{V}_{\mathrm{ion}}}
\newcommand{\abs}[1]{\lvert#1\rvert}
\newcommand{\norm}[1]{\lVert#1\rVert}
\newcommand{\average}[1]{\left\langle#1\right\rangle}
\newcommand{\wt}[1]{\widetilde{#1}}
\newcommand{\hxc}{\mathrm{hxc}}

\newcommand{\etc}{\textit{etc.}~}
\newcommand{\etal}{\textit{et al}~}  
\newcommand{\ie}{\textit{i.e.}~}
\newcommand{\eg}{\textit{e.g.}~}
\newcommand{\Or}{\mathcal{O}}
\newcommand{\mcF}{\mathcal{F}}
\newcommand{\lmin}{\lambda_{\min}}
\newcommand{\lmax}{\lambda_{\max}}
\newcommand{\Ran}{\text{Ran}}
\newcommand{\I}{\mathrm{i}} 
\newcommand{\EE}{\mathbb{E}}
\newcommand{\NN}{\mathbb{N}}
\newcommand{\RR}{\mathbb{R}}
\newcommand{\CC}{\mathbb{C}}
\newcommand{\ZZ}{\mathbb{Z}}
\newcommand{\Hper}{H^1_\#(\Omega)}
\newcommand{\jmp}[1]{\jl#1\jr}
\newcommand{\al}{\{\hspace{-3.5pt}\{}
\newcommand{\ar}{\}\hspace{-3.5pt}\}}
\newcommand{\avg}[1]{\al#1\ar}
\newcommand{\jl}{[\![}
\newcommand{\jr}{]\!]}
\newcommand{\VN}{\mathbb V_N}
\newcommand{\angstrom}{\mbox{\normalfont\AA}~}

\title{Quantum linear system solver based on time-optimal adiabatic quantum computing and quantum approximate optimization algorithm}

\author{Dong An}
\email{dong\_an@berkeley.edu}
\affiliation{%
  \institution{Department of Mathematics, University of California, Berkeley}
  \city{Berkeley}
  \state{California}
  \country{USA}
  \postcode{94720}
}

\author{Lin Lin}
\email{linlin@math.berkeley.edu}
  \affiliation{%
  \institution{Department of Mathematics and Challenge Institute of Quantum Computation, University of California, Berkeley}
  \city{Berkeley}
  \state{California}
  \country{USA}
  \postcode{94720}}
  \affiliation{%
  \institution{Computational Research Division, Lawrence Berkeley National Laboratory}
  \city{Berkeley}
  \state{California}
  \country{USA}
  \postcode{94720}
}

\renewcommand{\shortauthors}{An and Lin}

\begin{abstract}
  We demonstrate that with an optimally tuned scheduling function,  adiabatic quantum computing (AQC) can  readily solve a quantum linear system problem (QLSP) with  $\Or(\kappa~\text{poly}(\log(\kappa/\epsilon)))$ runtime, where $\kappa$ is the condition number, and $\epsilon$ is the target accuracy.  This is near optimal with respect to both  $\kappa$ and $\epsilon$, and is achieved without relying on complicated amplitude amplification procedures that are difficult to implement. Our method is applicable to general non-Hermitian  matrices, and the cost as well as the number of qubits can be reduced when restricted to Hermitian matrices, and further to Hermitian positive definite matrices. The success of the time-optimal AQC implies that the quantum approximate optimization algorithm (QAOA) with an optimal control protocol can also achieve the same complexity in terms of the runtime. 
Numerical results indicate that QAOA can yield the lowest runtime compared to the time-optimal AQC, vanilla AQC, and the recently proposed randomization method.  
\end{abstract}

\begin{CCSXML}
<ccs2012>
<concept>
<concept_id>10003752.10003753.10003758</concept_id>
<concept_desc>Theory of computation~Quantum computation theory</concept_desc>
<concept_significance>500</concept_significance>
</concept>
<concept>
<concept_id>10002950.10003714.10003715</concept_id>
<concept_desc>Mathematics of computing~Numerical analysis</concept_desc>
<concept_significance>500</concept_significance>
</concept>
</ccs2012>
\end{CCSXML}

\ccsdesc[500]{Theory of computation~Quantum computation theory}
\ccsdesc[500]{Mathematics of computing~Numerical analysis}

\keywords{quantum linear system problem, adiabatic quantum computing, quantum approximate optimization algorithm}

\maketitle

\section{Introduction} 
Linear system solvers are used ubiquitously in scientific computing. Quantum algorithms for solving large systems of linear equations, also called the quantum linear system problem (QLSP), have received much attention recently~\cite{HarrowHassidimLloyd2009,ChildsKothariSomma2017,ChakrabortyGilyenJeffery2018,GilyenSuLowEtAl2019,SubasiSommaOrsucci2019,WossnigZhaoPrakash2018,CaoPapageorgiouPetrasEtAl2013,XuSunEndoEtAl2019,Bravo-PrietoLaRoseCerezoEtAl2019}.  The goal of QLSP is to efficiently compute $\ket{x}=A^{-1}\ket{b}/\norm{A^{-1}\ket{b}}_2$ on a quantum computer, where $A\in\CC^{N\times N}$, and $\ket{b}\in\CC^N$ is a normalized vector  (for simplicity we assume $N=2^n$, and $\|A\|_2 = 1$). The ground-breaking Harrow, Hassidim, and Lloyd (HHL) algorithm obtains $\ket{x}$ with cost $\Or(\text{poly}(n) \kappa^2 /\epsilon)$, where $\kappa = \|A\|\|A^{-1}\|$ is the condition number of $A$, and $\epsilon$ is the target accuracy. On the other hand, the best classical iterative algorithm is achieved by the conjugate gradient method, where the cost is at least $\Or(N \sqrt{\kappa}\log(1/\epsilon))$, with the additional assumptions that $A$ should be Hermitian positive definite and a matrix-vector product can be done with $\Or(N)$ cost~\cite{Saad2003}. The complexity of direct methods based on the Gaussian elimination procedure removes the dependence on $\kappa$, but the dependence on $N$ is typically super-linear even for sparse matrices~\cite{Liu1992}. Therefore the HHL algorithm can potentially be exponentially faster than classical algorithms with respect to $N$. The undesirable dependence with respect to $\epsilon$ is due to the usage of the quantum phase estimation (QPE) algorithm. Recent progresses based on linear combination of unitaries (LCU)~\cite{ChildsKothariSomma2017} and quantum signal processing (QSP)~\cite{LowChuang2017,GilyenSuLowEtAl2019} have further improved the scaling to $\Or(\kappa^2 \text{poly}(\log(\kappa/\epsilon)))$ under different query models, without using QPE. However, the $\Or(\kappa^2)$ scaling can be rather intrinsic to the methods, at least before complex techniques such as variable time amplitude amplification (VTAA) algorithm~\cite{Ambainis2012} are applied. 

The VTAA algorithm is a generalization of the standard amplitude amplification algorithm, and allows to quadratically amplify the success probability of quantum algorithms in which different branches stop at different time. In~\cite{Ambainis2012}, VTAA is first used to successfully improve the complexity of HHL algorithm to $\widetilde{\Or}(\kappa/\epsilon^3)$. In~\cite{ChildsKothariSomma2017}, the authors further combine VTAA algorithm and a low-precision phase estimate to improve the complexity of LCU to $\widetilde{\Or}(\kappa~\text{poly}(\log(\kappa/\epsilon)))$, which is near-optimal with respect to both $\kappa$ and $\epsilon$. It is worth noting that the VTAA algorithm is a complicated procedure and can be difficult to implement. Thus, it remains of great interest to obtain alternative algorithms to solve QLSP with near-optimal complexity scaling without resorting to VTAA.

Some of the alternative routes for solving QLSP are provided by the adiabatic quantum computing (AQC)~\cite{JansenRuskaiSeiler2007,AlbashLidar2018} and a closely related method called the randomization method (RM)~\cite{BoixoKnillSomma2009,SubasiSommaOrsucci2019}. The key idea of both AQC and RM is to solve QLSP as an \textit{eigenvalue} problem with respect to a transformed matrix. Assume that a Hamiltonian simulation can be efficiently performed on a quantum computer, it is shown that the runtime of RM scales as $\Or(\kappa\log (\kappa)/\epsilon)$~\cite{SubasiSommaOrsucci2019}, which achieves near-optimal complexity with respect to $\kappa$ without using VTAA algorithm as a subroutine. The key idea of the RM is to approximately follow the adiabatic path based on the quantum Zeno effect (QZE) using a Monte Carlo method. Although RM is inspired by AQC, the runtime complexity of the (vanilla) AQC is at least $\Or(\kappa^2/\epsilon)$~\cite{SubasiSommaOrsucci2019,BoixoSomma2010,AlbashLidar2018}. Therefore the RM is found to be at least quadratically faster than AQC with respect to $\kappa$. 

In this paper, we find that with a simple modification of the scheduling function to traverse the adiabatic path, the gap between AQC and RM can be fully closed, along with the following two aspects. 1) We propose a family of rescheduled AQC algorithms called AQC(p). Assuming  $\kappa$ (or its upper bound) is known, we demonstrate that for any matrix $A$ (possibly non-Hermitian or dense), when $1<p<2$, the runtime complexity of AQC(p) can be only $\Or(\kappa/\epsilon)$. Thus AQC(p) removes a logarithmic factor with respect to $\kappa$ compared to RM. 2) We propose another rescheduled algorithm called AQC(exp), of which the runtime is  $\Or(\kappa~\text{poly}(\log(\kappa/\epsilon)))$. The main benefit of AQC(exp) is the improved dependence with respect to the accuracy $\epsilon$, and this is the near-optimal complexity (up to logarithmic factors)  with respect to both $\kappa$ and $\epsilon$.  The scheduling function of AQC(exp) is also universal because we do not even need the knowledge of an upper bound of $\kappa$. Existing works along this line \cite{Nenciu1993, GeMolnarCirac2016} only suggest that runtime complexity is $\Or(\kappa^3~\text{poly}(\log(\kappa/\epsilon)))$, which improves the dependence with respect to $\epsilon$ at the expense of a much weaker dependence on $\kappa$. Our main technical contribution is to again improve the dependence on $\kappa. $ 
Since the cost of any generic QLSP solver can not be less than $\Or(\kappa)$~\cite{HarrowHassidimLloyd2009}, our result achieves the near-optimal complexity up to logarithmic factors. 
We remark that in the AQC based algorithm, only the total runtime $T$ depends on $\kappa$. 

Beyond the runtime complexity, we also discuss gate-efficient approaches to implement our AQC(p) and AQC(exp) methods. 
In particular, assume that we are given access to the same query models as those in~\cite{ChildsKothariSomma2017}: the sparse input model of a $d$-sparse matrix $A$ and the prepare oracle of the state $\ket{b}$. 
We demonstrate that, when the adiabatic dynamics is simulated using the truncated Dyson series method~\cite{LowWiebe2019}, the query complexity of the AQC(p) method scales $\Or(d\kappa/\epsilon \log(d\kappa/\epsilon))$, and that of the AQC(exp) method scales $\Or(d\kappa~\text{poly}\log(d\kappa/\epsilon))$. 
Both algorithms scale almost linearly in terms of $\kappa$, and the AQC(exp) method can achieve near-optimal scaling in both $\kappa$ and $\epsilon$. 
Furthermore, the asymptotic scaling of the AQC(exp) method is the same as that of LCU with VTAA method~\cite[Theorem 5]{ChildsKothariSomma2017}. However, the AQC(exp) method avoids the usage of complex VTAA routine, which significantly simplifies its practical implementation. 

The quantum approximate optimization algorithm (QAOA)~\cite{FarhiGoldstoneGutmann2014}, as a quantum variational algorithm, has received much attention recently thanks to the feasibility of being implemented on near-term quantum devices. Due to the natural connection between AQC and QAOA, our result immediately suggests that the time-complexity for solving QLSP with QAOA is also at most $\Or(\kappa~\text{poly}(\log(\kappa/\epsilon)))$, which is also confirmed by numerical results. We also remark that both QAOA and AQC schemes prepare an approximate solution to the QLSP in a pure state, while RM prepares a mixed state. Moreover, all methods above can be efficiently implemented on gate-based computers and are much simpler than those using the VTAA algorithm as a subroutine. 

\section{Quantum Linear System Problem and Vanilla AQC}
Assume $A\in \CC^{N\times N}$ is an invertible matrix with condition number $\kappa$ and $\|A\|_2 = 1$. 
Let $\ket{b}\in\CC^{N}$ be a normalized vector. 
Given a target error $\epsilon$, the goal of QLSP is to prepare a normalized state 
$\ket{x_{\text{a}}}$, which is an 
$\epsilon$-approximation of the normalized solution of the linear system $\ket{x}=A^{-1}\ket{b}/\norm{A^{-1}\ket{b}}_2$, 
in the sense that $\|\ket{x_{\text{a}}}\bra{x_{\text{a}}}-\ket{x}\bra{x}\|_2 \le \epsilon$. 

For simplicity, we  first assume $A$ is Hermitian and positive definite and will discuss the generalization to non-Hermitian case later. 

The first step to design an AQC-based algorithm for solving QLSP is to transform the QLSP to an equivalent eigenvalue problem. Here we follow the procedure introduced in~\cite{SubasiSommaOrsucci2019}. 
Let $Q_{b}=I_N-\ket{b}\bra{b}$. 
We introduce
$$
H_0=\sigma_x \otimes Q_b=\begin{pmatrix}
0 & Q_b\\
Q_b & 0
\end{pmatrix},
$$
then $H_0$ is a Hermitian matrix and the null space of $H_0$ is $\text{Null}(H_0)=\text{span}\{\ket{\wt{b}},\ket{\bar{b}}\}$. Here $\ket{\wt{b}}=\ket{0,b}:=(b,0)^{\top},\ket{\bar{b}}=\ket{1,b}:=(0,b)^{\top}$. The dimension of $H_0$ is $2N$ and one ancilla qubit is needed to enlarge the matrix block. We also define
$$
H_1=\sigma_{+}\otimes (AQ_b)+\sigma_{-}\otimes (Q_bA)=\begin{pmatrix}
0 & AQ_b\\
Q_bA & 0
\end{pmatrix}.
$$
Here $\sigma_{\pm}=\frac12(\sigma_x\pm \I\sigma_y)$. Note that if $\ket{x}$ satisfies $A\ket{x}\propto \ket{b}$, we have $Q_bA\ket{x}=Q_b\ket{b}=0$. Then  $\text{Null}(H_1)=\text{span}\{\ket{\wt{x}},\ket{\bar{b}}\}$ with $\ket{\wt{x}}=\ket{0,x}$. 
Since $Q_b$ is a projection operator, the gap between $0$ and the rest of the eigenvalues of $H_0$ is $1$. The gap between $0$ and the rest of the eigenvalues of $H_1$ is bounded from below  
by $1/\kappa$ (see Appendix~\ref{app:gap}).

QLSP can be solved if we can prepare the zero-energy state $\ket{\wt{x}}$ of $H_1$, which can be achieved by the AQC approach. 
Let $H(f(s)) = (1-f(s))H_0 + f(s)H_1, 0\le s\le 1$. 
The function $f:[0,1]\rightarrow [0,1]$ is called a scheduling function, and is a strictly increasing mapping with $f(0) = 0, f(1) = 1$. 
The simplest choice is $f(s)=s$, which gives the ``vanilla AQC''.  
We sometimes omit the $s$-dependence as $H(f)$ to emphasize the dependence 
on $f$. 
Note that for any $s$, $\ket{\bar{b}}$ is always in $\text{Null}(H(f(s)))$,  and there exists a state $\ket{\wt{x}(s)}=\ket{0,x(s)}$, such that $\text{Null}(H(f(s)))=\{\ket{\wt{x}(s)},\ket{\bar{b}}\}$. In particular, $\ket{\wt{x}(0)}=\ket{\wt{b}}$ and $\ket{\wt{x}(1)}=\ket{\wt{x}}$, and therefore $\ket{\wt{x}(s)}$ is the desired adiabatic path. Let $P_0(s)$ be the projection to the subspace $\text{Null}(H(f(s)))$, which is a rank-2 projection operator $P_0(s)=\ket{\wt{x}(s)}\bra{\wt{x}(s)}+\ket{\bar{b}}\bra{\bar{b}}$.
Furthermore, the eigenvalue $0$ is separated from the rest of the eigenvalues of $H(f(s))$ by a gap 
\begin{equation}\label{eqn:gap_pd}
    \Delta(f(s))\ge \Delta_*(f(s)) := 1-f(s)+f(s)/\kappa. 
\end{equation}
We refer to Appendix~\ref{app:gap} for the derivation. 

Consider the adiabatic evolution
\begin{equation}    
  \frac{1}{T}\I \partial_s \left|\psi_T(s)\right> = H(f(s))\left|\psi_T(s)\right>, \quad     \ket{\psi_T(0)}=\ket{\wt{b}}, 
\label{eqn:adiabatic}
\end{equation}
where $0 \leq s \leq 1$, and the parameter $T$ is called the runtime of AQC. 
The quantum adiabatic theorem~\cite[Theorem 3]{JansenRuskaiSeiler2007} states that for any $0\le s\le 1$,
\begin{equation}
  |1-\braket{\psi_T(s)|P_0(s)|\psi_T(s)}|\le \eta^2(s),
  \label{eqn:adiabaticEstimate}
\end{equation}
where 
\begin{equation}\label{eqn:adiabaticEstimate_eta}
  \eta(s)=C\Big\{\frac{\|H^{(1)}(0)\|_2}{T \Delta^2(0)} + \frac{\|H^{(1)}(s)\|_2}{T \Delta^2(f(s))} + \frac{1}{T}\int_0^s \left(\frac{\|H^{(2)}(s')\|_2}{\Delta^2(f(s'))} + \frac{\|H^{(1)}(s')\|^2_2}{\Delta^3(f(s'))}\right)ds'\Big\}.
\end{equation}
The derivatives of $H$ are taken with respect to $s$, \ie 
$H^{(k)}(s) := \frac{d^k}{ds^k} H(f(s)), k = 1,2$. 
Throughout the paper, we shall use a generic symbol $C$ to denote constants independent of $s,\Delta,T$.

Intuitively, the quantum adiabatic theorem in Eq.~\eqref{eqn:adiabaticEstimate} says that, if the initial state is an eigenstate corresponding to the eigenvalue 0, then for large enough $T$ the state $\ket{\psi_T(s)}$ will almost stay in the eigenspace of $H(s)$ corresponding to the eigenvalue 0, where there is a double degeneracy and only one of the eigenstate $\ket{\wt{x}(s)}$ is on the desired adiabatic path. 
However, such degeneracy will not break the effectiveness of AQC for the following reason. 
Note that $\braket{\bar{b}|\psi_T(0)}=0$, and $H(f(s))\ket{\bar{b}}=0$ for all $0\le s\le 1$, so the Schr\"odinger dynamics~\eqref{eqn:adiabatic} implies $\braket{\bar{b}|\psi_T(s)}=0$, which prevents any transition of $\ket{\psi_T(s)}$ to $\ket{\bar{b}}$. 
Therefore the adiabatic path will stay along $\ket{\wt{x}(s)}$. Using $\braket{\bar{b}|\psi_T(s)}=0$, we have $P_0(s)\ket{\psi_T(s)}=\ket{\wt{x}(s)}\braket{\wt{x}(s)|\psi_T(s)}$.
Therefore the estimate in Equation~\eqref{eqn:adiabaticEstimate} becomes
$$
1-|\braket{\psi_T(s)|\wt{x}(s)}|^2\le \eta^2(s).
$$
This also implies that (see Appendix~\ref{app:measurements}) 
$$
\norm{\ket{\psi_T(s)}\bra{\psi_T(s)}-\ket{\wt{x}(s)}\bra{\wt{x}(s)}}_2\le \eta(s).
$$
Therefore $\eta(1)$ can be an upper bound of the distance of the density matrix. 

If we simply assume $\norm{H^{(1)}}_2,\norm{H^{(2)}}_2$ are bounded by constants, and use the worst case bound that $\Delta\ge \kappa^{-1}$, we arrive at the conclusion that in order to have $\eta(1)\le \epsilon$, the runtime of vanilla AQC is $T\gtrsim \kappa^3/\epsilon$.  

\section{AQC(p) method}
Our goal is to reduce the runtime by choosing a proper scheduling function. 
The key observation is that the accuracy of AQC depends not only on 
the gap $\Delta(f(s))$ but also on the derivatives of $H(f(s))$, as revealed in the estimate in Equation~\eqref{eqn:adiabaticEstimate_eta}. 
Therefore it is possible to improve the accuracy if a proper time schedule allows 
the Hamiltonian $H(f(s))$ to slow down when the gap is close to $0$. 
We consider the following 
schedule~\cite{JansenRuskaiSeiler2007,AlbashLidar2018} 
\begin{equation}
    \dot{f}(s) = c_p \Delta_*^p(f(s)), \quad f(0) = 0, \quad p > 0{.}
    \label{eqn:AQCSchedule}
\end{equation}
Here $\Delta_*$ is defined in Eq.~\eqref{eqn:gap_pd} and $c_p = \int_0^1 \Delta_*^{-p}(u) du$ is a normalization constant chosen so that $f(1) = 1$.  When $1 < p \leq 2$, Eq.~\eqref{eqn:AQCSchedule} can be explicitly solved as 
\begin{equation}
    f(s) = \frac{\kappa}{\kappa - 1}\left[1-\left(1+s(\kappa^{p-1}-1)\right)^{\frac{1}{1-p}}\right]{.}
    \label{eqn:AQCSchedule_explicit}
\end{equation}
Note that as $s\to 1$, $\Delta_{*}(f(s))\to \kappa^{-1}$, and therefore the dynamics of $f(s)$ slows down as $f\to 1$ and the gap decreases towards $\kappa^{-1}$. We refer to the adiabatic dynamics (Equation~\eqref{eqn:adiabatic}) with the schedule in Equation~\eqref{eqn:AQCSchedule} as the AQC(p) scheme. Our main result is given in Theorem~\ref{thm:main} (See Appendix~\ref{app:proof_linear} for the proof). 

\begin{thm}\label{thm:main}
Let $A\in\CC^{N\times N}$ be a Hermitian positive definite matrix with condition number $\kappa$. For any choice of $1 < p < 2$, the error of the AQC(p) scheme satisfies 
\begin{equation}
    \|\ket{\psi_T(1)}\bra{\psi_T(1)}-\ket{\wt{x}}\bra{\wt{x}}\|_2 \leq C \kappa/T.
\end{equation}
Therefore in order to prepare an $\epsilon-$approximation of the solution of QLSP it suffices to choose the runtime  $T = \Or(\kappa/\epsilon)$. Furthermore, when $p=1,2$, the bound for the runtime becomes $T=\Or(\kappa\log(\kappa)/\epsilon)$.
\end{thm}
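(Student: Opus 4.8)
The plan is to bound $\eta(1)$ directly from \eqref{eqn:adiabaticEstimate_eta}, since the excerpt already establishes that the density-matrix error at $s=1$ is controlled by $\eta(1)$. The decisive structural fact is that $H(f) = (1-f)H_0 + f H_1$ is \emph{affine} in $f$, so $dH/df = H_1 - H_0$ is a constant matrix with $\|H_1-H_0\|_2 \le C$ (both $\|H_0\|_2$ and $\|H_1\|_2$ are bounded by $1$ using $\|A\|_2=1$ and that $Q_b$ is a projection). Consequently the chain rule gives $H^{(1)}(s) = (H_1-H_0)\dot f(s)$ and $H^{(2)}(s) = (H_1-H_0)\ddot f(s)$, so that $\|H^{(1)}(s)\|_2 \le C\dot f(s)$ and $\|H^{(2)}(s)\|_2 \le C|\ddot f(s)|$. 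This reduces every quantity in $\eta$ to the scheduling function and the gap lower bound $\Delta_*$.

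Next I would insert the schedule. From $\dot f = c_p\Delta_*^p(f)$ and $\Delta_*'(f) = -(1-1/\kappa)$, differentiating once more gives $\ddot f = p\,c_p^2\,\Delta_*^{2p-1}(f)\,\Delta_*'(f)$, hence $|\ddot f| \le C c_p^2 \Delta_*^{2p-1}(f)$. Replacing $\Delta$ by its lower bound $\Delta_*$ in all denominators of \eqref{eqn:adiabaticEstimate_eta} and then changing variables $u=f(s')$, $du = c_p\Delta_*^p(u)\,ds'$, collapses the two integrands to a common power: both $\|H^{(2)}\|_2/\Delta^2$ and $\|H^{(1)}\|_2^2/\Delta^3$ become $\Or(c_p\Delta_*^{p-3}(u))$ after the substitution. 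Thus all of $\eta(1)$ is governed by the two moments
\begin{equation*}
c_p = \int_0^1 \Delta_*^{-p}(u)\,du, \qquad I_p = \int_0^1 \Delta_*^{p-3}(u)\,du,
\end{equation*}
together with the two boundary terms, which evaluate to $\Or(c_p/T)$ at $s=0$ (where $\Delta_*=1$) and $\Or(c_p\kappa^{2-p}/T)$ at $s=1$ (where $\Delta_*=\kappa^{-1}$).

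The remaining work is to evaluate these moments, which I would do with the substitution $v=\Delta_*(u)$, turning $\int_0^1\Delta_*^q\,du$ into $\frac{1}{1-1/\kappa}\int_{1/\kappa}^1 v^q\,dv$. For $1<p<2$ the relevant exponents stay away from $-1$, giving $c_p = \Or(\kappa^{p-1}/(p-1))$ and $I_p = \Or(\kappa^{2-p}/(2-p))$. The key cancellation is then $c_p I_p = \Or(\kappa/((p-1)(2-p)))$ and $c_p\kappa^{2-p} = \Or(\kappa/(p-1))$: the $\kappa$-powers combine to exactly $\kappa^1$, so every term of $\eta(1)$ is $\Or(\kappa/T)$, and $T=\Or(\kappa/\epsilon)$ follows from $\eta(1)\le\epsilon$.

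The main obstacle is the pair of boundary cases $p=1$ and $p=2$. Here one exponent in the $v$-integral hits exactly $-1$, so $\int v^{-1}\,dv$ produces a $\log\kappa$ rather than a power: at $p=1$ one gets $c_1=\Or(\log\kappa)$ with $I_1=\Or(\kappa)$, and at $p=2$ one gets $c_2=\Or(\kappa)$ with $I_2=\Or(\log\kappa)$, in both cases yielding $\eta(1)=\Or(\kappa\log\kappa/T)$ and hence $T=\Or(\kappa\log\kappa/\epsilon)$. This is also why the clean estimate degrades near the endpoints: the prefactors $1/(p-1)$ and $1/(2-p)$ blow up as $p\to1^+$ or $p\to2^-$, so the constant $C$ in the theorem depends on $p$ and the open interval $1<p<2$ cannot be replaced by its closure without the logarithmic loss. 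A minor point to check along the way is that the explicit schedule \eqref{eqn:AQCSchedule_explicit} is stated only for $1<p\le2$, so the $p=1$ schedule (which is logarithmic, $f$ solving $\dot f = c_1\Delta_*(f)$) must be treated separately, though the moment computation proceeds identically.
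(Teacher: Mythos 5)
Your proposal is correct and follows essentially the same route as the paper's own proof: compute $H^{(1)},H^{(2)}$ via the chain rule using the affine structure of $H(f)$ and the ODE for $f$, substitute $u=f(s')$ so both integrands collapse to $c_p\Delta_*^{p-3}(u)$, and then evaluate the moments $c_p$ and $\int_0^1\Delta_*^{p-3}$ explicitly to see the powers of $\kappa$ combine to $\kappa^1$ for $1<p<2$ and to $\kappa\log\kappa$ at the endpoints $p=1,2$. Your remarks about the $p$-dependence of the constant and the separate treatment of the $p=1$ schedule are consistent with, and slightly more explicit than, the paper's presentation.
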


The runtime complexity of the AQC(p) method with respect to $\kappa$ is only $\Or(\kappa)$. Compared to Ref.~\cite{SubasiSommaOrsucci2019}, AQC(p) further removes the $\log(\kappa)$ dependence when $1<p<2$, and hence reaches the optimal complexity with respect to $\kappa$. Interestingly, though not explicitly mentioned in \cite{SubasiSommaOrsucci2019}, the success of RM for solving QLSP relies on a proper choice of the scheduling function, which approximately corresponds to AQC(p=1). It is this scheduling function, rather than the QZE or its Monte Carlo approximation \textit{per se} that achieves the desired $\Or(\kappa\log\kappa)$ scaling with respect to $\kappa$. Furthermore, the scheduling function in RM is similar to the choice of the schedule in the AQC(p=1) scheme. The speedup of AQC(p) versus the vanilla AQC is closely related to the quadratic speedup of the optimal time complexity of AQC for Grover's search~\cite{RolandCerf2002,JansenRuskaiSeiler2007,RezakhaniKuoHammaEtAl2009,AlbashLidar2018}, in which the optimal time scheduling reduces the runtime from $T\sim \Or(N)$ (i.e. no speedup compared to classical algorithms) to $T\sim \Or(\sqrt{N})$ (i.e. Grover speedup). In fact, the choice of the scheduling function in Ref.~\cite{RolandCerf2002} corresponds to AQC(p=2) and that in Ref.~\cite{JansenRuskaiSeiler2007} corresponds to AQC(1<p<2).

\section{AQC(exp) method}\label{sec:aqcexp}
Although AQC(p) achieves the optimal runtime complexity with respect to $\kappa$, the dependence on $\epsilon$ is still $\Or(\epsilon^{-1})$, which limits the method from achieving high accuracy. It turns out that when $T$ is sufficiently large, the dependence on $\epsilon$ could be improved to 
$\Or(\text{poly} \log (1/\epsilon))$, by choosing an alternative scheduling function.

The basic observation is as follows. In the AQC(p) method, the adiabatic error bound we consider, \ie Eq.~\eqref{eqn:adiabaticEstimate_eta}, is the so-called instantaneous adiabatic error bound, which holds true for all $s\in[0,1]$. However, when using AQC for solving QLSP, it suffices only to focus on the  error bound at the final time $s=1$. It turns out that this allows us to obtain a tighter error bound. In fact, such an error bound can be exponentially small with respect to the runtime~\cite{Nenciu1993,WiebeBabcock2012,GeMolnarCirac2016,AlbashLidar2018}. Roughly speaking, with an additional assumption for the Hamiltonian $H(f(s))$ that the derivatives of any order vanish at $s=0,1$, the adiabatic error can be bounded by $c_1\exp(-c_2T^{\alpha})$ for some positive constants $c_1, c_2, \alpha$. 
Furthermore, it is proved in~\cite{GeMolnarCirac2016} that if the target eigenvalue is simple, then $c_1 = \Or(\Delta_*^{-1})$ and $c_2 = \Or(\Delta_*^3)$. Note that $\Delta_* \geq \kappa^{-1}$ for QLSP, 
thus, according to this bound, to obtain an $\epsilon$-approximation, it suffices to choose $T = \Or(\kappa^3~\text{poly}(\log(\kappa/\epsilon)))$. This is an exponential speedup with respect to $\epsilon$, but the dependence on the condition number becomes cubic again.

However, it is possible to reduce the runtime if the change of the Hamiltonian is slow when the gap is small, as we have already seen in the AQC(p) method.  For QLSP, the gap monotonically decreases, and the smallest gap occurs uniquely at the final time, where the Hamiltonian $H(s)$ can be set to vary slowly by requiring its derivatives to vanish at the boundary. 

We consider the following schedule
\begin{equation}\label{eqn:AQCexpSchedule}
    f(s) = c_e^{-1} \int_0^s \exp\left(-\frac{1}{s'(1-s')}\right)\ud s'
\end{equation}
where $c_e = \int_0^1\exp\left(-1/(s'(1-s'))\right)\ud s'$ is a normalization constant such that $f(1) = 1$. 
This schedule can assure that $H^{(k)}(0) = H^{(k)}(1) = 0$ for all $k\geq 1$. 
We refer to the adiabatic dynamics (Equation~\eqref{eqn:adiabatic}) with the schedule in Equation~\eqref{eqn:AQCexpSchedule} as the AQC(exp) scheme.  
Our main result is given in Theorem~\ref{thm:main_exp} (see Appendix~\ref{app:exp} for the proof). 
\begin{thm}\label{thm:main_exp}
    Let $A\in\CC^{N\times N}$ be a Hermitian positive definite matrix with condition number $\kappa$. Then for large enough $T>0$, the final time error 
    $\|\ket{\psi_T(1)}\bra{\psi_T(1)} - \ket{\wt{x}}\bra{\wt{x}}\|_2$ 
    of the AQC(exp) scheme is bounded by 
\begin{equation}
    C\log(\kappa)\exp\left(-C\left(\frac{\kappa\log^2\kappa}{T}\right)^{-\frac{1}{4}}\right).
\end{equation}
Therefore for any $\kappa>e$, $0<\epsilon<1$, in order to prepare an $\epsilon-$approximation of the solution of QLSP, it suffices to choose the runtime $T = \Or\left(\kappa\log^2(\kappa)\log^4\left(\frac{\log\kappa}{\epsilon}\right)\right)$. 
\end{thm}

Compared with RM and AQC(p), although the $\log(\kappa)$ dependence reoccurs, AQC(exp) achieves an exponential speedup over RM and AQC(p) with respect to $\epsilon$ (and hence giving its name), and thus is more suitable for preparing the solution of QLSP with high fidelity. Furthermore, the time scheduling of AQC(exp) is 
universal and AQC(exp) does not require knowledge on the bound of $\kappa$. 

We remark that the performance of the AQC(exp) method is sensitive to the perturbations in the scheduling function, which can affect the final error in the AQC(exp) method. 
This is similar to the finite precision effect in the adiabatic Grover search reported in~\cite{Hen2019}. 
Therefore the scheduling function should be computed to sufficient accuracy on classical computers using numerical quadrature, and implemented accurately as a control protocol on quantum computers.

\section{Gate-based implementation of AQC}\label{sec:aqc_implementation}

We briefly discuss how to implement AQC(p) and AQC(exp) on a gate-based quantum computer. Since $\ket{\psi_T(s)} = \mathcal{T}\exp(-\I T\int_0^sH(f(s'))ds')\ket{\psi_T(0)}$, where $\mathcal{T}$ is the time-ordering operator, it is sufficient to implement an efficient time-dependent Hamiltonian simulation 
of $H(f(s))$. 

One straightforward approach is to use the Trotter splitting method. The lowest order approximation takes the form
\begin{equation}\label{eqn:splitting}
\begin{split}
    \mathcal{T}\exp\left(-\I T\int_0^sH(f(s'))\ud s'\right) \approx \prod_{m=1}^M \exp\left(-\I T h H(f(s_m))\right)\\
    \approx \prod_{m=1}^M \exp\left(-\I T h (1-f(s_m))H_0\right)\exp\left(-\I T h f(s_m)H_1\right)
\end{split}    
\end{equation}
where $h = s/M, s_m = mh$. It is proved in~\cite{vanDamMoscaVazirani2001} that the error of such an approximation is \\
$\Or(\text{poly}(\log (N))T^2/M)$, which indicates that to achieve an  $\epsilon$-approximation, it suffices to choose $M = \Or(\text{poly}(\log (N))T^2/\epsilon)$. 
On a quantum computer, the operations $e^{-\I\tau H_0},e^{-\I\tau H_1}$ require a time-independent Hamiltonian simulation process, which can be implemented via techniques such as LCU and QSP~\cite{BerryChildsCleveETC2015,LowChuang2017}. 
For a $d$-sparse matrix $A$, according to~\cite{BerryChildsKothari2015}, the query complexity is $\widetilde{\Or}(d\tau \log(d\tau/\epsilon))$ for a single step. Here $f=\widetilde{\mathcal{O}}(g)$ if $f=\mathcal{O}(g~\text{poly}\log (g))$. Note that the total sum of the simulation time of single steps is exactly $T$ regardless of the choice of $M$, and the total query complexity is $\widetilde{\Or}(dT\log(dT/\epsilon))$. Using Theorem~\ref{thm:main} and~\ref{thm:main_exp}, the query complexity of AQC(p) and AQC(exp) is $\widetilde{\Or}(d\kappa/\epsilon\log(d\kappa/\epsilon))$ and $\widetilde{\Or}(d\kappa~\text{poly}(\log(d\kappa/\epsilon)))$, respectively. 
Nevertheless, $M$ scales as $\Or(T^2)$ with respect to the runtime $T$, which implies that the number of time slices should be at least 
$\Or(\kappa^2)$. Therefore the gate complexity scales superlinearly with respect to $\kappa$. The scaling of the Trotter expansion can be improved using high order Trotter-Suzuki formula as well as the recently developed commutator-based error analysis \cite{ChildsSuTranEtAl2019}, but we will not pursue this direction here.

There is an efficient way to directly perform the time evolution of $H(f(s))$ without using the splitting strategy, following the algorithm proposed by Low and Wiebe in~\cite{LowWiebe2019}, where the time-dependent Hamiltonian simulation is performed based on a truncated Dyson expansion.
A detailed discussion on how to implement this algorithm in a gate-efficient way is presented in~\cite[Appendix C]{LinTong2019}, and here we summarize the basic idea as follows. Assume that we are given two input query models: $\mathcal{P}_A$ that gives the locations and values of the nonzero entries of the matrix $A$, and $\mathcal{P}_B$ that produces the quantum state $\ket{b}$. Here the input query models are the same as those in~\cite{ChildsKothariSomma2017}. Then one can construct the block-encoding representations of the matrix $A$~\cite{GilyenSuLowEtAl2019} and the matrix $Q_b$ with $\Or(1)$ additional primitive gates. Next, the block-encodings of $A$ and $Q_b$ can be applied to build the block-encodings of $H_0$ and $H_1$, and then the HAM-T model, which is a block-encoding of the select oracle of the time-dependent Hamiltonian $H(s)$ evaluated at different time steps and serves as the input model in the truncated Dyson series  method~\cite{LowWiebe2019}. Finally, after the construction of HAM-T, the adiabatic dynamics can be simulated following the procedure for solving time-dependent Schr\"odinger equations discussed in~\cite{LowWiebe2019}. 

The costs of AQC(p) and AQC(exp) are summarized in Table~\ref{tab:implementation_scaling}, where for both AQC(p) and AQC(exp), almost linear dependence with respect to $\kappa$ is achieved. 
The almost linear dependence on $\kappa$ cannot be expected to be improved to $\Or(\kappa^{1-\delta})$ with any $\delta>0$~\cite{HarrowHassidimLloyd2009}. Thus both AQC(p) and AQC(exp) are almost optimal with respect to $\kappa$, and AQC(exp) further achieves an exponential speedup with respect to $\epsilon$.

\begin{table}[]
    \centering
    \begin{tabular}{c|c|c}
         & AQC(p) & AQC(exp) \\\hline
        Queries & $\Or(d\kappa/\epsilon\log(d\kappa/\epsilon))$ & $\Or(d\kappa\text{ poly}(\log(d\kappa/\epsilon)))$ \\
        Qubits & $\Or(n+\log(d\kappa/\epsilon))$ & $\widetilde{\Or}(n+\log(d\kappa/\epsilon))$ \\
        Primitive gates & $\Or(nd\kappa/\epsilon~\text{poly}(\log(d\kappa/\epsilon)))$ & $\Or(nd\kappa~ \text{poly}(\log(d\kappa/\epsilon)))$ \\
    \end{tabular}
    \caption{Computational costs of AQC(p) and AQC(exp) via a time-dependent Hamiltonian simulation using the truncated Dyson expansion \cite{LowWiebe2019}. }
    \label{tab:implementation_scaling}
\end{table}

\section{QAOA for solving QLSP}
The quantum approximate optimization algorithm (QAOA)~\cite{FarhiGoldstoneGutmann2014} considers the following parameterized wavefunction
\begin{equation}
\ket{\psi_{\theta}}:=e^{-\I\gamma_P H_1} e^{-\I\beta_P H_0}\cdots e^{-\I\gamma_1 H_1} e^{-\I\beta_1 H_0}\ket{\psi_i}.
\label{eqn:QAOA}
\end{equation}
Here $\theta$ denotes the set of $2P$ adjustable real parameters $\{\beta_i,\gamma_i\}_{i=1}^{P}$, and $\ket{\psi_i}$ is an initial wavefunction. The goal of QAOA is to choose $\ket{\psi_i}$ and to tune  $\theta$, so that $\ket{\psi_\theta}$ approximates a target state. 
In the context of QLSP, we may choose $\ket{\psi_i}=\ket{\wt{b}}$, 
and each step of the QAOA ansatz in Eq.~\eqref{eqn:QAOA} can be efficiently implemented using the quantum singular value transformation~\cite{GilyenSuLowEtAl2019}. 
More specifically, as discussed in Section~\ref{sec:aqc_implementation} and in~\cite{LinTong2019}, the block-encodings of $H_0$ and $H_1$ can be efficiently constructed via the input models for the matrix $A$ and the vector $\ket{b}$. Then the quantum singular value transformation can be directly applied to simulate $e^{-\I \beta H_0}$ and $e^{-\I \gamma H_1}$. 
According to~\cite[Corollary 62]{GilyenSuLowEtAl2019}, the cost of each single simulation scales linearly in time and logarithmically in precision, and hence the total complexity of implementing a QAOA ansatz scales linearly in total runtime of QAOA, defined to be $T:=\sum_{i=1}^P(|\beta_i|+|\gamma_i|)$, and logarithmically in precision.
Notice that with a sufficiently large $P$, the optimal Trotter splitting method becomes a special form of Eq. \eqref{eqn:QAOA}. 
Hence Theorem~\ref{thm:main_exp} implies that with an optimal choice of $\{\beta_i,\gamma_i\}_{i=1}^P$, the QAOA runtime $T$ is at most $\Or(\kappa~\text{poly}(\log(\kappa/\epsilon)))$.
We remark that the validity of such an upper bound requires a sufficiently large $P$ and an optimal choice of $\theta$. On the other hand, our numerical results suggest that the same scaling can be achieved with a much smaller  $P$.

For a given $P$, the optimal $\theta$ maximizes the fidelity as
$$
\max_{\theta} F_{\theta}:=|\braket{\psi_{\theta}|\wt{x}}|^2.
$$
However, the maximization of the fidelity requires the knowledge of the exact solution $\ket{\wt{x}}$ which is not practical. We may instead solve the following minimization problem
\begin{equation}
\min_\theta     \braket{\psi_\theta|H_1^2|\psi_\theta}.
\label{eqn:minQAOA}
\end{equation}
Since the null space of $H_1$ is of dimension 2, the unconstrained minimizer $\ket{\psi_{\theta}}$ seems possible to only have a small overlap with $\ket{\wt{x}}$. However, 
this is not a problem due to the choice of the initial state $\ket{\psi_i}=\ket{\wt{b}}$. Notice that by the variational principle the minimizer $\ket{\psi_\theta}$ maximizes $\braket{\psi_\theta|P_0(1)|\psi_\theta}$. 
Using the fact that $e^{-\I \beta H_0}\ket{\bar{b}}=e^{-\I \gamma H_1}\ket{\bar{b}}=\ket{\bar{b}}$ for any $\beta,\gamma$, we obtain 
$\braket{\bar{b}|\psi_\theta}=\braket{\bar{b}|\wt{b}}=0,$
which means the QAOA ansatz prevents the transition to $\ket{\Bar{b}}$, similar to AQC. 
Then $\braket{\psi_\theta|P_0(1)|\psi_\theta}=\braket{\psi_\theta|\wt{x}}\braket{\wt{x}|\psi_\theta}=F_\theta$, so the minimizer of Eq. $\eqref{eqn:minQAOA}$ indeed maximizes the fidelity.  

For every choice of $\theta$, we evaluate the expectation value $\braket{\psi_\theta|H_1^2|\psi_\theta}$. Then the next $\theta$ is adjusted on a classical computer towards minimizing the objective function. The process is repeated till convergence. Efficient classical algorithms for the optimization of parameters in QAOA are currently an active topic of research, including methods using gradient optimization~\cite{ZhuRabitz1998,MadayTurinici2003},  Pontryagin's maximum principle (PMP)~\cite{YangRahmaniShabaniEtAl2017,BaoKleerWangEtAl2018}, reinforcement learning~\cite{BukovDaySelsEtAl2018,NiuBoixoSmelyanskiyEtAl2019}, to name a few. Algorithm~\ref{alg:qaoa} describes the procedure using QAOA to solve QLSP.

\begin{algorithm}[H]
\caption{QAOA for solving QLSP}
\label{alg:qaoa}
\begin{algorithmic}[1]
\STATE Initial parameters $\theta^{(0)}=\{\beta_i,\gamma_i\}_{i=1}^{2P}$.
\FOR{$k=0,1,\ldots$}
\STATE Perform Hamiltonian simulation to obtain $\psi_\theta^{(k)}$.
\STATE Measure 
$O(\theta^{(k)})=\braket{\psi^{(k)}_\theta|H_1^2|\psi^{(k)}_\theta}$.
\STATE If $O(\theta^{(k)})<\epsilon^2/\kappa^2$, exit the loop.
\STATE Choose $\theta^{(k+1)}$ using a classical optimization method.
\ENDFOR
\end{algorithmic}
\end{algorithm}

Compared to AQC(p) and AQC(exp), QAOA has the following two potential advantages. 
The first advantage is that QAOA provides the possibility of going beyond  AQC-based algorithms. 
Notice that the Trotter splitting method is a special form of the QAOA ansatz in Eq.~\eqref{eqn:QAOA}. 
If the angles $\{\beta_i,\gamma_i\}_{i=1}^P$ have been properly optimized (which is a very strong assumption and will be further discussed later), the total QAOA runtime $T$ will be by definition comparable to or even shorter than the runtime of AQC with the best scheduling function (after discretization).
Second, one way of implementing AQC(p) and AQC(exp) using an operator splitting method requires the time interval to be explicitly split into a large number of intervals, while numerical results indicate that the number of intervals $P$ in QAOA can be much smaller. This could reduce the depth of the quantum circuit. 
Compared to AQC, QAOA has the additional advantage that it only consists of $2P$ \textit{time-independent} Hamiltonian simulation problem, once $\theta$ is known.

Despite the potential advantages, several severe caveats of using QAOA for QLSP arise when we consider beyond the time complexity. 
The first is that classical optimization of the angles $\{\beta_i,\gamma_i\}_{i=1}^{P}$ can be difficult. 
Commonly used classical optimization algorithms, such as the gradient descent method, are likely to be stuck at local optimizers and thus result in sub-optimal performance. 
The cost for the classical optimization is also hard to known \textit{a priori}.
The optimization may require many iterations, which can diminish the gain of the runtime reduction. 
The second is related to the accurate computation of the objective function $O(\theta^{(k)})$. 
Note that the minimal spectrum gap of $H_1$ is $\Or(\kappa^{-1})$. In order to obtain an $\epsilon$-approximation, the precision of measuring $O(\theta)=\braket{\psi_\theta|H_1^2|\psi_\theta}$ should be at least $\Or(\epsilon^2/\kappa^2)$. 
Hence $\Or(\kappa^4/\epsilon^4)$ repeated measurements can be needed to achieve the desired accuracy.

\section{Generalization to non-Hermitian matrices}\label{sec:nonhermitian}

Now we discuss the case when $A$ is not Hermitian positive definite. 
First, we still assume that $A$ is Hermitian (but not necessarily positive 
definite). In this case, we adopt the family of Hamiltonians introduced 
in~\cite{SubasiSommaOrsucci2019}, which overcomes the difficulty 
brought by the indefiniteness of $A$ at the expense of enlarging 
the Hilbert space to dimension $4N$ (so two ancilla qubits are needed to enlarge the matrix block).  Here we define
$$
H_0=\sigma_+ \otimes \left[(\sigma_z \otimes I_N)Q_{+,b}\right] + \sigma_- \otimes \left[Q_{+,b}(\sigma_z \otimes I_N)\right]
$$
where $Q_{+,b} = I_{2N}-\ket{+,b}\bra{+,b}$, and $\ket{\pm}=\frac{1}{\sqrt{2}}(\ket{0}\pm\ket{1})$.  The null space of $H_0$ is $\text{Null}(H_0)=\text{span}\{\ket{0,-,b}, \ket{1,+,b}\}$. 
We also define
$$
H_1=\sigma_+ \otimes \left[(\sigma_x \otimes A)Q_{+,b}\right] + \sigma_- \otimes \left[Q_{+,b}(\sigma_x \otimes A)\right]
$$
Note that $\text{Null}(H_1)=\text{span}\{\ket{0,+,x},\ket{1,+,b}\}$. Therefore the solution of the 
QLSP can be obtained if we can prepare the zero-energy state 
$\ket{0,+,x}$ of $H_1$. 

The family of Hamiltonians for AQC(p) is still given by
$H(f(s)) = (1-f(s))H_0 + f(s)H_1, 0\le s\le 1$. 
Similar to the case of Hermitian positive definite matrices, 
there is a double degeneracy of the eigenvalue $0$, and we aim at preparing one of 
the eigenstate via time-optimal adiabatic evolution. 
More precisely, for any $s$, $\ket{1,+,b}$ is always in $\text{Null}(H(f(s)))$, and there exists a state $\ket{\wt{x}(s)}$ with 
$\ket{\wt{x}(0)} = \ket{0,-,b}, \ket{\wt{x}(1)} = \ket{0,+,x}$, 
such that $\text{Null}(H(f(s)))=\{\ket{\wt{x}(s)},\ket{1,+,b}\}$. 
Such degeneracy will not influence the adiabatic computation starting with 
$\ket{0,-,b}$ for the same reason we discussed for Hermitian 
positive definite case (also discussed in~\cite{SubasiSommaOrsucci2019}), 
and the error of AQC(p) is still bounded by $\eta(s)$ given in Eq.~\eqref{eqn:adiabaticEstimate_eta}. 

Furthermore, the eigenvalue $0$ is separated from the rest of the eigenvalues of $H(f(s))$ by a gap $\Delta(f(s))\ge  \sqrt{(1-f(s))^2+(f(s)/\kappa)^2}$ \cite{SubasiSommaOrsucci2019}. 
For technical simplicity, note that 
$\sqrt{(1-f)^2+(f/\kappa)^2} \geq (1-f+f/\kappa)/\sqrt{2}$ for all $0 \leq f \leq 1$,
we define the lower bound of the gap to be 
$\Delta_*(f) = (1-f+f/\kappa)/\sqrt{2}$, 
which is exactly proportional to that for the Hermitian positive 
definite case. Therefore, we can use exactly the same time schedules as the 
Hermitian positive definite case to perform AQC(p) and AQC(exp) schemes, 
and properties of AQC(p) and AQC(exp) are stated in the following theorems (see Appendices~\ref{app:proof_linear} and~\ref{app:exp} for the proof). 

\begin{thm}\label{thm:main_non_positive}
Let $A\in\CC^{N\times N}$ be a Hermitian matrix (not necessarily positive definite) with condition number $\kappa$. 
For any choice of $1 < p < 2$, the AQC(p) scheme gives
\begin{equation}
    \|\ket{\psi_T(s)}\bra{\psi_T(s)}-\ket{0,+,x}\bra{0,+,x}\|_2 \leq C \kappa/T.
\end{equation}
Therefore, in order to prepare an $\epsilon-$approximation of the solution of QLSP, it suffices to choose the runtime  $T = \Or(\kappa/\epsilon)$. Furthermore, when $p=1,2$, the bound of the runtime becomes $T=\Or(\kappa\log(\kappa)/\epsilon)$.
\end{thm}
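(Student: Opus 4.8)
The plan is to reduce the estimate entirely to the computation already carried out for the Hermitian positive definite case (Theorem~\ref{thm:main}), exploiting the fact that the gap lower bound $\Delta_*(f) = (1-f+f/\kappa)/\sqrt{2}$ differs from the positive definite one only by the constant factor $1/\sqrt{2}$. Consequently the scheduling function defined by $\dot f(s) = c_p \Delta_*^p(f(s))$ has exactly the same structure, and every term in the instantaneous adiabatic error bound $\eta(1)$ of Eq.~\eqref{eqn:adiabaticEstimate_eta} is governed by the same powers of $\Delta_*$. Before invoking this, I would first confirm that the hypotheses of the adiabatic theorem still hold in the enlarged $4N$-dimensional construction: namely that $\ket{1,+,b}$ remains a static null vector of $H(f(s))$ for all $s$, that $\braket{1,+,b|\psi_T(s)} = 0$ is preserved by the Schr\"odinger dynamics (so no transition to the spurious degenerate state occurs), and hence that $P_0(s)\ket{\psi_T(s)} = \ket{\wt{x}(s)}\braket{\wt{x}(s)|\psi_T(s)}$. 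This is the same no-transition argument used in the positive definite case, and it justifies bounding the density-matrix distance at the final time by $\eta(1)$, with target $\ket{0,+,x} = \ket{\wt{x}(1)}$.

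The core computation proceeds as follows. Since $H(f) = (1-f)H_0 + f H_1$ is affine in $f$, the $s$-derivatives are $H^{(1)}(s) = (H_1 - H_0)\dot f(s)$ and $H^{(2)}(s) = (H_1 - H_0)\ddot f(s)$, and because $\norm{A}_2 = 1$ we have $\norm{H_1 - H_0}_2 \leq C$; thus $\norm{H^{(1)}(s)}_2 \leq C\dot f(s)$ and $\norm{H^{(2)}(s)}_2 \leq C|\ddot f(s)|$. Differentiating the schedule gives $\ddot f = c_p^2\, p\, \Delta_*^{2p-1}(f)\,\Delta_*'(f)$, and since $|\Delta_*'(f)| = (1-1/\kappa)/\sqrt{2} \leq C$, I obtain $|\ddot f| \leq C c_p^2 \Delta_*^{2p-1}(f)$ alongside $\dot f = c_p \Delta_*^p(f)$. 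Substituting these bounds together with $\Delta(f(s)) \geq \Delta_*(f(s))$ into $\eta(1)$, then changing the integration variable via $ds' = df/(c_p\Delta_*^p(f))$, collapses all four terms to elementary integrals: the two boundary terms become $\Or(c_p/T)$ and $\Or(c_p\Delta_*^{p-2}(1)/T)$, while both integral terms reduce to $(Cc_p/T)\int_0^1 \Delta_*^{p-3}(f)\,df$.

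It then remains to evaluate the two $\kappa$-dependent quantities. Writing $\Delta_*(u) = (1 - au)/\sqrt{2}$ with $a = 1 - 1/\kappa$, a direct integration gives $c_p = \int_0^1 \Delta_*^{-p}(u)\,du = \Or(\kappa^{p-1})$ and $\int_0^1 \Delta_*^{p-3}(f)\,df = \Or(\kappa^{2-p})$ for $1 < p < 2$, while $\Delta_*^{p-2}(1) = \Or(\kappa^{2-p})$ and $\Delta_*^{p-2}(0) = \Or(1)$. Each of the four terms is therefore $\Or(\kappa/T)$, and one concludes $\eta(1) \leq C\kappa/T$. Setting $\eta(1) \leq \epsilon$ yields $T = \Or(\kappa/\epsilon)$, which is the claimed complexity.

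Since the reduction to the positive definite estimates is exact, I do not expect a genuinely hard analytic obstacle; the main delicate points are bookkeeping. First, one must track how the constant $C$ degrades through a prefactor proportional to $1/[(p-1)(2-p)]$, which blows up precisely at the endpoints $p=1$ and $p=2$. There, the integrals $c_1 = \int_0^1 \Delta_*^{-1}$ and (for $p=2$) $\int_0^1 \Delta_*^{-1}$ are logarithmically divergent rather than power-law, evaluating to $\Or(\log\kappa)$; carrying these through replaces the $\Or(\kappa/T)$ bound by $\Or(\kappa\log\kappa/T)$ and hence gives $T = \Or(\kappa\log(\kappa)/\epsilon)$, establishing the final sentence of the theorem. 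Second, the one ingredient genuinely specific to this setting is the cited lower bound $\Delta(f(s)) \geq \sqrt{(1-f)^2 + (f/\kappa)^2}$ for the $4N$-dimensional construction, on which everything downstream rests; I would verify this spectral estimate at the outset before transcribing the remaining calculation.
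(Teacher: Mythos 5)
Your proposal is correct and follows essentially the same route as the paper, which proves Theorems~\ref{thm:main} and~\ref{thm:main_non_positive} in one unified computation using the common lower bound $\Delta_*(f)=(1-f+f/\kappa)/\sqrt{2}$, the chain-rule expressions $H^{(1)}=(H_1-H_0)c_p\Delta_*^p$ and $H^{(2)}\propto(1-1/\kappa)c_p^2 p\,\Delta_*^{2p-1}$, the change of variables $du=c_p\Delta_*^p\,ds'$, and the explicit evaluations $c_p=\Or(\kappa^{p-1})$, $\int_0^1\Delta_*^{p-3}=\Or(\kappa^{2-p})$, with the $p=1,2$ endpoint cases handled via $c_1=\Or(\log\kappa)$ and $c_2=\Or(\kappa)$ exactly as you describe. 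The auxiliary points you flag for verification (the no-transition argument for $\ket{1,+,b}$ and the gap bound $\Delta\ge\sqrt{(1-f)^2+(f/\kappa)^2}$) are likewise handled in the paper by the same argument as the positive definite case and by citation, respectively.
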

\begin{thm}\label{thm:main_exp_non_positive}
   Let $A\in\CC^{N\times N}$ be a Hermitian matrix (not necessarily positive definite) with condition number $\kappa$. Then for large enough $T>0$, the final time error 
    $\|\ket{\psi_T(1)}\bra{\psi_T(1)} - \ket{0,+,x}\bra{0,+,x}\|_2$ 
    of the AQC(exp) scheme is bounded by
\begin{equation}
    C\log(\kappa)\exp\left(-C\left(\frac{\kappa\log^2\kappa}{T}\right)^{-\frac{1}{4}}\right).
\end{equation}
Therefore, for any $\kappa>e$, $0<\epsilon<1$, in order to prepare an $\epsilon-$approximation of the solution of QLSP, it suffices to choose the runtime $T = \Or\left(\kappa\log^2(\kappa)\log^4\left(\frac{\log\kappa}{\epsilon}\right)\right)$. 
\end{thm}

For a most general square matrix $A\in\CC^{N\times N}$, following~\cite{HarrowHassidimLloyd2009} we may transform it into the 
Hermitian case at the expense of further doubling the dimension of the Hilbert space. 
Consider an extended QLSP $\mathfrak{A}\ket{\mathfrak{x}} = \ket{\mathfrak{b}}$ 
in dimension $2N$ where 
\begin{equation*}
    \mathfrak{A} = \sigma_+ \otimes A + \sigma_- \otimes A^{\dagger}
    =\left(\begin{array}{cc}
        0 & A \\
        A^\dagger & 0
    \end{array}\right), \quad  \ket{\mathfrak{b}} = \ket{1,b}.
\end{equation*}
Note that $\mathfrak{A}$ is a Hermitian matrix of dimension $2N$, 
with condition number $\kappa$ and $\|\mathfrak{A}\|_2 = 1$, 
and $\ket{\mathfrak{x}} := \ket{1,x}$ solves the extended QLSP. 
Therefore we can directly apply AQC(p) and AQC(exp) for Hermitian matrix $\mathfrak{A}$ 
to prepare an $\epsilon$-approximation of $x$. 
The total dimension of the Hilbert space  becomes 
$8N$ for non-Hermitian matrix $A$ (therefore three ancilla qubits are needed). 

\begin{figure}
    \centering
    \includegraphics[width=0.45\linewidth]{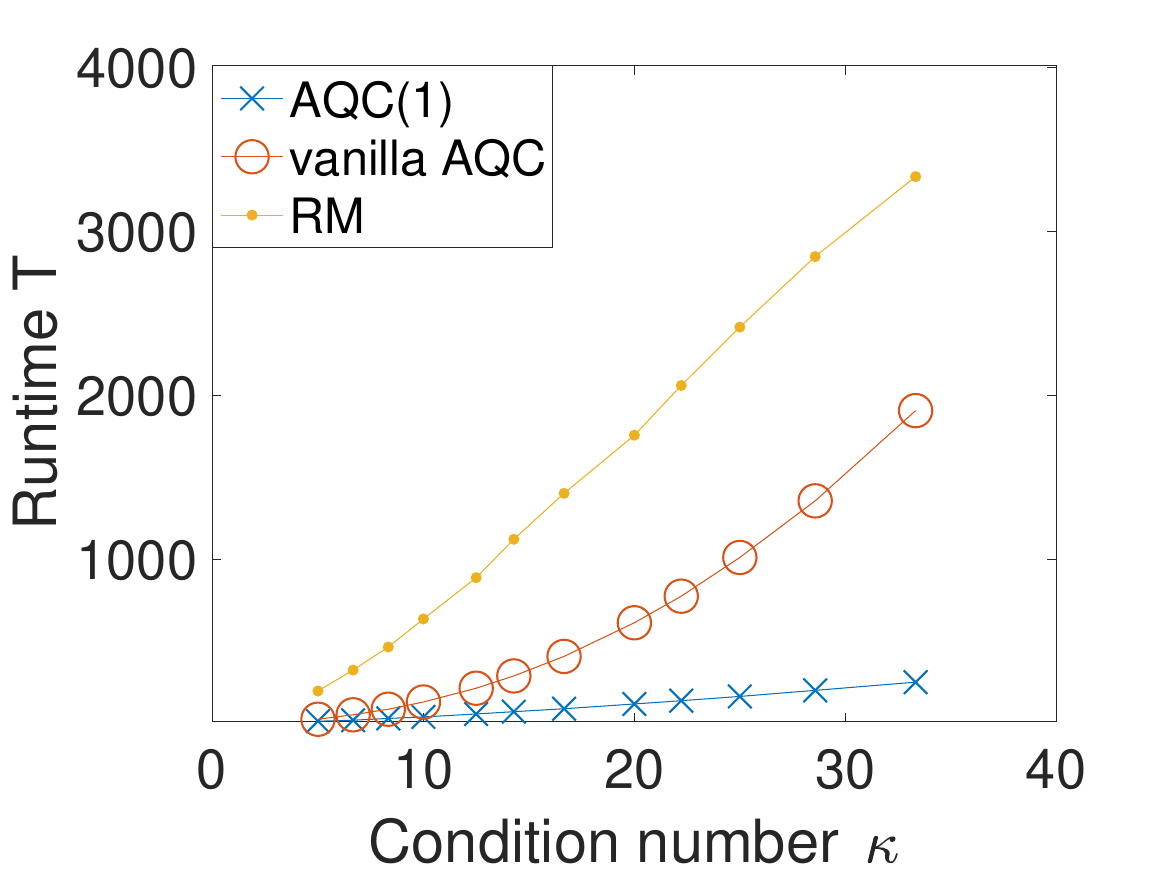}
    \includegraphics[width=0.45\linewidth]{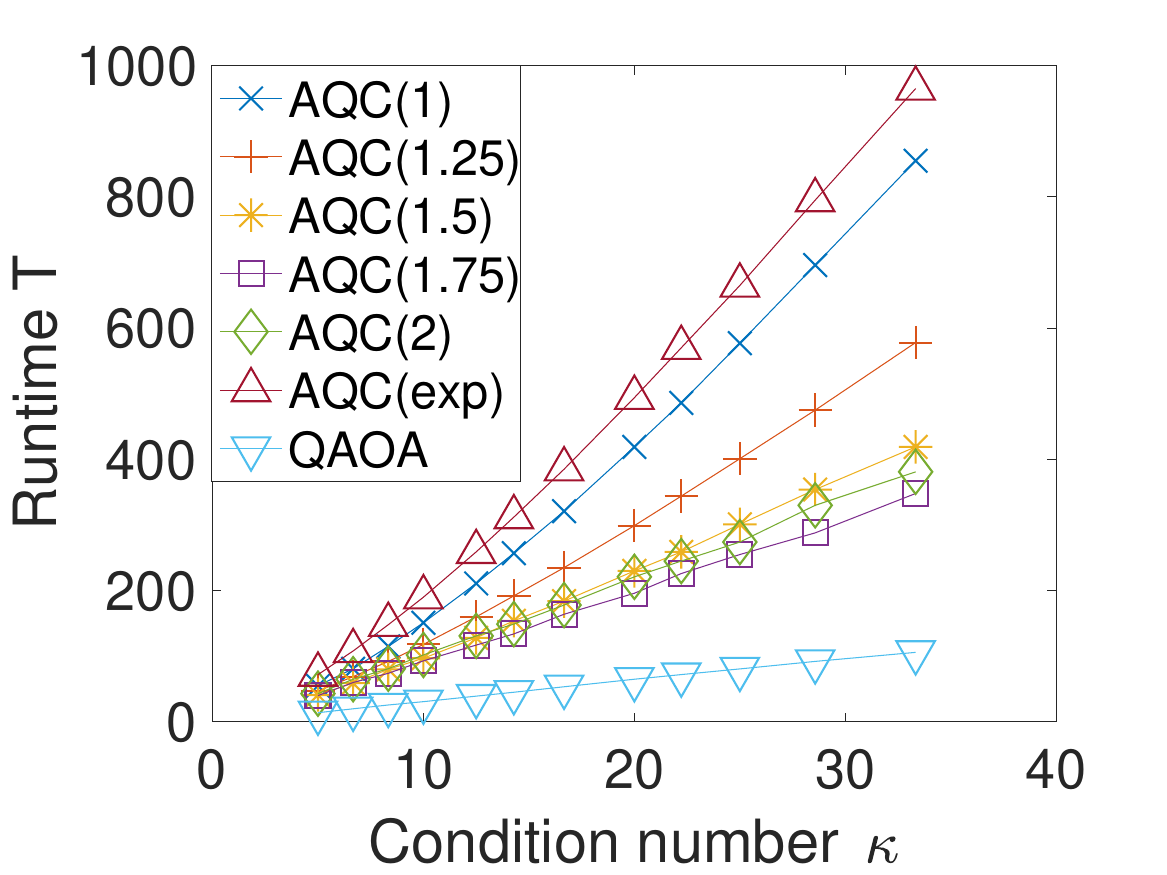}   
    \includegraphics[width=0.6\linewidth]{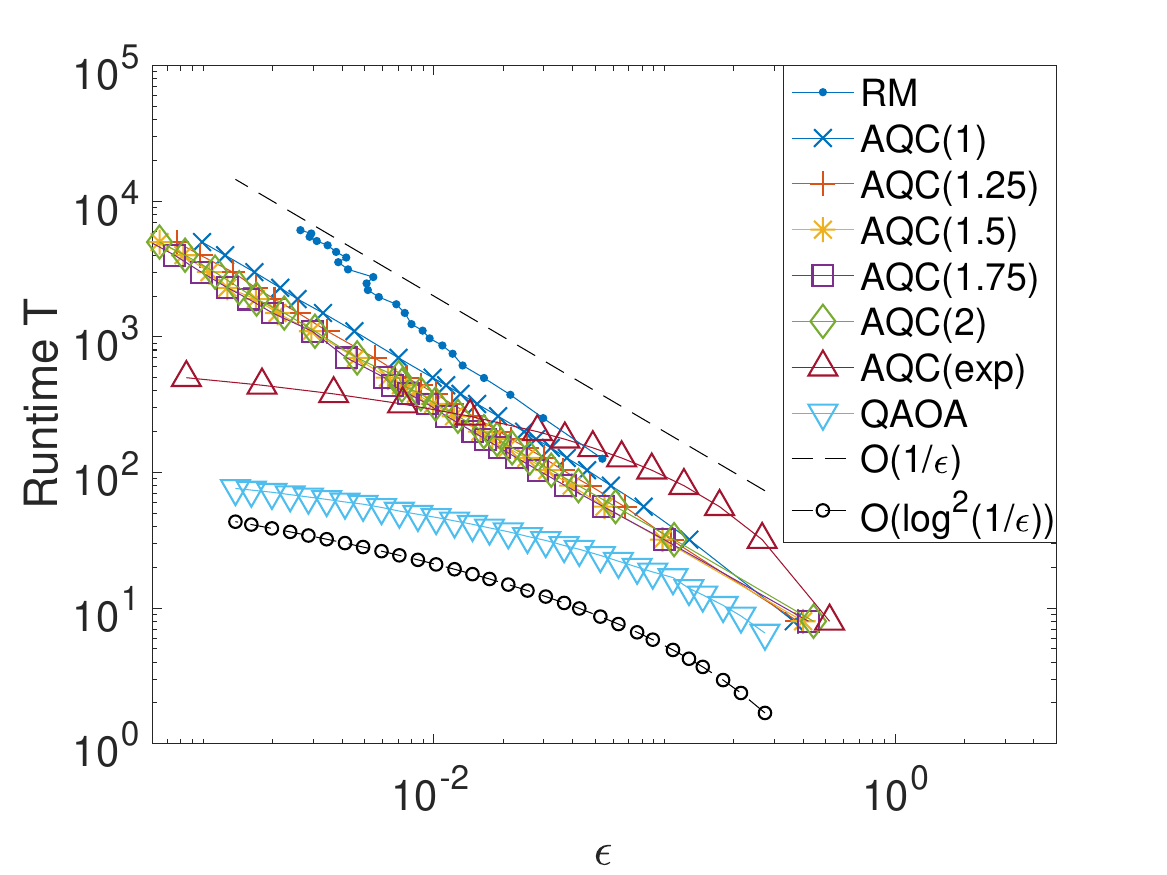}
    \caption{Simulation results for the Hermitian positive definite example. Top (left/right): the runtime to reach desired fidelity $0.99/0.999$ as a function of the condition number. Bottom: a log-log plot of the runtime as a function of the accuracy with  $\kappa = 10$.
} 
    \label{fig:HPD}
\end{figure}
\begin{table}[]
    \centering
    \begin{tabular}{l c c}
        \toprule
        methods & scaling w.r.t. $\kappa$ & scaling w.r.t. $1/\epsilon$ \\\midrule
        vanilla AQC & 2.2022 & / \\
        RM & 1.4912 & 1.3479 \\
        AQC(1) & 1.4619 & 1.0482 \\
        AQC(1.25) & 1.3289 & 1.0248 \\
        AQC(1.5) & 1.2262 & 1.0008\\
        AQC(1.75) & 1.1197 & 0.9899 \\
        AQC(2) & 1.1319 & 0.9904\\
        AQC(exp) & 1.3718 & 0.5377 \\
        AQC(exp) & / & 1.7326 (w.r.t. $\log(1/\epsilon)$) \\
        QAOA & 1.0635 & 0.4188\\
        QAOA & / & 1.4927 (w.r.t. $\log(1/\epsilon)$)\\
        \bottomrule
    \end{tabular}
    \caption{Numerical scaling of the runtime as a function of the condition number and the accuracy, respectively, for the Hermitian positive definite example. }
    \label{tab:HPD_scaling}
\end{table}

\section{Numerical results}\label{sec:numerics}
We first report the performance of AQC(p), AQC(exp), and QAOA for a series of Hermitian positive definite dense matrices with varying condition numbers, together with the performance of RM and vanilla AQC. The details of the setup of the numerical experiments are given in Appendix~\ref{app:numerics}. 

Figure~\ref{fig:HPD} shows how the total runtime $T$ depends on the condition number
$\kappa$ and the accuracy $\epsilon$ for the Hermitian positive definite case. 
The numerical scaling is reported in Table~\ref{tab:HPD_scaling}. For the $\kappa$ dependence, despite that RM and AQC(1) share the same asymptotic linear complexity with respect to $\kappa$, we observe that the preconstant of RM is larger due to its Monte Carlo strategy and the mixed state nature resulting in the same scaling of errors in fidelity and density (see Appendix~\ref{app:RM} for a detailed explanation). 
The asymptotic scaling of the vanilla AQC is at least $\Or(\kappa^2)$. When higher fidelity (0.999) is desired, the cost of vanilla AQC becomes too expensive, and we only report the timing of AQC(p), AQC(exp), and QAOA.  For the $\kappa$ dependence tests, the depth of QAOA ranges from 8 to 60. For the $\epsilon$ dependence test, the depth of QAOA is fixed to be 20. 
We find that the runtime for AQC(p), AQC(exp), and QAOA depends approximately linearly on $\kappa$, while QAOA has the smallest runtime overall. It is also interesting to observe that although the asymptotic scalings of   AQC(1) and AQC(2) are both bounded by $\Or(\kappa \log \kappa)$ instead of $\Or(\kappa)$, the numerical performance of AQC(2) is much better than AQC(1); in fact, the scaling is very close to that with the optimal value of $p$. 
For the $\epsilon$ dependence, the scaling of RM and AQC(p) is $\Or(1/\epsilon)$, which agrees with the error bound. Again the preconstant of RM is slightly larger. 
Our results also confirm that  AQC(exp) only depends poly logarithmically on $\epsilon$. Note that when $\epsilon$ is relatively large, AQC(exp) requires a longer runtime than that of AQC(p), and it eventually outperforms AQC(p) when $\epsilon$ is small enough. 
The numerical scaling of QAOA with respect to $\epsilon$ is found to be only $\Or(\log^{1.5}(1/\epsilon))$ together with the smallest preconstant.

\begin{figure}
    \centering
    \includegraphics[width=0.45\linewidth]{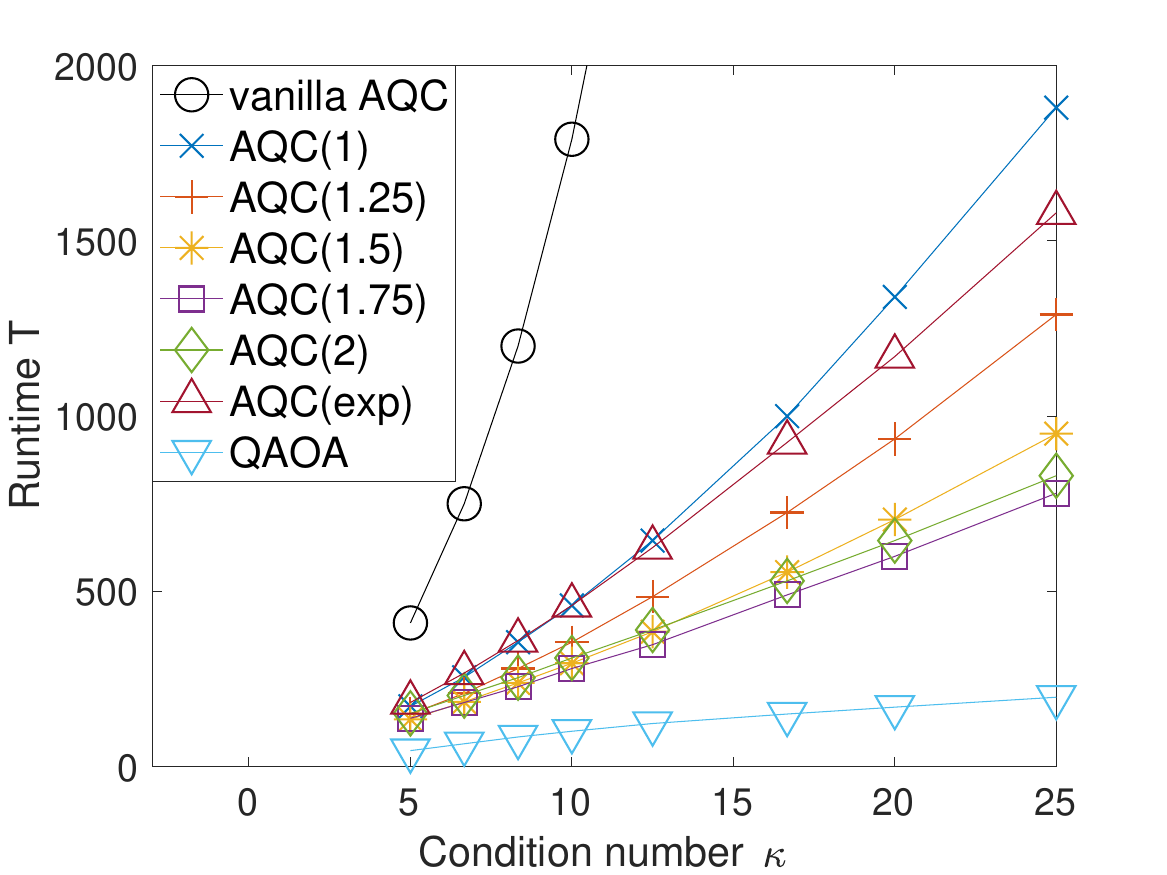}
    \includegraphics[width=0.45\linewidth]{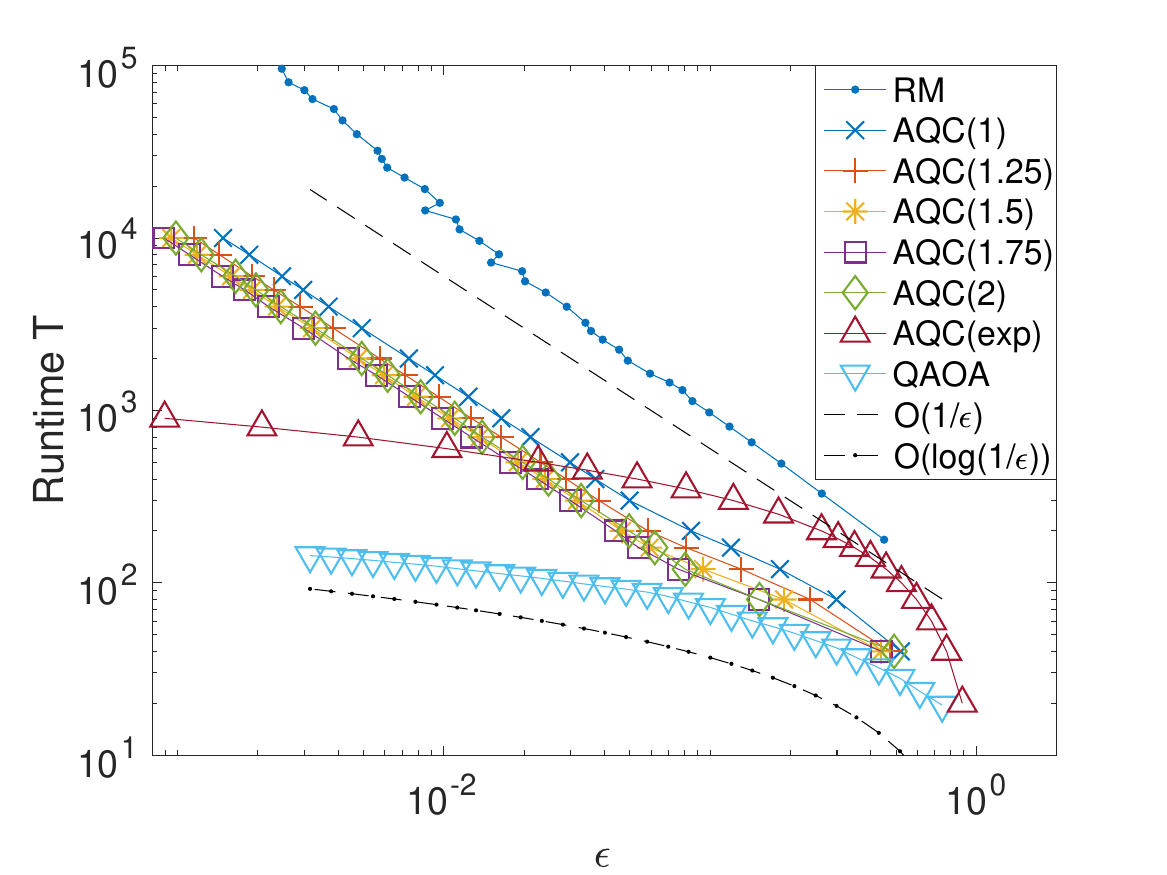}
    \caption{ Simulation results for the non-Hermitian example. Top: the runtime to reach 0.999 fidelity as a function of the condition number. Bottom: a log-log plot of the runtime as a function of the accuracy with $\kappa = 10$.}
    \label{fig:nonH}
\end{figure}
\begin{table}[]
    \centering
    \begin{tabular}{l c c}
        \toprule
        methods & scaling w.r.t. $\kappa$ & scaling w.r.t. $1/\epsilon$ \\\midrule
        vanilla AQC & 2.1980 & / \\
        RM & / & 1.2259 \\
        AQC(1) & 1.4937 & 0.9281 \\
        AQC(1.25) & 1.3485& 0.9274 \\
        AQC(1.5) & 1.2135 & 0.9309 \\
        AQC(1.75) & 1.0790& 0.9378 \\
        AQC(2) & 1.0541 & 0.9425 \\
        AQC(exp) & 1.3438 & 0.4415 \\
        AQC(exp) &  & 0.9316 (w.r.t. $\log(1/\epsilon)$) \\
        QAOA & 0.8907 & 0.3283 \\
        QAOA & / & 0.7410 (w.r.t. $\log(1/\epsilon)$)\\
        \bottomrule
    \end{tabular}
    \caption{Numerical scaling of the runtime as a function of the condition number and the accuracy, respectively, for the non-Hermitian example. }
    \label{tab:nonH_scaling}
\end{table}

Figure~\ref{fig:nonH} and Table~\ref{tab:nonH_scaling} demonstrate the simulation results for non-Hermitian matrices. 
We find that numerical performances of RM, AQC(p), AQC(exp), and QAOA are similar to that of the Hermitian positive definite case.
Again QAOA obtains the optimal performance in terms of the runtime. The numerical scaling of the optimal AQC(p) is found to be $\Or(\kappa/\epsilon)$, while the time complexity of QAOA and AQC(exp) is only  $\Or(\kappa~\text{poly}(\log(\kappa/\epsilon)))$.

\section{Discussion}
By reformulating QLSP into an eigenvalue problem, AQC provides an alternative route to solve QLSP other than those based on phase estimation (such as HHL) and those based on approximation of matrix functions (such as LCU and QSP). However, the scaling of the vanilla AQC is at least $\Or(\kappa^2/\epsilon)$, which is unfavorable with respect to both $\kappa$ and $\epsilon$. Thanks to the explicit information of the energy gap along the adiabatic path, we demonstrate that we may reschedule the AQC and dramatically improve the performance of AQC for solving QLSP. When the target accuracy $\epsilon$ is relatively large, the runtime complexity of the AQC(p) method ($1< p< 2$) is reduced to $\Or(\kappa/\epsilon)$; for highly accurate calculations with a small $\epsilon$, the AQC(exp) method is more advantageous, and its runtime complexity is $\Or(\kappa~\text{poly}(\log(\kappa/\epsilon)))$. 
To our knowledge, our ACP(exp) method provides the first example that an adiabatic algorithm can simultaneously achieve near linear dependence on the spectral gap, and poly-logarithmic dependence on the precision.

Due to the close connection between AQC and QAOA, the runtime complexity of QAOA for solving QLSP is also bounded by $\Or(\kappa~\text{poly}(\log(\kappa/\epsilon)))$. Both AQC and QAOA can be implemented on gate-based quantum computers.  
Our numerical results can be summarized using the following relation:
$$
\text{QAOA}\lesssim \text{AQC(exp)}\lesssim \text{AQC}(p)<\text{RM}<\text{vanilla AQC}.
$$
Here $A<B$ means that the runtime of $A$ is smaller than that of $B$. The two exceptions are: $\text{QAOA}\lesssim \text{AQC(exp)}$ means that the runtime of QAOA is smaller only when the optimizer $\theta$ is found, while $\text{AQC(exp)}\lesssim \text{AQC}(p)$ holds only when $\epsilon$ is sufficiently small. While the runtime complexity of AQC(exp) readily provides an upper bound of the runtime complexity of QAOA, numerical results indicate that the optimizer of QAOA often involves a much smaller depth, and hence the dynamics of QAOA does not necessarily follow the adiabatic path. Therefore, it is of interest to find alternative routes to directly prove the scaling of the QAOA  runtime without relying on AQC.  
In the work~\cite{LinTong2019}, our AQC based algorithm has been combined with the eigenvector filtering technique. Ref.~\cite{LinTong2019} also proposed another AQC inspired quantum linear system solver, which is based on the quantum Zeno effect. Both methods can scale linearly in $\kappa$ and logarithmically in $1/\epsilon$. 
We expect our AQC based QLSP solvers may serve as useful subroutines in other quantum algorithms as well.

\begin{acks}
This work was partially supported by the Department of Energy under Grant No. DE-SC0017867, the Quantum Algorithm Teams Program under Grant No. DE-AC02-05CH11231 (L.L.), by a Google Quantum Research Award, and by the NSF Quantum Leap Challenge Institute (QLCI) program through grant number OMA-2016245 (D. A. and L.L.).
We thank Rolando Somma, Yu Tong and Nathan Wiebe for helpful discussions.
\end{acks}

\bibliographystyle{ACM-Reference-Format}
\bibliography{aqcqaoa}

\appendix

\section{The gap of $H(f(s))$ for Hermitian positive definite matrices} \label{app:gap}
The Hamiltonian $H(f)$ can be written in the block matrix form as
\begin{equation}
    {H}(f) = \left(\begin{array}{cc}
        0 & ((1-f)I + fA)Q_b \\
        Q_b((1-f)I + fA) & 0
    \end{array}\right) {.}
\end{equation}
Let $\lambda$ be an eigenvalue of $H$, then 
\begin{align*}
    0 &= \det \left(\begin{array}{cc}
        \lambda I & -((1-f)I + fA)Q_b \\
        -Q_b((1-f)I + fA) & \lambda I
    \end{array}\right) \\
    &= \det\left(\lambda^2 I - ((1-f)I + fA)Q_b^2((1-f)I + fA)\right)
\end{align*}
where the second equality holds because the bottom two blocks are commutable. 
Thus $\lambda^2$ is an eigenvalue of $((1-f)I + fA)Q_b^2((1-f)I + fA)$, 
and $\Delta^2(f)$ equals the smallest non-zero eigenvalue 
of $((1-f)I + fA)Q_b^2((1-f)I + fA)$. 
Applying a proposition of matrices that $XY$ and $YX$ have the same non-zero 
eigenvalues, $\Delta^2(f)$ also equals the smallest non-zero eigenvalue 
of $Q_b((1-f)I + fA)^2Q_b$.

Now we focus on the matrix $Q_b((1-f)I + fA)^2Q_b$. 
Note that 
$\left|b\right>$ is the unique eigenstate corresponding to the eigenvalue 0, and all 
eigenstates corresponding to non-zero eigenvalues must be orthogonal to 
$\left|b\right>$. Therefore
\begin{align*}
    {\Delta}^2(f) &= \inf_{\left<b|\varphi\right> = 0, \left<\varphi|\varphi\right> = 1} \left<\varphi\left|Q_b((1-f)I + fA)^2Q_b\right|\varphi\right>  \\
    &= \inf_{\left<b|\varphi\right> = 0, \left<\varphi|\varphi\right> = 1} \left<\varphi\left|((1-f)I + fA)^2\right|\varphi\right> \\
    &\geq \inf_{\left<\varphi|\varphi\right> = 1} \left<\varphi\left|((1-f)I + fA)^2\right|\varphi\right> \\
    &= (1-f+f/\kappa)^2 {,}
\end{align*}
and ${\Delta}(f) \geq {\Delta}_*(f) = 1-f+f/\kappa$.

\section{Relations among Different Measurements of Accuracy}\label{app:measurements}

We  show two  relations that connect the error of density 
matrix distance and the error of fidelity, which are used in our proof for AQC(p) and AQC(exp). 
Following the notations in the main text, 
let $\ket{\wt{x}(s)}$ denote the desired 
eigenpath of $H(f(s))$ corresponding to the $0$ eigenvalue, 
and $\text{Null}(H(f(s)))=\{\ket{\wt{x}(s)},\ket{\bar{b}}\}$. 
$P_0(s)$ denotes the projector onto the eigenspace corresponding to the $0$ eigenvalue.

\begin{lem}\label{lem:err_fed}
    (i) The following equation holds, 
    \begin{equation}
        |1-\braket{\psi_T(s)|P_0(s)|\psi_T(s)}| =  1-\left|\braket{\psi_T(s)|\wt{x}(s)}\right|^2 = \|\ket{\psi_T(s)}\bra{\psi_T(s)} - \ket{\wt{x}(s)}\bra{\wt{x}(s)}\|_2^2. 
    \end{equation}
    
    (ii) Assume that 
    $$|1-\braket{\psi_T(s)|P_0(s)|\psi_T(s)}|\le \eta^2(s).$$
    Then the fidelity can be bounded from below by $1-\eta^2(s)$, 
    and the 2-norm error of the density matrix can be bounded from above 
    by $\eta(s)$. 
\end{lem}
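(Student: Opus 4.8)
The plan is to prove part (i) by reducing every quantity to the single overlap $|\braket{\psi_T(s)|\wt{x}(s)}|^2$, and then to read off part (ii) as an immediate consequence.

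First I would establish the first equality. Since $P_0(s)=\ket{\wt{x}(s)}\bra{\wt{x}(s)}+\ket{\bar{b}}\bra{\bar{b}}$ is the rank-2 projector onto the null space and the Schr\"odinger dynamics guarantees $\braket{\bar{b}|\psi_T(s)}=0$ for all $s$ (as established in the main text), we have $\braket{\psi_T(s)|P_0(s)|\psi_T(s)}=|\braket{\wt{x}(s)|\psi_T(s)}|^2$. Because $\ket{\psi_T(s)}$ is normalized this overlap is at most $1$, so $1-\braket{\psi_T(s)|P_0(s)|\psi_T(s)}=1-|\braket{\wt{x}(s)|\psi_T(s)}|^2\ge 0$, which lets me drop the absolute value and yields the first equality.

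For the second equality I would compute the spectral norm of the difference of the two rank-1 projectors $D=\ket{\psi_T(s)}\bra{\psi_T(s)}-\ket{\wt{x}(s)}\bra{\wt{x}(s)}$. Since $D$ is Hermitian and its range lies in $\text{span}\{\ket{\psi_T(s)},\ket{\wt{x}(s)}\}$, it vanishes on the orthogonal complement of this span and has at most two nonzero eigenvalues. I would use $\Tr(D)=0$, so the two eigenvalues are $\pm\lambda$, together with the direct expansion $\Tr(D^2)=2-2|\braket{\psi_T(s)|\wt{x}(s)}|^2$, which forces $\lambda^2=1-|\braket{\psi_T(s)|\wt{x}(s)}|^2$. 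Hence $\|D\|_2=|\lambda|=\sqrt{1-|\braket{\psi_T(s)|\wt{x}(s)}|^2}$, and squaring gives the second equality.

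Part (ii) then follows with no further work. Writing $F=|\braket{\psi_T(s)|\wt{x}(s)}|^2$ for the fidelity, the hypothesis combined with part (i) reads $1-F\le\eta^2(s)$, giving $F\ge 1-\eta^2(s)$; and part (i) also gives $\|D\|_2^2\le\eta^2(s)$, so taking square roots bounds the density-matrix error by $\eta(s)$. The only mildly nontrivial step is the spectral-norm computation for a difference of rank-1 projectors, and even that is a short exact calculation, so I do not anticipate any genuine obstacle — all the identities here are exact rather than approximate.
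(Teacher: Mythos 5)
Your proof is correct and follows essentially the same route as the paper: both use $\braket{\bar{b}|\psi_T(s)}=0$ to collapse $P_0(s)\ket{\psi_T(s)}$ onto $\ket{\wt{x}(s)}$, and both reduce the second equality to an exact eigenvalue computation for the difference of two rank-one projectors restricted to their two-dimensional span. The only (cosmetic) difference is that you extract the top eigenvalue from $\Tr(D)=0$ and $\Tr(D^2)=2-2|\braket{\psi_T(s)|\wt{x}(s)}|^2$, whereas the paper exhibits $\ket{\psi_T(s)}$ and $\ket{\wt{x}(s)}$ directly as eigenvectors of $M^{\dagger}M$ with eigenvalue $1-|\braket{\psi_T(s)|\wt{x}(s)}|^2$; both are short exact calculations.
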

\begin{proof}
It suffices only to prove part (i). 
Note that $\ket{\bar{b}}$ is the eigenstate for both $H_0$ and $H_1$ corresponding 
the 0 eigenvalue, we have $H(f(s))\ket{\bar{b}}=((1-f(s))H_0+f(s)H_1)\ket{\bar{b}}=0$, and thus 
$\frac{d}{ds} \braket{\bar{b}|\psi_T(s)} = 0$. 
Together with the initial condition $\braket{\bar{b}|\psi_T(0)}=0$, 
the overlap of $\ket{\bar{b}}$ and $\ket{\psi_T(s)}$ remains to be 0 for the whole time 
period, \ie $\braket{\bar{b}|\psi_T(s)}=0.$
Since $P_0(s)=\ket{\wt{x}(s)}\bra{\wt{x}(s)}+\ket{\bar{b}}\bra{\bar{b}}$,  we have 
$P_0(s)\ket{\psi_T(s)}=\ket{\wt{x}(s)}\braket{\wt{x}(s)|\psi_T(s))}$.
Therefore
$$|1-\braket{\psi_T(s)|P_0(s)|\psi_T(s)}| = |1-\braket{\psi_T(s)|\wt{x}(s)}\braket{\wt{x}(s)|\psi_T(s)}| = 1-\left|\braket{\psi_T(s)|\wt{x}(s)}\right|^2.$$ 

To prove the second equation, let $M = \ket{\psi_T(s)}\bra{\psi_T(s)} - \ket{\wt{x}(s)}\bra{\wt{x}(s)}$. Note that $\|M\|_2^2 = \lambda_{\max}(M^\dagger M)$, we study the eigenvalues of $M^\dagger M$ by first computing that 
$$M^\dagger M = \ket{\psi_T(s)}\bra{\psi_T(s)} + \ket{\wt{x}(s)}\bra{\wt{x}(s)} - \braket{\psi_T(s)|\wt{x}(s)}\ket{\psi_T(s)}\bra{\wt{x}(s)} - \braket{\wt{x}(s)|\psi_T(s)}\ket{\wt{x}(s)}\bra{\psi_T(s)}.$$
Since for any $\ket{y} \in \text{span}\{\ket{\psi_T(s)},\ket{\wt{x}(s)}\}^{\bot}$, $M^\dagger M\ket{y} = 0$, and 
\begin{align*}
    M^\dagger M \ket{\psi_T(s)} &= (1-\left|\braket{\psi_T(s)|\wt{x}(s)}\right|^2)\ket{\psi_T(s)}, \\
    M^\dagger M \ket{\wt{x}(s)} &= (1-\left|\braket{\psi_T(s)|\wt{x}(s)}\right|^2)\ket{\wt{x}(s)}, 
\end{align*}
we have $\|M\|_2^2 = \lambda_{\max}(M^{\dagger}M) = 1-\left|\braket{\psi_T(s)|\wt{x}(s)}\right|^2$. 

\end{proof}

\section{Difference between the scalings of AQC(p) and RM with respect to infidelity}\label{app:RM}

In our numerical test, we observe that to reach a desired fidelity, RM encounters 
a much larger pre-constant than AQC(p). This is due to the following reason. Although the runtime of both RM and AQC(p) scales as $\Or(1/\epsilon)$ where $\epsilon$ is the 2-norm 
error of the density matrix, the scalings with respect to the infidelity are different. More specifically, Lemma~\ref{lem:err_fed} shows that for AQC, the square of the 2-norm error is exactly equal to the infidelity. Thus in order to reach infidelity $1-F$ using AQC(p), it suffices to choose the runtime to be $T = \Or(\kappa/\sqrt{1-F})$. Meanwhile, it has been proved in~\cite{BoixoKnillSomma2009} that the runtime complexity of RM is 
$\wt{\Or}(\kappa/(1-F))$. Therefore, to reach a desired fidelity, the runtime of AQC(p) will be smaller than that of RM, as shown in our numerical examples. 

We further verify the statement above by studying the relation between the 2-norm error of the density matrix and the infidelity 
for AQC(p), AQC(exp) and RM using the positive definite example with 
$\kappa = 10$. In AQC(p) and AQC(exp), we change the runtime to obtain 
approximations with different errors and infidelity. 
In RM we vary the number of exponential operators to obtain different levels of accuracy. 
All other numerical treatments remain unchanged. 
As shown in Figure~\ref{fig:err_fed}, the infidelity is exactly the square of 2-norm error in the case of AQC(p) and AQC(exp), while the infidelity of RM only scales  approximately
linearly with respect to 2-norm error. 
This also verifies that although the runtime of both AQC(p) and RM 
scales linearly with respect to $\epsilon$, the runtime of AQC(p)  can be much smaller to reach desired fidelity. 

\begin{figure}[ht]
    \centering
    \includegraphics[width=0.6\linewidth]{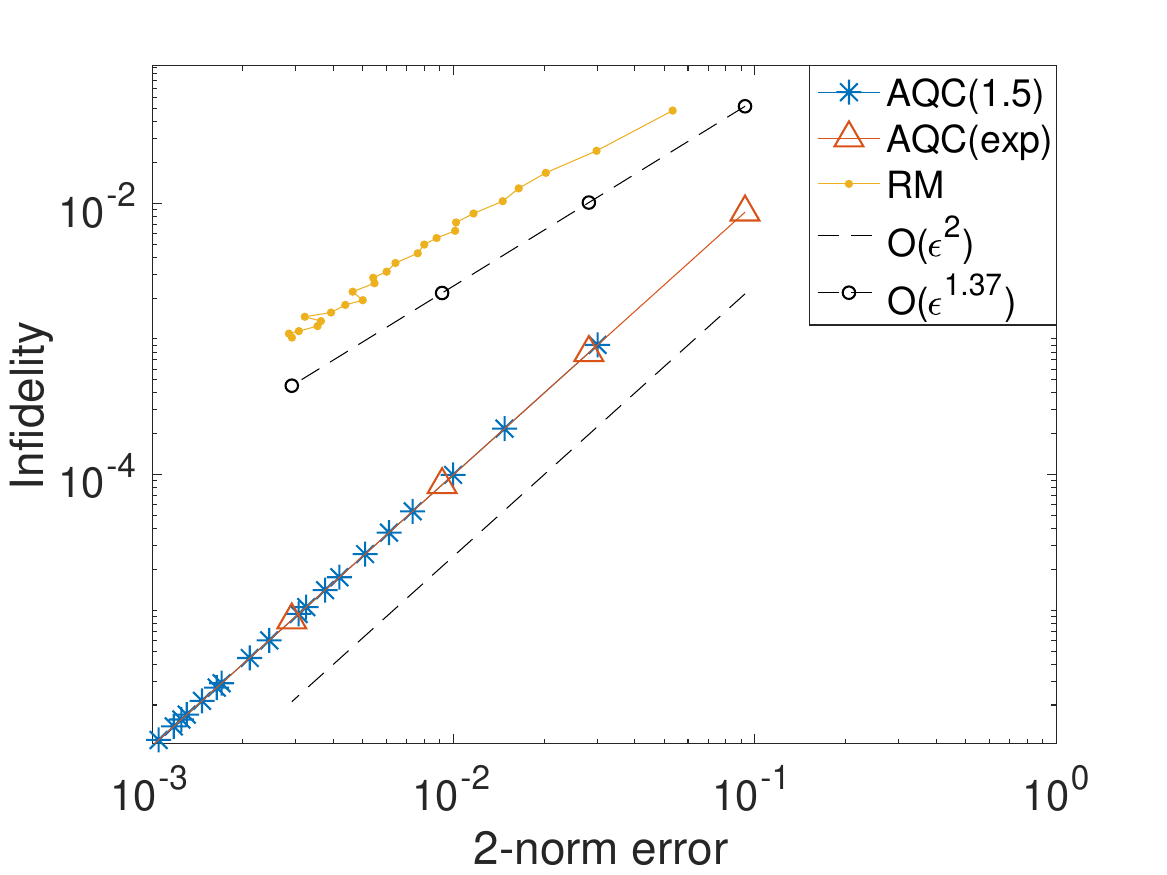}
    \caption{Relation between 2-norm error and infidelity of AQC and RM. }
    \label{fig:err_fed}
\end{figure}

\section{Proof of Theorem~\ref{thm:main} and Theorem~\ref{thm:main_non_positive}}\label{app:proof_linear}

The proof of Theorem~\ref{thm:main} and Theorem~\ref{thm:main_non_positive} rests on some delicate cancellation of the time derivatives $\norm{H^{(1)}}_2,\norm{H^{(2)}}_2$ and the gap $\Delta(f(s))$ in the error bound, 
and can be completed by 
carefully analyzing the $\kappa$-dependence of each term in $\eta(s)$ given
in Eq.~\eqref{eqn:adiabaticEstimate_eta}. 
Note that in both cases ${H}(f) = (1-f)H_0+fH_1$, and 
we let $\Delta_*(f) = (1-f+f/\kappa)/\sqrt{2}$ since such a choice of $\Delta_*$ can 
serve as a lower bound of the spectrum gap for both the case of Theorem~\ref{thm:main} and Theorem~\ref{thm:main_non_positive}. 
We first compute the derivatives of $H(f(s))$ by chain rule as 
\begin{align*}
    H^{(1)}(s) = \frac{d}{ds}{H}(f(s)) 
    = \frac{d {H}(f(s))}{df}\frac{df(s)}{ds}
    = (H_1-H_0)c_p {\Delta}_*^p(f(s)) {,}
\end{align*}
and
\begin{align*}
    H^{(2)}(s) &= \frac{d}{ds}H^{(1)}(s) 
    = \frac{d}{ds}\left((H_1-H_0)c_p {\Delta}_*^p(f(s))\right) \\
    &= (H_1-H_0)c_pp{\Delta}_*^{p-1}(f(s))\frac{d{\Delta}_*(f(s))}{df}\frac{df(s)}{ds} \\
    &= \frac{1}{\sqrt{2}}(-1+1/\kappa)(H_1-H_0)c_p^2p{\Delta}_*^{2p-1}(f(s)){.}
\end{align*}
Then the first two terms of $\eta(s)$ can be rewritten as 
\begin{align*}
    & \frac{\|H^{(1)}(0)\|_2}{T \Delta^2(0)} + \frac{\|H^{(1)}(s)\|_2}{T \Delta^2(f(s))} 
    \leq  \frac{\|H^{(1)}(0)\|_2}{T {\Delta}_*^2(0)} + \frac{\|H^{(1)}(s)\|_2}{T {\Delta}_*^2(f(s))} \\
    = & \frac{\|(H_1-H_0)c_p {\Delta}_*^p(f(0))\|_2}{T {\Delta}_*^2(0)} + \frac{\|(H_1-H_0)c_p {\Delta}_*^p(f(s))\|_2}{T {\Delta}_*^2(f(s))} \\
    \leq & \frac{C}{T}\left(c_p{\Delta}_*^{p-2}(0) + c_p{\Delta}_*^{p-2}(f(s))\right) \\
    \leq & \frac{C}{T}\left(c_p{\Delta}_*^{p-2}(0) + c_p{\Delta}_*^{p-2}(1)\right)
\end{align*}
Here $C$ stands for a general positive constant independent of $s,\Delta,T$. 
To compute the remaining two terms of $\eta(s)$, we use 
the following change of variable 
$$u = f(s'), \quad du = \frac{d}{ds'}f(s')ds' = c_p {\Delta}_*^p(f(s'))ds',$$ 
and the last two terms of $\eta(s)$ become 
\begin{align*}
    &\frac{1}{T}\int_0^s \frac{\|H^{(2)}\|_2}{\Delta^2}ds' 
    \leq  \frac{1}{T}\int_0^s \frac{\|H^{(2)}\|_2}{\Delta_*^2}ds' \\
    = &\frac{1}{T}\int_0^s \frac{\|\frac{1}{\sqrt{2}}(-1+1/\kappa)(H_1-H_0)c_p^2p{\Delta}_*^{2p-1}(f(s'))\|_2}{{\Delta}_*^2(f(s'))}ds' \\
    = & \frac{1}{T}\int_0^{f(s)} \frac{\|\frac{1}{\sqrt{2}}(-1+1/\kappa)(H_1-H_0)c_p^2p{\Delta}_*^{2p-1}(u)\|_2}{{\Delta}_*^2(u)}\frac{du}{c_p{\Delta}_*^p(u)} \\
    \leq & \frac{C}{T}\left((1-1/\kappa)c_p\int_0^{f(s)}{\Delta}_*^{p-3}(u) du\right) \\
    \leq & \frac{C}{T}\left((1-1/\kappa)c_p\int_0^{1}{\Delta}_*^{p-3}(u) du\right){,}
\end{align*}
and similarly
\begin{align*}
    &\frac{1}{T}\int_0^s \frac{\|H^{(1)}\|^2_2}{\Delta^3}ds' 
    \leq  \frac{1}{T}\int_0^s \frac{\|H^{(1)}\|^2_2}{\Delta_*^3}ds' \\
    = & \frac{1}{T}\int_0^s \frac{\|(H_1-H_0)c_p {\Delta}_*^p(f(s'))\|^2_2}{{\Delta}_*^3(f(s'))}ds' \\
    =& \frac{1}{T}\int_0^{f(s)} \frac{\|(H_1-H_0)c_p {\Delta}_*^p(u)\|^2_2}{{\Delta}_*^3(u)}\frac{du}{c_p{\Delta}_*^p(u)} \\
    \leq & \frac{C}{T}\left(c_p \int_0^{f(s)}{\Delta}_*^{p-3}(u) du\right) \\
    \leq & \frac{C}{T}\left(c_p \int_0^{1}{\Delta}_*^{p-3}(u) du\right){.}
\end{align*}
Summarize all terms above, an upper bound of $\eta(s)$ is
\begin{align*}
  \eta(s) &\leq \frac{C}{T}\Big\{\left(c_p{\Delta}_*^{p-2}(0)+ c_p{\Delta}_*^{p-2}(1)\right) + \left((1-1/\kappa)c_p\int_0^{1}{\Delta}_*^{p-3}(u) du\right) + \left(c_p \int_0^{1}{\Delta}_*^{p-3}(u) du\right)\Big\} \\
  &= \frac{C}{T}\Big\{2^{-(p-2)/2}\left(c_p+ c_p\kappa^{2-p}\right) + \left((1-1/\kappa)c_p\int_0^{1}{\Delta}_*^{p-3}(u) du\right) + \left(c_p \int_0^{1}{\Delta}_*^{p-3}(u) du\right)\Big\}{.}
\end{align*}
Finally, since for $1<p<2$ 
\[
c_p = \int_0^1 {\Delta}_*^{-p}(u) du = \frac{2^{p/2}}{p-1}\frac{\kappa}{\kappa-1}(\kappa^{p-1}-1),
\]
and
\begin{align*}
    \int_0^{1}{\Delta}_*^{p-3}(u) du = \frac{2^{-(p-3)/2}}{2-p} \frac{\kappa}{\kappa-1} (\kappa^{2-p}-1) {,}
\end{align*}
we have 
\begin{align*}
    \eta(s)
    \leq &\frac{C}{T}\Big\{\frac{\kappa}{\kappa-1}(\kappa^{p-1}-1) + \frac{\kappa}{\kappa-1}(\kappa-\kappa^{2-p})  \\
    &+ \frac{\kappa}{\kappa-1}(\kappa^{p-1}-1) (\kappa^{2-p}-1) 
    + \left(\frac{\kappa}{\kappa-1}\right)^2(\kappa^{p-1}-1)(\kappa^{2-p}-1)\Big\} {.}
\end{align*}
The leading term of the bound is $\Or(\kappa/T)$  when $1<p<2$. 

Now we consider the limiting case when $p = 1,2$. Note that the bound for $\eta(s)$ can still be written as
\begin{align*}
  \eta(s) &\leq \frac{C}{T}\Big\{\left(c_p{\Delta}_*^{p-2}(0) + c_p{\Delta}_*^{p-2}(1)\right) + \left((1-1/\kappa)c_p\int_0^{1}{\Delta}_*^{p-3}(u) du\right) + \left(c_p \int_0^{1}{\Delta}_*^{p-3}(u) du\right)\Big\} \\
  & = \frac{C}{T}\Big\{2^{-(p-2)/2}\left(c_p+ c_p\kappa^{2-p}\right) + (1-1/\kappa)c_pc_{3-p} + c_p c_{3-p}\Big\}{.}
\end{align*}
Straightforward computation shows that 
$$
c_1 = \int_0^1 {\Delta}_*^{-1}(u) du = \sqrt{2}\frac{\kappa}{\kappa-1}\log(\kappa) 
$$
and
$$c_2 = \int_0^1 {\Delta}_*^{-2}(u) du = 2\frac{\kappa}{\kappa-1}(\kappa-1){.}$$
Hence when $p=1,2$,
\begin{equation*}
  \eta(s) \leq  \frac{C}{T}\Big\{2^{-(p-2)/2}\left(c_p+ c_p\kappa^{2-p}\right) + (1-1/\kappa)c_1c_2 + c_1 c_2\Big\} \leq C\frac{\kappa \log(\kappa)}{T}.
\end{equation*}
This completes the proof of Theorem~\ref{thm:main} and Theorem~\ref{thm:main_non_positive}.

\section{Proof of Theorem~\ref{thm:main_exp} and Theorem~\ref{thm:main_exp_non_positive}}\label{app:exp}

We provide a rigorous proof of the error bound for the AQC(exp) scheme. 
We mainly follow the methodology of~\cite{Nenciu1993} and a part of 
technical treatments of~\cite{GeMolnarCirac2016}. 
Our main contribution is carefully revealing an explicit constant dependence 
in the adiabatic theorem, which is the key to obtain the $\widetilde{\Or}(\kappa)$ scaling. 
In the AQC(exp) scheme, the Hamiltonian 
$H(s) = (1-f(s))H_0 + f(s)H_1$ with $\|H_0\|, \|H_1\| \leq 1$ and 
\begin{equation}\label{eqn-s:scheduling}
    f(s) = \frac{1}{c_e}\int_0^s \exp\left(-\frac{1}{s'(1-s')}\right)\ud s'{.}
\end{equation}
The normalization constant $c_e = \int_0^1 \exp(-\frac{1}{t(1-t)})dt \approx 0.0070$. 
Let $U_T(s)$ denote the corresponding unitary evolution operator, and $P_0(s)$ denote the projector onto the eigenspace corresponding to 0. 
We use $\Delta_*(f) = (1-f+f/\kappa)/\sqrt{2}$ since this can serve as a 
lower bound of the spectrum gap for both the cases of Theorem~\ref{thm:main_exp} and Theorem~\ref{thm:main_exp_non_positive}. 

We first restate the theorems universally with more technical details as 
following. 

\begin{thm}
    Assume the condition number $\kappa > e$. Then the final time adiabatic error $|1-\braket{\psi_T(1)|P_0(1)|\psi_T(1)}|$ of AQC(exp) can be bounded by $\eta_1^2$ where
    
    (a) for arbitrary $N$,  
    \begin{equation*}
        \eta_1^2 = A_1D\log^2\kappa\left( C_2\frac{\kappa\log^2\kappa}{T} N^4\right)^{N}
    \end{equation*}
    where $A_1,D$, and $C_2$ are positive constants which are independent of $T$, $\kappa$ and $N$, 
    
    (b) if $T$ is large enough such that
    \begin{equation*}
       16eA_1^{-1}D\left(\frac{4\pi^2}{3}\right)^3\frac{\kappa\log^2\kappa}{T} \leq 1,
    \end{equation*}
    then 
    \begin{equation*}
        \eta_1^2 = C_1\log^2\kappa\exp\left(-\left(C_2\frac{\kappa\log^2\kappa}{T}\right)^{-\frac{1}{4}}\right) 
    \end{equation*}
    where $A_1,D,C_1$, and $C_2$ are positive constants which are independent of $T$ and $\kappa$. 
    
\end{thm}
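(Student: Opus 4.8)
The plan is to prove the exponentially small final-time error by an iterated (superadiabatic) expansion in powers of $1/T$, following the strategy of Nenciu~\cite{Nenciu1993} and Ge, Moln\'ar and Cirac~\cite{GeMolnarCirac2016}, and then to optimize the truncation order $N$ against the runtime $T$. First I would reduce the leakage $|1-\braket{\psi_T(1)|P_0(1)|\psi_T(1)}|$ to an $N$-fold iterated expression obtained by repeatedly integrating by parts in the adiabatic interaction picture: each integration by parts produces one extra factor of $1/T$ together with an additional $s$-derivative acting on a resolvent-weighted product built from $H(f(s))$, its reduced resolvent $S(s)$ (the inverse of $H(s)$ on the range of $I-P_0(s)$), and $P_0(s)$. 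The decisive structural fact, which I would establish next, is that the scheduling function~\eqref{eqn:AQCexpSchedule} satisfies $f^{(k)}(0)=f^{(k)}(1)=0$ for every $k\ge 1$, so that all boundary terms generated by the integration by parts vanish identically and only the bulk integral of the $N$-th order term survives.

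The second ingredient is a quantitative Gevrey estimate for the schedule. Because $f'(s)=c_e^{-1}\exp(-1/(s(1-s)))$ is of Gevrey class $2$, its derivatives obey $|f^{(k)}(s)|\le C_0 R^k (k!)^2$ uniformly on $[0,1]$. Since $H(f(s))=(1-f)H_0+fH_1$ is affine in $f$ with $\|H_0\|,\|H_1\|\le 1$, this transfers directly to $\|H^{(k)}(s)\|\le 2|f^{(k)}(s)|$, and---after differentiating the contour representation of $S(s)$ repeatedly---to bounds on the derivatives of the resolvent in which each differentiation contributes one factor of $\|H^{(j)}\|$ and one extra inverse power of the gap $\Delta_*(f)=(1-f+f/\kappa)/\sqrt2$. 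Assembling the $N$-th order term, the Gevrey factorials $(k!)^2$ together with the combinatorial number of terms produced by the Leibniz rule give the super-factorial growth that, after summation, reorganizes into the per-step factor $N^4$; this is what yields the form $\eta_1^2=A_1 D\log^2\kappa\,(C_2\kappa\log^2\kappa\,N^4/T)^{N}$ of part (a).

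The heart of the argument---and the step I expect to be the main obstacle---is extracting the correct $\kappa$-dependence, namely only $\kappa\log^2\kappa$ per order rather than the $\kappa^2$ or $\kappa^3$ that a naive estimate would give (and which is exactly what produces the $\Or(\kappa^3)$ runtime of~\cite{GeMolnarCirac2016}). Each order accumulates inverse powers of the gap, and the integral $\int_0^1 \Delta_*^{-3}(u)\,du$ alone scales like $\kappa^2$. The gain comes from the fact that the schedule slows down precisely where the gap is small: performing the change of variables $u=f(s)$, so that $\mathrm{d}u = c_e^{-1}\exp(-1/(s(1-s)))\,\mathrm{d}s$, trades the troublesome powers of $\Delta_*^{-1}$ against the exponentially small weight of the schedule, so that the net $\kappa$-growth per order collapses to linear (up to the $\log^2\kappa$ coming from $\int_0^1\Delta_*^{-1}\sim\log\kappa$). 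Carrying out this cancellation uniformly across all orders---while keeping the constants $A_1,D,C_2$ independent of $N$---and verifying that the Gevrey growth of $f^{(k)}$ does not destroy it, is the technical crux.

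Finally, with part (a) in hand, I would treat $N$ as a free integer and minimize the bound over $N$. Writing $T'=T/(C_2\kappa\log^2\kappa)$, the per-step factor is $N^4/T'$, whose associated minimizer is $N^\ast\sim (T')^{1/4}$; this optimization is legitimate precisely when $T'$ is large, which is the smallness hypothesis of part (b). Substituting $N^\ast$ gives $\eta_1^2\le C_1\log^2\kappa\exp(-(C_2\kappa\log^2\kappa/T)^{-1/4})$, and the $1/4$ exponent is the signature of the Gevrey-$2$ schedule. Taking square roots and invoking Lemma~\ref{lem:err_fed} to pass from the leakage to the $2$-norm density-matrix error, together with the gap lower bound $\Delta_*\ge(1-f+f/\kappa)/\sqrt2$ valid in both the Hermitian positive definite and the general Hermitian settings, yields Theorem~\ref{thm:main_exp} and Theorem~\ref{thm:main_exp_non_positive} simultaneously; the stated runtime $T=\Or(\kappa\log^2\kappa\,\log^4(\log\kappa/\epsilon))$ then follows by setting the bound equal to $\epsilon$ and solving for $T$.
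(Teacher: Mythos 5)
Your overall architecture coincides with the paper's: a superadiabatic expansion $P(s)=\sum_j E_j(s)T^{-j}$ in the style of Nenciu and Ge--Moln\'ar--Cirac, truncated at order $N$, with all boundary terms killed by the vanishing of every derivative of the schedule \eqref{eqn:AQCexpSchedule} at $s=0,1$, Gevrey-$2$ bounds $|f^{(k)}|\lesssim C^k(k!)^2$ on the schedule, and a final optimization $N\sim\left(T/(\kappa\log^2\kappa)\right)^{1/4}$ producing the $1/4$ in the exponent. All of that matches, including the passage from leakage to density-matrix error and the common gap lower bound $\Delta_*=(1-f+f/\kappa)/\sqrt{2}$ for the two settings.

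The gap is at exactly the place you flag as ``the technical crux'': you do not actually prove that each order of the expansion costs only $\kappa\log^2\kappa$ rather than $\kappa^2$ or $\kappa^3$, and the mechanism you sketch would not deliver it. The change of variables $u=f(s)$, $\mathrm{d}u=c_e^{-1}e^{-1/(s(1-s))}\mathrm{d}s$, is the device that works for AQC($p$), where the error is a single integral of quantities like $\|H^{(1)}\|_2^2/\Delta^3$ and one power of $f'$ can be traded against the measure. Here the $N$-th order term is controlled through $\max_{s\in[0,1]}\|E_N^{(1)}(s)\|$, a pointwise supremum of an $N$-fold nested product of resolvent derivatives, so there is no integral left in which to absorb the Jacobian. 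What the paper actually proves (Lemma~\ref{lem:R_estimate}) is a pointwise bound on the combination $\Delta_*^{-1}(s)\,b(s)\,a(s)^l$ arising from the Leibniz expansion of $R^{(k)}$, where $b(s)\sim e^{-1/(s(1-s))}[s(1-s)]^2$ and $a(s)\sim[s(1-s)]^{-2}$ come from Lemma~\ref{lem:H_estimate}: it locates the maximizer $s_*$ of this quantity, shows $s_*\le 1-c/(l\log\kappa)$ by a sign analysis of the derivative, and concludes a bound of the form $C^l(\log\kappa)^{2l}(l!)^2$. That estimate is the sole source of the $\log^2\kappa$ per order, and it must then be propagated through a double induction in the derivative order $k$ and the expansion order $j$ (Lemma~\ref{lem:E_growth}) with constants independent of both; your heuristic attribution of the $\log^2\kappa$ to $\int_0^1\Delta_*^{-1}\sim\log\kappa$ is not how the bound arises. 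Without this pointwise estimate and the accompanying induction, part (a) is not established, and part (b) --- which is only the optimization of part (a) over $N$ --- has nothing to optimize.
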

\begin{cor}
    For any $\kappa > e, 0<\epsilon<1$, to prepare an $\epsilon$-approximation of the solution of QLSP using AQC(exp), 
    it is sufficient to choose the runtime 
    $T = \Or\left(\kappa\log^2\kappa\log^4\left(\frac{\log\kappa}{\epsilon}\right)\right)$. 
\end{cor}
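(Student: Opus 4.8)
The plan is to prove the restated theorem (which covers both Theorem~\ref{thm:main_exp} and Theorem~\ref{thm:main_exp_non_positive} at once, since both use the same gap lower bound $\Delta_*(f) = (1-f+f/\kappa)/\sqrt{2}$) by invoking the exponentially accurate adiabatic theorem for Hamiltonians all of whose derivatives vanish at the endpoints, and then tracking the joint dependence of the error on the expansion order $N$ and the condition number $\kappa$. The first reduction is that $H(f(s)) = (1-f(s))H_0 + f(s)H_1$ is affine in $f$, so every $s$-derivative is $H^{(k)}(s) = (H_1 - H_0)f^{(k)}(s)$ and hence $\norm{H^{(k)}(s)}_2 \le 2|f^{(k)}(s)|$. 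This collapses all the Hamiltonian regularity estimates into scalar bounds on the derivatives of the scheduling function $f$.

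Second, I would establish the two decisive properties of $f$. Because $f^{(1)}(s) = c_e^{-1}\exp(-1/(s(1-s)))$ is a standard bump profile, $f$ and all its derivatives of order $\ge 1$ vanish at $s=0$ and $s=1$; moreover its derivatives obey a Gevrey-type bound of the form $|f^{(k)}(s)| \le c_e^{-1}C^{k}(k!)^{2}$, the $(k!)^2$ growth being precisely the reason one ultimately obtains a stretched rather than a pure exponential. The endpoint vanishing is the structural heart of the argument: it forces all boundary correction terms in the adiabatic expansion to vanish, so that the only surviving error is the bulk remainder.

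Third, I would run the Nenciu-type iterated construction of approximate invariant projectors $P^{[N]}(s) = \sum_{a=0}^N T^{-a}P_a(s)$, where $P_0(s)$ is the true spectral projector onto the $0$ eigenspace and each $P_a$ is generated from $P_0,\dots,P_{a-1}$ by one additional $s$-derivative and one additional application of the reduced resolvent $S(s) = (H(f(s)))^{-1}(1 - P_0(s))$, whose norm is bounded by $\sqrt{2}/\Delta_*(f(s))$. Since the $P_a$ with $a\ge 1$ are built from derivatives of $H$, they inherit the endpoint vanishing, giving $P^{[N]}(0) = P_0(0)$ and $P^{[N]}(1) = P_0(1)$, so the final-time error is the $T^{-N}$ bulk remainder alone. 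Estimating that remainder is the crux: the $N$-th term accumulates $N$ gap factors and $N$ derivatives of $f$, and after the change of variables $u = f(s)$ (as used already in the proof of Theorem~\ref{thm:main}) the gap integrals collapse to factors such as $\int_0^1\Delta_*^{-1}(u)\,du = \sqrt{2}\frac{\kappa}{\kappa-1}\log\kappa$, producing $\kappa\log\kappa$-type per-step growth, while the Gevrey factorials combined with the Leibniz combinatorics of differentiating the nested resolvent products contribute, via Stirling's formula, a base factor polynomial in $N$. Assembling everything yields part~(a), $\eta_1^2 \le A_1 D\log^2\kappa\,(C_2\,\kappa\log^2\kappa\,N^4/T)^N$, where the power $N^4$ reflects the $(k!)^2$ Gevrey growth compounded through the $N$-fold recursion.

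Finally, for part~(b) I would optimize the free integer $N$. Writing $x = C_2\kappa\log^2\kappa/T$, the bound is $(xN^4)^N$, whose continuous minimizer is $N_* = e^{-1}x^{-1/4}$ with minimal value $\exp(-4e^{-1}x^{-1/4})$; the explicit smallness condition on $T$ in~(b) is exactly what guarantees $x$ is small enough that an admissible integer $N$ near $N_*$ exists, and absorbing the constant $4/e$ into $C_2$ gives $\eta_1^2 \le C_1\log^2\kappa\exp(-(C_2\kappa\log^2\kappa/T)^{-1/4})$. The Corollary then follows by imposing $\eta_1 \le \epsilon$ (using Lemma~\ref{lem:err_fed}(ii) to pass from the fidelity bound to the density-matrix distance), solving $\exp(-Cx^{-1/4}) \le \epsilon^2/(C_1\log^2\kappa)$ for $x$ and hence for $T$, and noting $\log(C_1\log^2\kappa/\epsilon^2) = \Or(\log(\log\kappa/\epsilon))$, which yields $T = \Or(\kappa\log^2\kappa\log^4(\log\kappa/\epsilon))$. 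The main obstacle throughout is the third step: producing remainder bounds simultaneously explicit in $N$ and in $\kappa$, which requires controlling the factorial growth from differentiating products of resolvents, the Gevrey-$2$ growth of $f^{(k)}$, and the delicate cancellation between the rapidly shrinking $f^{(k)}$ and the blowing-up gap factor $\Delta_*^{-1}\sim\kappa$ near $s=1$; getting the base of the geometric term to be precisely polynomial in $N$ with coefficient $\kappa\log^2\kappa$ is what makes the optimization deliver the $T^{1/4}$ stretched exponential.
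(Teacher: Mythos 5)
Your overall architecture is the same as the paper's: Nenciu's superadiabatic expansion $P_N(s)=\sum_{j=0}^N E_j(s)T^{-j}$, endpoint vanishing of all $E_j^{(k)}$ ($j\ge 1$) forced by the vanishing derivatives of $f$ at $s=0,1$, reduction of the final-time error to the single bulk remainder $T^{-N}\max_s\|E_N^{(1)}\|$, Gevrey-2 bounds $|f^{(k)}|\lesssim C^k(k!)^2$ on the bump schedule, and optimization over $N$ at $xN^4=e^{-4}$ to convert the factorial growth into the stretched exponential $\exp(-cx^{-1/4})$. Your derivation of the Corollary from part (b) — requiring $\eta_1^2\le\epsilon^2$ via Lemma~\ref{lem:err_fed}(ii) and inverting the stretched exponential to get $T=\Or(\kappa\log^2\kappa\log^4(\log\kappa/\epsilon))$ — matches the paper exactly.

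The gap is in the step you yourself flag as the crux. You propose to control the interplay between the shrinking $f^{(k)}(s)$ and the blowing-up $\Delta_*^{-1}(f(s))\sim\kappa$ near $s=1$ by the change of variables $u=f(s)$, citing $\int_0^1\Delta_*^{-1}(u)\ud u\sim\kappa\log\kappa/(\kappa-1)$. That trick works for AQC($p$) because there $\dot f=c_p\Delta_*^p(f)$, so $\ud s$ converts into a power of $\Delta_*(u)\ud u$ inside genuine $s$-integrals. Here neither condition holds: $\dot f$ is not a function of $f$ in the required form, and more importantly the quantity to be bounded is not an integral but the pointwise supremum over $s$ of nested products like $\Delta^{-1}(s)\,b(s)\,a(s)^l$ appearing in the recursion for $E_j^{(k)}$, so there is nothing to change variables in. The paper's actual mechanism (Lemma~\ref{lem:R_estimate}) is a maximization argument: it shows the maximizer $s_*$ of $F(s)=\Delta_*^{-1}(s)b(s)a(s)^l$ satisfies $s_*\le 1-c/(l\log\kappa)$, whence $\Delta^{-1}ba^l\le C\,(C'\log^2\kappa)^l(l!)^2$ with \emph{no} factor of $\kappa$; the single $\kappa$ per power of $T^{-1}$ then enters only through the isolated $16/\Delta$ in the contour-integral step of Lemma~\ref{lem:E_growth}. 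Without this (or an equivalent) pointwise estimate, each resolvent derivative would naively cost a factor of $\kappa$ and you would land back at $T=\Or(\kappa^3\,\mathrm{poly}\log(\kappa/\epsilon))$, which is exactly the prior bound the theorem improves upon. So the proposal correctly identifies where the difficulty lies but does not supply a workable mechanism for it, and the mechanism it does suggest would fail.
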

\begin{proof}
    We start the proof by considering the projector $P(s)$ onto an invariant space of $H$, then $P(s)$ satisfies 
    \begin{equation}
        \I \frac{1}{T} \partial_sP(s) = [H(s),P(s)], \quad P^2(s) = P(s) {.}
    \end{equation}
    We try the ansatz (only formally)
    \begin{equation}\label{eqn:ansatz}
        P(s) = \sum_{j=0}^{\infty} E_{j}(s)T^{-j}.
    \end{equation}
    Substitute it into the Heisenberg equation and match terms with the same orders, we get 
    \begin{equation}\label{eqn:E_Def}
        [H(s),E_0(s)] = 0, \quad \I\partial_s E_j(s) = [H(s), E_{j+1}(s)], \quad E_j(s) = \sum_{m=0}^jE_m(s)E_{j-m}(s) {.}
    \end{equation}
    It has been proved in~\cite{Nenciu1993} that the solution of~\eqref{eqn:E_Def} with initial condition $E_0 = P_0$ is given by
    \begin{align}
        &E_0(s) = P_0(s) = -(2\pi \I)^{-1} \oint_{\Gamma(s)}(H(s)-z)^{-1}dz{,}\label{eqn:E_solu_1} \\
        &E_{j}(s) = (2\pi)^{-1}\oint_{\Gamma(s)}(H(s)-z)^{-1}[E_{j-1}^{(1)}(s),P_0(s)](H(s)-z)^{-1}dz + S_j(s) - 2P_0(s)S_j(s)P_0(s)\label{eqn:E_solu_2}
    \end{align}
    where $\Gamma(s) = \{z\in\CC: |z| = \Delta(s)/2\}$ and
    \begin{equation}\label{eqn:S_def}
        S_j(s) = \sum_{m=1}^{j-1}E_m(s)E_{j-m}(s){.}
    \end{equation}
    Furthermore given $E_0=P_0$, such a solution is unique. 
    
    In general, Eq.~\eqref{eqn:ansatz} does not converge, so for arbitrary positive integer $N$ we define a truncated series as 
    \begin{equation}
        P_{N}(s) = \sum_{j=0}^{N}E_{j}(s)T^{-j}{.}
    \end{equation}
    Then 
    \begin{align*}
        &\I \frac{1}{T} P_{N}^{(1)} - [H,P_{N}]
        = \I\frac{1}{T} \sum_{j=0}^{N}E_{j}^{(1)}T^{-j} - \sum_{j=0}^{N}[H,E_{j}]T^{-j}
        =\I T^{-(N+1)}E_{N}^{(1)}{.}
    \end{align*}
    In Lemma~\ref{lem:E_vanishing_boundary}, we prove that $P_{N}(0)=P_0(0)$ and 
    $P_{N}(1) = P_0(1)$, then the adiabatic error becomes
    \begin{align*}
        |1-\braket{\psi_T(1)|P_0(1)|\psi_T(1)}| 
        &= |\braket{\psi_T(0)|P_0(0)|\psi_T(0)}-\braket{\psi_T(0)|U_T(1)^{-1}P_0(1)U_T(1)|\psi_T(0)}|\\
        &\leq \|P_0(1)-U_{T}(1)^{-1}P_0(0)U_{T}(1)\| \\
        &= \|P_N(1)-U_{T}(1)^{-1}P_N(0)U_{T}(1)\| \\
        &= \left\|\int_{0}^1ds\frac{d}{ds}\left(U_T^{-1}P_{N}U_T\right)\right\| {.}
    \end{align*}
    Straightforward computations show that
    \begin{align*}
        \frac{d}{ds}(U_T^{-1}) = -U_T^{-1}\frac{d}{ds}(U_T)U_T^{-1} 
        = -U_T^{-1}\frac{T}{\I} HU_TU_T^{-1} = -\frac{T}{\I} U_T^{-1}H{,}
    \end{align*}
    \begin{align*}
        \frac{d}{ds}\left(U_T^{-1}P_{N}U_T\right) 
        &= \frac{d}{ds}(U_T^{-1})P_{N}U_T + U_T^{-1}\frac{d}{ds}(P_{N})U_T + U_T^{-1}P_{N}\frac{d}{ds}(U_T) \\
        &= -\frac{T}{\I}U_T^{-1}HP_{N}U_T + 
        U_T^{-1}\frac{T}{\I}[H,P_{N}]U_T + U_T^{-1}T^{-N}E_{N}^{(1)}U_T + \frac{T}{\I}U_T^{-1}P_{N}HU_T \\
        &= T^{-N}U_T^{-1}E_{N}^{(1)}U_T{,}
    \end{align*}
    therefore
    \begin{align*}
        |1-\braket{\psi_T(1)|P_0(1)|\psi_T(1)}|
        \leq \left\|\int_{0}^1T^{-N}U_T^{-1}E_{N}^{(1)}U_Tds\right\| 
        \leq T^{-N} \max_{s\in[0,1]}\|E_{N}^{(1)}\| {.}
    \end{align*}
    
    In Lemma~\ref{lem:E_growth}, we prove that (the constant $c_f = 4\pi^2/3$)
    \begin{align*}
        \|E_{N}^{(1)}\| &\leq A_1A_2^{N}A_3\frac{[(N+1)!]^4}{(1+1)^2(N+1)^2} \\
        &= \frac{A_1}{4}D\log^2\kappa\left[ A_1^{-1}c_f^3\frac{16}{\Delta}D\log^2\kappa\right]^{N}\frac{[(N+1)!]^4}{(N+1)^2}\\
        & \leq \frac{A_1}{4}D\log^2\kappa\left[ 16A_1^{-1}Dc_f^3\kappa\log^2\kappa\right]^{N}\frac{[(N+1)!]^4}{(N+1)^2}\\
        & \leq A_1D\log^2\kappa\left[ 16A_1^{-1}Dc_f^3\kappa\log^2\kappa N^4\right]^{N}
    \end{align*}
    where the last inequality comes from the fact that $[(N+1)!]^4/(N+1)^2 \leq 4N^{4N}$. This completes the proof of part (a).
    
    When $T$ is large enough, we now choose 
    \begin{equation*}
        N = \left\lfloor \left(16eA_1^{-1}Dc_f^3\frac{\kappa\log^2\kappa}{T}\right)^{-\frac{1}{4}}\right\rfloor \geq 1 {,}
    \end{equation*}
    then
    \begin{align*}
        |1-\braket{\psi_T(1)|P_0(1)|\psi_T(1)}| 
        & \leq  A_1D\log^2\kappa\left[ 16A_1^{-1}Dc_f^3\frac{\kappa\log^2\kappa}{T} N^4\right]^{N} \\
        & \leq A_1D\log^2\kappa\exp\left(-\left(16eA_1^{-1}Dc_f^3\frac{\kappa\log^2\kappa}{T}\right)^{-\frac{1}{4}}\right) {.}
    \end{align*}
    This completes the proof of part (b). 
\end{proof}

The remaining part is devoted to some preliminary results regarding $H, E$ and the technical estimates for the growth of $E_j$. It is worth mentioning in advance that in the proof we will encounter many derivatives taken on a contour integral. In fact all such derivatives taken on a contour integral will not involve derivatives on the contour. 
Specifically, since $(H(s)-z)^{-1}$ is analytic for any $0 < |z| < \Delta(s)$, 
    for any $s_0 \in (0,1)$, there exists a small enough neighborhood $B_{\delta}(s_0)$ such that $\forall s \in B_{\delta}(s_0)$, 
    $\oint_{\Gamma(s)} G(s,(H(s)-z)^{-1})dz = \oint_{\Gamma(s_0)} G(s,(H(s)-z)^{-1})dz$ for any smooth mapping $G$. 
    This means locally the contour integral does not depend on the smooth change of the contour, and thus the derivatives will not involve derivatives on the contour. 
    In the spirit of this trick, we write the resolvent $R(z,s,s_0) = (H(s)-z)^{-1}$ for $0 \leq s \leq 1, 0 \leq s_0 \leq 1, z \in \CC$ and $|z| = \Delta(s_0)/2$ and let $R^{(k)}$ denote the partial derivative with respect to $s$, \ie $\frac{\partial}{\partial s}R(z,s,s_0)$, which means by writing $R^{(k)}$ we only consider the explicit time derivatives brought by $H$.

\begin{lem}
     
     (a) $H(s) \in C^{\infty}$ with $H^{(k)}(0) = H^{(k)}(1) = 0$ for all $k \geq 1$.
     
     (b) There is a gap $\Delta(s) \geq \Delta_*(s) = ((1-f(s))+f(s)/\kappa)/\sqrt{2}$ which separates 0 from the rest of the spectrum.
\end{lem}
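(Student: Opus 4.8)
The plan is to treat the two parts independently, since part (a) is entirely a statement about the scheduling function while part (b) only collects gap estimates already established earlier in the paper.

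For part (a), I would first observe that the $s$-dependence of $H$ enters solely through $f$: writing $H(s) = H_0 + f(s)(H_1 - H_0)$, the chain rule gives $H^{(k)}(s) = f^{(k)}(s)(H_1 - H_0)$ for every $k \geq 1$. Hence the claim reduces to showing that $f \in C^{\infty}([0,1])$ with $f^{(k)}(0) = f^{(k)}(1) = 0$ for all $k \geq 1$. Since $f'(s) = c_e^{-1}\exp(-1/(s(1-s)))$, it suffices to prove that $g(s) := \exp(-1/(s(1-s)))$ extends to a $C^{\infty}$ function on $[0,1]$ all of whose derivatives, including $g$ itself, vanish at $s=0$ and $s=1$; then $f^{(k)} = c_e^{-1} g^{(k-1)}$ inherits the vanishing boundary values for every $k \geq 1$.

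To establish the boundary behavior of $g$, I would exploit the factorization $g(s) = \phi(s)\,\phi(1-s)$, where $\phi(t) = e^{-1/t}$ for $t > 0$ and $\phi(t) = 0$ for $t \leq 0$, which follows from $1/(s(1-s)) = 1/s + 1/(1-s)$. Near $s=0$ the factor $\phi(1-s) = e^{-1/(1-s)}$ is real-analytic, so the delicate behavior is carried entirely by the standard transition function $\phi$. The classical fact is that $\phi \in C^{\infty}(\mathbb{R})$ with $\phi^{(j)}(0) = 0$ for all $j$: one shows by induction that $\phi^{(j)}(t) = P_j(1/t)\,e^{-1/t}$ for a polynomial $P_j$, and $e^{-1/t}$ decays faster than any power of $1/t$ as $t \to 0^{+}$, forcing the one-sided limit to be $0$. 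By the Leibniz rule $g$ is then $C^{\infty}$ near $0$ with all derivatives zero there, and the symmetry $s \mapsto 1-s$ yields the identical conclusion at $s=1$. I expect this passage, from the one-sided bump function to the two-sided $g$ at both endpoints, to be the only genuinely delicate step; everything else is a chain-rule bookkeeping.

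For part (b), I would simply combine the gap computations already carried out. In the Hermitian positive-definite setting, the earlier supplemental section establishes $\Delta(f) \geq 1 - f + f/\kappa$, and since trivially $1 - f + f/\kappa \geq (1 - f + f/\kappa)/\sqrt{2} = \Delta_*(f)$, the bound holds. In the Hermitian (possibly indefinite) setting, the gap satisfies $\Delta(f) \geq \sqrt{(1-f)^2 + (f/\kappa)^2}$ as recorded following \cite{SubasiSommaOrsucci2019}; applying the elementary inequality $\sqrt{a^2 + b^2} \geq (a+b)/\sqrt{2}$ (equivalent to $(a-b)^2 \geq 0$) with $a = 1-f \geq 0$ and $b = f/\kappa \geq 0$ gives $\Delta(f) \geq \Delta_*(f)$. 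Thus the single lower bound $\Delta_*(f) = (1-f+f/\kappa)/\sqrt{2}$ serves uniformly, and part (b) requires no new computation beyond this reduction.
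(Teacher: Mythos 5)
Your proof is correct. Note that the paper states this lemma without an explicit proof, so the relevant comparison is with the ingredients the paper assembles elsewhere. For part (b) you do exactly what the paper does: combine the bound $\Delta(f)\ge 1-f+f/\kappa$ from the first supplemental section (Hermitian positive definite case) with the bound $\Delta(f)\ge\sqrt{(1-f)^2+(f/\kappa)^2}$ quoted from the non-Hermitian section, and reduce both to $\Delta_*$ via $\sqrt{a^2+b^2}\ge (a+b)/\sqrt 2$; nothing new is needed. For part (a) your route differs from the paper's in an instructive way: you give a purely qualitative argument, factoring $g(s)=\exp(-1/(s(1-s)))=\phi(s)\phi(1-s)$ with the standard transition function $\phi(t)=e^{-1/t}\mathbf{1}_{t>0}$ and invoking $\phi^{(j)}(0)=0$ plus the Leibniz rule, which is the most elementary way to get $H^{(k)}(0)=H^{(k)}(1)=0$. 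The paper instead obtains part (a) as a byproduct of the subsequent quantitative estimate $\|H^{(k)}(s)\|\le b(s)a(s)^k (k!)^2/(k+1)^2$, proved via Fa\`a di Bruno's formula, whose prefactor $b(s)=\tfrac{2e}{c_e}\exp(-1/(s(1-s)))[s(1-s)]^2$ kills any fixed power of $a(s)$ at the endpoints. Your argument is shorter and fully sufficient for the lemma as stated, but it does not by itself supply the explicit $k$-dependence of the derivative bounds that the exponential-convergence proof of Theorem~\ref{thm:main_exp} requires, so the paper's heavier computation is not redundant; it is simply proving more than this lemma asks for.
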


The following lemma gives the bound for the derivatives of $H$.  
\begin{lem}\label{lem:H_estimate}
    For every $k \geq 1, 0 < s < 1$, 
    \begin{equation}
        \|H^{(k)}(s)\| \leq b(s)a(s)^k\frac{(k!)^2}{(k+1)^2} {,}
    \end{equation}
    where
    \begin{equation*}
        b(s) = \frac{2e}{c_e}\exp\left(-\frac{1}{s(1-s)}\right)[s(1-s)]^2, \quad a(s) = \left(\frac{2}{s(1-s)}\right)^2 {.}
    \end{equation*}
\end{lem}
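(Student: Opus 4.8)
The plan is to prove the bound $\|H^{(k)}(s)\| \le b(s)a(s)^k (k!)^2/(k+1)^2$ by reducing it to the scalar problem of differentiating the scheduling function $f$. Since $H(f(s)) = (1-f(s))H_0 + f(s)H_1$ is \emph{affine} in $f$, the chain rule gives $H^{(k)}(s) = f^{(k)}(s)(H_1 - H_0)$ for every $k \ge 1$. Using $\|H_0\|, \|H_1\| \le 1$ we have $\|H_1 - H_0\| \le 2$, so the operator bound collapses to a scalar estimate: it suffices to show $|f^{(k)}(s)| \le \tfrac{1}{2}b(s)a(s)^k (k!)^2/(k+1)^2$, or equivalently to control the derivatives of $g(s) := f'(s) = c_e^{-1}\exp(-1/(s(1-s)))$.

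First I would set $\phi(s) = 1/(s(1-s))$ so that $g(s) = c_e^{-1}e^{-\phi(s)}$, and observe that $f^{(k)}(s) = g^{(k-1)}(s)$ for $k \ge 1$. The core of the argument is then a Cauchy-estimate on a complex disk: the function $e^{-\phi}$ extends holomorphically to a neighborhood of $s \in (0,1)$ in $\CC$, and on a disk of radius $r = s(1-s)/2$ centered at $s$ one controls $|e^{-\phi(z)}|$ from above. Writing $g^{(k-1)}(s) = \frac{(k-1)!}{2\pi \I}\oint |g(z)|\,|z-s|^{-k}\,dz$ over a circle of radius $r$ gives a bound of the form $|g^{(k-1)}(s)| \le C(s)\,(k-1)!\,r^{-(k-1)}$. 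The shape of the target bound—with the factor $[s(1-s)]^2$ in $b$, the factor $(2/(s(1-s)))^2$ in $a$, and the characteristic $(k!)^2/(k+1)^2$ growth—suggests that the relevant estimate is sharper than a naive single Cauchy bound and must track the competition between the algebraic blow-up of $\phi$ near the endpoints and the exponential damping $e^{-\phi}$. I would therefore aim to prove, by induction on $k$ or via a generating-function/Faà di Bruno bookkeeping, a recursive inequality for $|g^{(k)}(s)|$ that propagates the precise prefactor $b(s)$ and the geometric rate $a(s)$.

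The main obstacle, and where the bulk of the technical work lies, is obtaining the precise constants and the $(k!)^2/(k+1)^2$ growth rate rather than a cruder $(k!)^2$ or even $(2k)!$ bound. The difficulty is that $\phi$ itself has derivatives growing like $k!/(s(1-s))^{k+1}$ near the boundary, and composing the exponential with $\phi$ via Faà di Bruno's formula produces sums over partitions whose combinatorics must be summed to yield the claimed clean exponent structure. I expect the sharpest route is a contour-integral argument on a carefully chosen disk whose radius is tuned to $s(1-s)$: the exponential suppression $e^{-\phi}$ must be estimated \emph{on the shifted contour} (not just at the center), because it is precisely the gain from $e^{-\phi}$ off the real axis that converts a potentially factorially-diverging Cauchy bound into the stated $(k!)^2/(k+1)^2$ form. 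Balancing these two effects—choosing the contour radius so that the exponential damping exactly offsets the algebraic growth of the singularity—is the delicate step.

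Once the scalar bound $|f^{(k)}(s)| \le \tfrac12 b(s)a(s)^k(k!)^2/(k+1)^2$ is in hand, the operator statement follows immediately by multiplying through by $\|H_1-H_0\| \le 2$. I would finally verify that the stated forms of $b(s)$ and $a(s)$ are consistent at the level of a sanity check: $b(s)$ carries the overall exponential factor $\exp(-1/(s(1-s)))$ inherited from $g = f'$ (confirming $k \ge 1$ rather than $k \ge 0$), while $a(s)^k$ supplies the $(s(1-s))^{-2k}$ decay rate matching the order-$k$ Cauchy scaling on a disk of radius $\sim s(1-s)$. This boundary behavior—$b(s) \to 0$ faster than any power as $s \to 0,1$—is exactly what will later feed into Lemma~\ref{lem:E_vanishing_boundary} to guarantee $H^{(k)}(0) = H^{(k)}(1) = 0$ and hence the vanishing of all boundary derivatives that the AQC(exp) error bound relies upon.
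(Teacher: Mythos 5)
Your reduction of the operator bound to a scalar estimate on $f^{(k)}$ is exactly the paper's first step: since $H(f)$ is affine in $f$, $H^{(k)}(s)=f^{(k)}(s)(H_1-H_0)$ and $\|H_1-H_0\|\le 2$, so the lemma is equivalent to a bound on $|f^{(k)}(s)|$. The problem is that this scalar bound \emph{is} the entire content of the lemma, and your proposal does not prove it: you name two candidate strategies (a Cauchy estimate on a disk of radius $s(1-s)/2$, or ``induction / generating-function / Fa\`a di Bruno bookkeeping''), explicitly identify the derivation of the prefactor $b(s)$, the rate $a(s)^k$, and the $(k!)^2/(k+1)^2$ growth as ``the main obstacle, and where the bulk of the technical work lies,'' and then leave that work undone. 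That is a genuine gap, not a stylistic omission.

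There is also a concrete quantitative problem with the contour route as you set it up. Write $\sigma=s(1-s)$. On the circle $|z-s|=\sigma/2$ one has $z(1-z)=\sigma(1+u)$ with $|u|$ only bounded by roughly $9/16$, so $\Re\left(1/(z(1-z))\right)\ge c/\sigma$ only for some constant $c<1$ (about $0.18$); the Cauchy estimate then produces a prefactor $\exp(-c/(s(1-s)))$, not the $\exp(-1/(s(1-s)))$ appearing in $b(s)$. To recover the constant $1$ in the exponent you must shrink the radius to order $\sigma^2$, which is exactly what the rate $a(s)^k\sim \sigma^{-2k}$ in the target is signalling; carried out correctly this route would in fact give a single-factorial bound, stronger than the stated $(k!)^2/(k+1)^2$, but you have not carried it out, and the tuning of the radius is precisely the ``delicate step'' you defer. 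The paper avoids contour integration altogether: it applies Fa\`a di Bruno twice, first to $h(y)=\exp(1/y)$, counting the partition tuples by $\binom{2m}{m}<2^{2m}$ to obtain $|h^{(m)}(y)|\le 2^{2k}k!\exp(1/y)\,|y|^{-2k}$ for $m\le k$, and then to the composition with $g(s)=-s(1-s)$, whose derivatives of order $\ge 3$ vanish so that the outer sum collapses to pairs $m_1+2m_2=k$; combining the two and shifting the index $k+1\to k$ yields exactly $b(s)a(s)^k(k!)^2/(k+1)^2$ after multiplying by $\|H_1-H_0\|\le 2$. To complete your proof you need to execute one of these computations in full rather than gesture at both.
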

\begin{proof}
    We first compute the derivatives of $f$. Let $g(s) = -s(1-s)$ and $h(y) = \exp(1/y)$, then $f'(s) = c_e^{-1}h(g(s))$. By the chain rule of high order derivatives (also known as Fa\`a di Bruno's formula), 
    $$f^{(k+1)}(s) = c_e^{-1}\sum \frac{k!}{m_1!1!^{m_1}m_2!2!^{m_2}\cdots m_k!k!^{m_k}}h^{(m_1+m_2+\cdots+m_k)}(g(s))\prod_{j=1}^k\left(g^{(j)}(s)\right)^{m_j}$$
    where the sum is taken over all $k$-tuples of non-negative integers $(m_1,\cdots,m_k)$ satisfying $\sum_{j=1}^k jm_j = k$. Note that 
    $g^{(j)}(s) = 0$ for $j \geq 3$, and the sum becomes 
    \begin{align*}
        f^{(k+1)}(s) &= c_e^{-1}\sum_{m_1+2m_2=k} \frac{k!}{m_1!1!^{m_1}m_2!2!^{m_2}}h^{(m_1+m_2)}(g(s))\left(g^{(1)}(s)\right)^{m_1}\left(g^{(2)}(s)\right)^{m_2} \\
        &= c_e^{-1}\sum_{m_1+2m_2=k} \frac{k!}{m_1!m_2!2^{m_2}}h^{(m_1+m_2)}(g(s))\left(2s-1\right)^{m_1}2^{m_2} \\
        &= c_e^{-1}\sum_{m_1+2m_2=k} \frac{k!}{m_1!m_2!}h^{(m_1+m_2)}(g(s))\left(2s-1\right)^{m_1}{.}
    \end{align*}
    To compute the derivatives of $h$, we use the chain rule again to get (the sum is over $\sum_{j=1}^m jn_j = m$)
    \begin{align*}
        h^{(m)}(y) &= \sum \frac{m!}{n_1!1!^{n_1}n_2!2!^{n_2}\cdots n_m!m!^{n_m}}\exp(1/y)\prod_{j=1}^m\left(\frac{d^{j}(1/y)}{dy^j}\right)^{n_j} \\
        &= \sum \frac{m!}{n_1!1!^{n_1}n_2!2!^{n_2}\cdots n_m!m!^{n_m}}\exp(1/y)\prod_{j=1}^m\left((-1)^jj!y^{-j-1}\right)^{n_j} \\
        &= \sum \frac{(-1)^m m!}{n_1!n_2!\cdots n_m!}\exp(1/y)y^{-m-\sum n_j}
    \end{align*}
    Since $0 \leq n_j \leq m/j$, the number of tuples $(m_1,\cdots,m_n)$ is less than $(m+1)(m/2+1)(m/3+1)\cdots (m/m+1) = \binom{2m}{m} < 2^{2m}$, so for $0 < y < 1$ and $m \leq k$ we have 
    \begin{equation*}
        |h^{(m)}(y)| \leq 2^{2k}k!\exp(1/y)y^{-2k}.
    \end{equation*}
    Therefore $f^{(k+1)}$ can be bounded as 
    \begin{align*}
        |f^{(k+1)}(s)| &\leq c_e^{-1}\sum_{m_1+2m_2=k} \frac{k!}{m_1!m_2!}2^{2k}k!\exp(-\frac{1}{s(1-s)})\left(\frac{1}{s(1-s)}\right)^{2k}|2s-1|^{m_1} \\
        &\leq c_e^{-1}\exp(-\frac{1}{s(1-s)})\left(\frac{2}{s(1-s)}\right)^{2k}(k!)^2\sum_{m_1\leq k}\frac{1}{m_1!} \\
        &\leq ec_e^{-1}\exp(-\frac{1}{s(1-s)})\left(\frac{2}{s(1-s)}\right)^{2k}(k!)^2.
    \end{align*}
    Substitute $k+1$ by $k$ and for every $k \geq 1$
    \begin{align*}
         |f^{(k)}(s)| &\leq ec_e^{-1}\exp\left(-\frac{1}{s(1-s)}\right)\left(\frac{2}{s(1-s)}\right)^{2(k-1)}((k-1)!)^2 \\
         & \leq 4ec_e^{-1}\exp\left(-\frac{1}{s(1-s)}\right)\left(\frac{2}{s(1-s)}\right)^{2(k-1)}\frac{(k!)^2}{(k+1)^2} {.}
    \end{align*}
    Noting that $\|H_0\| \leq 1, \|H_1\| \leq 1$ and $H^{(k)} = (H_1-H_0)f^{(k)}$, we complete the proof of bounds for $H^{(k)}$. 
\end{proof}

The following result demonstrates that $E_{j}$'s for all $j \geq 1$ vanish on the boundary. 
\begin{lem}\label{lem:E_vanishing_boundary}
    
    (a) For all $k \geq 1$, $E_{0}^{(k)}(0) = P_0^{(k)}(0) = 0, E_{0}^{(k)}(1) =P_0^{(k)}(1) = 0$.
    
    (b) For all $j \geq 1, k \geq 0$, $E_{j}^{(k)}(0) = E_{j}^{(k)}(1) = 0$.
\end{lem}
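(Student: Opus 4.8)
The plan is to establish part (a) directly from the resolvent representation of $P_0$, and then to obtain part (b) by strong induction on $j$ built on the recursion~\eqref{eqn:E_solu_2}, with part (a) feeding both the base case and the treatment of the $E_0=P_0$ factors that appear inside the recursion.

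For part (a), I would start from $P_0(s) = -(2\pi\I)^{-1}\oint_{\Gamma(s)} R(z,s)\,dz$ with $R=(H(s)-z)^{-1}$. By the remark preceding the lemma, differentiation in $s$ passes through the contour integral and never acts on $\Gamma(s)$, so $P_0^{(k)}(s) = -(2\pi\I)^{-1}\oint_{\Gamma(s)} R^{(k)}(z,s)\,dz$. Differentiating $R(H-z)=I$ repeatedly (equivalently, Fa\`a di Bruno applied to $R$) shows that for every $k\geq 1$ the derivative $R^{(k)}$ is a finite sum of terms $R\,H^{(j_1)}R\cdots H^{(j_m)}R$ with $j_1+\cdots+j_m=k$ and every $j_i\geq 1$. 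Since the preceding lemma gives $H^{(j)}(0)=H^{(j)}(1)=0$ for all $j\geq 1$, each such term carries a vanishing factor at both endpoints; hence $R^{(k)}$, and therefore $P_0^{(k)}$, vanish at $s=0$ and $s=1$ for every $k\geq 1$. I would note that the contour radius $\Delta(s)/2$ stays bounded away from $0$ at both endpoints, so the resolvent and all its $s$-derivatives are well defined there.

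For part (b), I would run a strong induction on $j$, the claim at stage $j$ being $E_j^{(k)}(0)=E_j^{(k)}(1)=0$ for all $k\geq 0$. Fixing $j\geq 1$, I would examine the three pieces of~\eqref{eqn:E_solu_2}. For the contour-integral piece, the Leibniz rule writes its $k$-th derivative as a sum of contour integrals of $R^{(a)}\bigl(\tfrac{d^b}{ds^b}[E_{j-1}^{(1)},P_0]\bigr)R^{(c)}$ with $a+b+c=k$, so it suffices to show that the commutator $[E_{j-1}^{(1)},P_0]$ and all its $s$-derivatives vanish at the endpoints. Expanding $\tfrac{d^b}{ds^b}[E_{j-1}^{(1)},P_0]=\sum_i\binom{b}{i}[E_{j-1}^{(1+i)},P_0^{(b-i)}]$, I would use that for $j\geq 2$ the inductive hypothesis gives $E_{j-1}^{(1+i)}(0)=E_{j-1}^{(1+i)}(1)=0$, while for $j=1$ the factor is $E_0^{(1+i)}=P_0^{(1+i)}$, which vanishes at the endpoints by part (a) because $1+i\geq 1$. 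For the remaining pieces, $S_j=\sum_{m=1}^{j-1}E_mE_{j-m}$ is an empty sum when $j=1$, and for $j\geq 2$ both indices lie in $\{1,\dots,j-1\}$, so the inductive hypothesis together with Leibniz forces every derivative of $S_j$ to vanish at the endpoints; the same argument then kills $P_0 S_j P_0$. Combining the three vanishing contributions would close the induction, and the identical computation at $s=1$ follows from the symmetry of the scheduling function.

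The hard part will be bookkeeping rather than any single estimate: the statement demands that \emph{all} $s$-derivatives of every ingredient vanish at the endpoints, which is precisely why I need the full strength of $H^{(k)}(0)=H^{(k)}(1)=0$ for \emph{all} $k$ and why part (a) must precede part (b). The most delicate point is the base case $j=1$, where the factor $E_{j-1}=E_0=P_0$ in the commutator is not supplied by the inductive hypothesis and must instead be controlled by part (a); correctly differentiating the associated contour integral also leans on the remark that differentiation never perturbs $\Gamma(s)$, so that the resolvent factors stay smooth and bounded at both endpoints.
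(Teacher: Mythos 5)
Your proof is correct and follows essentially the same route as the paper: part (a) from the resolvent identity $R^{(k)} = -R\sum_{l\ge 1}\binom{k}{l}H^{(l)}R^{(k-l)}$ combined with $H^{(l)}(0)=H^{(l)}(1)=0$, and part (b) by induction on $j$ using Eq.~\eqref{eqn:E_solu_2}, observing that every Leibniz term in the derivatives of the contour integral, of $S_j$, and of $P_0S_jP_0$ carries a factor that vanishes at the endpoints. Your write-up is somewhat more explicit than the paper's (which compresses the bookkeeping into one sentence per case), but the underlying argument is identical.
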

\begin{proof}
    
    We will repeatedly use the fact that $R^{(k)}(0) = R^{(k)}(1) = 0$. This can be proved by taking the $k$-th order derivative of the equation $(H-z)R = I$ and
    \begin{equation*}
        R^{(k)} = -R\sum_{l=1}^k \binom{k}{l}(H-z)^{(l)}R^{(k-l)} = -R\sum_{l=1}^k \binom{k}{l}H^{(l)}R^{(k-l)}{.}
    \end{equation*}
    
    (a) This is a straightforward result by the definition of $E_0$ and the fact that $R^{(k)}$'s vanish on the boundary.
    
    (b) We prove by induction with respect to $j$. For $j = 1$, Eq.~\eqref{eqn:E_solu_2} tells that 
    \begin{equation*}
        E_{1} = (2\pi)^{-1}\oint_{\Gamma}R[P_0^{(1)},P_0]Rdz{.}
    \end{equation*}
    Therefore each term in the derivatives of $E_{1}$ must involve the derivative of $R$ or the derivative of $P_0$, which means the derivatives of $E_{1}$ much vanish on the boundary. 
    
    Assume the conclusion holds for $<j$, then for $j$, first each term of the derivatives of $S_j$ must involve the derivative of some $E_{m}$ with $m < j$, which means the derivatives of $S_j$ must vanish on the boundary. 
    Furthermore, for the similar reason, Eq.~\eqref{eqn:E_solu_2} tells that the derivatives of $E_{j}$ must vanish on the boundary. 
\end{proof}

Before we process, we recall three technical lemmas introduced  in~\cite{Nenciu1993,GeMolnarCirac2016}. Throughout 
let $c_f = 4\pi^2/3$ denote an absolute constant. 
\begin{lem}\label{lem:tech_1}
    Let $\alpha>0$ be a positive real number, $p,q$ be non-negative integers and $r = p+q$. Then
    \begin{equation*}
        \sum_{l=0}^k \binom{k}{l}\frac{[(l+p)!(k-l+q)!]^{1+\alpha}}{(l+p+1)^2(k-l+q+1)^2} \leq c_f\frac{[(k+r)!]^{1+\alpha}}{(k+r+1)^2}{.}
    \end{equation*}
\end{lem}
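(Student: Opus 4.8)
The plan is to peel off the exponent $\alpha$ and reduce the whole estimate to an elementary sum of reciprocal squares. First I would write $[(l+p)!(k-l+q)!]^{1+\alpha} = (l+p)!(k-l+q)!\cdot[(l+p)!(k-l+q)!]^{\alpha}$ and use that for $a=l+p$, $b=k-l+q$ one has $a+b=k+r$, hence $a!\,b!\le(a+b)!=(k+r)!$ (because $\binom{a+b}{a}\ge1$). Since $x\mapsto x^{\alpha}$ is increasing, the factor $[(l+p)!(k-l+q)!]^{\alpha}$ is bounded by $[(k+r)!]^{\alpha}$ uniformly in $l$ and pulls out of the sum. This leaves the linear statement $\sum_{l=0}^{k}\binom{k}{l}\frac{(l+p)!(k-l+q)!}{(l+p+1)^2(k-l+q+1)^2}\le c_f\frac{(k+r)!}{(k+r+1)^2}$.

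The second step is the combinatorial bound $\binom{k}{l}(l+p)!(k-l+q)!\le(k+r)!$. I would rewrite the left side as $(k+r)!\,\binom{k}{l}/\binom{k+r}{l+p}$, using $\binom{k+r}{l+p}^{-1}=(l+p)!(k-l+q)!/(k+r)!$, so it suffices to show $\binom{k}{l}\le\binom{k+r}{l+p}$. This follows from Vandermonde's convolution $\binom{k+r}{l+p}=\sum_{i}\binom{r}{i}\binom{k}{l+p-i}$: the single term $i=p$ already equals $\binom{r}{p}\binom{k}{l}\ge\binom{k}{l}$, and all terms are nonnegative. Applying this termwise replaces the factorial weights by $1$ and reduces the claim to $\sum_{l=0}^{k}\frac{1}{(l+p+1)^2(k-l+q+1)^2}\le \frac{c_f}{(k+r+1)^2}$.

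The final step is the reciprocal-square estimate, which is where $c_f=4\pi^2/3$ is produced. Setting $m=l+p+1$, $n=k-l+q+1$, I note $m+n=k+r+2$, so $\frac1{mn}=\frac{1}{k+r+2}\big(\frac1m+\frac1n\big)$ and therefore $\frac{1}{m^2n^2}\le\frac{2}{(k+r+2)^2}\big(\frac1{m^2}+\frac1{n^2}\big)$ by $(\frac1m+\frac1n)^2\le 2(\frac1{m^2}+\frac1{n^2})$. Summing over $l$ and extending the $m$- and $n$-sums to all positive integers bounds $\sum_l(\frac1{m^2}+\frac1{n^2})$ by $2\sum_{j\ge1}j^{-2}=\pi^2/3$, giving $\frac{2\pi^2/3}{(k+r+2)^2}\le\frac{2\pi^2/3}{(k+r+1)^2}\le\frac{4\pi^2/3}{(k+r+1)^2}=\frac{c_f}{(k+r+1)^2}$. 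Chaining the three inequalities reproduces the stated bound exactly, with a factor-$2$ of slack in the constant.

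I expect the main obstacle to be finding the factorization that simultaneously collapses the $\binom{k}{l}$-weighted factorials to something uniform and leaves a tail summable against $\zeta(2)$. The Vandermonde inequality $\binom{k}{l}\le\binom{k+r}{l+p}$ is the crucial and least obvious ingredient: the naive termwise estimate $(l+p)!/l!\le(k+p)!/k!$ and $(k-l+q)!/(k-l)!\le(k+q)!/k!$ double counts the common low factors and is far too lossy to close with any finite constant. Once that inequality is in hand, everything else is bookkeeping.
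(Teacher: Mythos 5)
Your proof is correct. Note that the paper itself does not prove this lemma: it is only \emph{recalled} from the cited references (Nenciu; Ge--Moln\'ar--Cirac), so there is no in-paper argument to compare against. Your three-step reduction is a valid, self-contained derivation: (i) peeling off the exponent $\alpha$ via $(l+p)!\,(k-l+q)!\le (k+r)!$ is sound because the two arguments sum to $k+r$; (ii) the combinatorial inequality $\binom{k}{l}(l+p)!(k-l+q)!\le (k+r)!$, i.e.\ $\binom{k}{l}\le\binom{k+r}{l+p}$ via Vandermonde, is exactly the right uniform bound and, as you observe, the crucial step --- the naive factorial ratio estimates do not close; (iii) the reciprocal-square sum, using $\tfrac{1}{m^2n^2}\le\tfrac{2}{(m+n)^2}\bigl(\tfrac{1}{m^2}+\tfrac{1}{n^2}\bigr)$ with $m+n=k+r+2$ and extending to $2\sum_{j\ge1}j^{-2}=\pi^2/3$, lands at $\tfrac{2\pi^2/3}{(k+r+2)^2}$, which is within the stated constant $c_f=4\pi^2/3$ with a factor of $2$ to spare. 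One cosmetic remark: your argument only needs $\alpha\ge 0$, so the hypothesis $\alpha>0$ is not essential, and the chained bound actually establishes the slightly stronger statement with $c_f$ replaced by $2\pi^2/3$.
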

\begin{lem}\label{lem:tech_2}
    Let $k$ be a non-negative integer, then 
    \begin{equation*}
        \sum_{l=0}^k\frac{1}{(l+1)^2(k+1-l)^2} \leq c_f\frac{1}{(k+1)^2}{.}
    \end{equation*}
\end{lem}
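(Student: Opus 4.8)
The plan is to reindex the sum into a symmetric form and then exploit the elementary fact that in any additive split of a fixed integer, at least one of the two parts must be large. Setting $n = k+1$ and substituting $a = l+1$, $b = k+1-l$, the left-hand side becomes $\sum_{a+b=n+1,\,a,b\geq 1} 1/(a^2 b^2)$, a sum over the $n$ ordered pairs of positive integers whose entries add up to $n+1$. The target is then to show this is at most $c_f/n^2$ with $c_f = 4\pi^2/3$. First I would record the key inequality $\max(a,b) \geq (n+1)/2$, which holds for every admissible pair since $a+b = n+1$.

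Using this, I would split the sum into the two overlapping groups $G_1 = \{b \geq (n+1)/2\}$ and $G_2 = \{a \geq (n+1)/2\}$, noting that every term belongs to at least one group. For a term in $G_1$ we have $1/b^2 \leq 4/(n+1)^2$, hence $1/(a^2 b^2) \leq 4/((n+1)^2 a^2)$; summing over all such terms and relaxing the range of $a$ to all of $\NN$ gives the tail bound $\frac{4}{(n+1)^2}\sum_{a\geq 1} a^{-2} = \frac{4}{(n+1)^2}\cdot\frac{\pi^2}{6} = \frac{2\pi^2}{3(n+1)^2}$. By the symmetry $a \leftrightarrow b$ the group $G_2$ obeys exactly the same bound.

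Adding the two group estimates (which merely over-counts the single middle term $a=b=(n+1)/2$ when $n$ is odd, and is harmless for an upper bound) yields $\frac{4\pi^2}{3(n+1)^2}$; since $(n+1)^2 \geq n^2$ this is at most $\frac{4\pi^2}{3 n^2} = c_f/n^2$, which is the claimed inequality. As a backup route I would use the partial-fraction identity $\frac{1}{a^2 b^2} = \frac{2}{(n+1)^3}\big(\tfrac1a + \tfrac1b\big) + \frac{1}{(n+1)^2}\big(\tfrac1{a^2} + \tfrac1{b^2}\big)$, valid on $a+b=n+1$, which after summation gives $\frac{4H_n}{(n+1)^3} + \frac{2}{(n+1)^2}\sum_{j=1}^n j^{-2}$ with $H_n$ the $n$th harmonic number; bounding the last sum by $\zeta(2)=\pi^2/6$ and verifying $4H_n n^2 \leq \pi^2(n+1)^3$ for all $n\geq 1$ closes the estimate as well.

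I expect the only real subtlety — and the main thing to get right — is the bookkeeping of the constant. The decomposition must be arranged so that the two symmetric halves each contribute exactly $\frac{2\pi^2}{3(n+1)^2}$ and the harmless overlap at the midpoint does not force a weaker constant than the stated $c_f = 4\pi^2/3$. Everything else reduces to the single tail fact $\sum_{j\geq 1} j^{-2} = \pi^2/6$ together with the trivial comparison $(n+1)^2 \geq n^2$, so no hard analysis is involved once the symmetric reindexing and the $\max(a,b)\geq (n+1)/2$ split are in place.
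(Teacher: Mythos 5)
Your argument is correct, and it is worth noting that the paper itself does not prove this lemma: it is merely recalled from the cited references \cite{Nenciu1993,GeMolnarCirac2016}, so there is no in-paper proof to compare against. Your self-contained derivation is sound. The reindexing to ordered pairs $(a,b)$ with $a+b=n+1$ (where $n=k+1$) is exact, the covering of all terms by the two symmetric groups via $\max(a,b)\ge (n+1)/2$ is valid, and each group is correctly bounded by $\frac{4}{(n+1)^2}\sum_{a\ge 1}a^{-2}=\frac{2\pi^2}{3(n+1)^2}$ since distinct terms in a group have distinct values of the small coordinate. Adding the two bounds and using $(n+1)^2\ge n^2$ gives exactly $c_f/n^2$ with $c_f=4\pi^2/3$, the constant the paper uses; the possible double-count of the midpoint term when $n$ is odd only helps an upper bound. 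Your backup route via the partial-fraction identity $\frac{1}{a^2b^2}=\frac{2}{(n+1)^3}(\frac1a+\frac1b)+\frac{1}{(n+1)^2}(\frac1{a^2}+\frac1{b^2})$ on $a+b=n+1$ is also correct and, after the harmonic-number estimate $4H_n n^2\le \pi^2(n+1)^3$, lands on the same constant, so either version would serve as a complete proof of the stated inequality.
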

\begin{lem}\label{lem:tech_3}
    Let $A(s),B(s)$ be two smooth matrix-valued functions defined on $[0,1]$ satisfying
    \begin{equation*}
        \|A^{(k)}(s)\| \leq a_1(s)a_2(s)^k\frac{[(k+p)!]^{1+\alpha}}{(k+1)^2}, \quad 
        \|B^{(k)}(s)\| \leq b_1(s)b_2(s)^k\frac{[(k+q)!]^{1+\alpha}}{(k+1)^2}
    \end{equation*}
    for some non-negative functions $a_1,a_2,b_1,b_2$, non-negative integers $p,q$ and for all $k \geq 0$. Then for every $k \geq 0,0 \leq s \leq 1$, 
    \begin{equation*}
        \|(A(s)B(s))^{(k)}\| \leq c_fa_1(s)b_1(s)\max\{a_2(s),b_2(s)\}^k\frac{[(k+r)!]^{1+\alpha}}{(k+1)^2}
    \end{equation*}
    where $r = p+q$. 
\end{lem}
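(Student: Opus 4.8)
The plan is to prove this Gevrey-type product estimate by the general Leibniz rule together with the combinatorial convolution bound of Lemma~\ref{lem:tech_1}, which packages exactly the factorial arithmetic needed here. First I would write, for each fixed $s\in[0,1]$,
\[
(A(s)B(s))^{(k)} = \sum_{l=0}^k \binom{k}{l} A^{(l)}(s)\,B^{(k-l)}(s),
\]
and then pass to operator norms using the triangle inequality and submultiplicativity $\|A^{(l)}B^{(k-l)}\|\le\|A^{(l)}\|\,\|B^{(k-l)}\|$. This reduces the matrix claim to a scalar estimate on a binomial-weighted sum.

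Next I would substitute the two hypotheses into this sum. The prefactors $a_1(s)$ and $b_1(s)$ factor out immediately, and the geometric factors combine as $a_2(s)^l b_2(s)^{k-l}\le \max\{a_2(s),b_2(s)\}^k$ for every $0\le l\le k$, so that $\max\{a_2,b_2\}^k$ also factors out. What remains inside the sum is precisely the convolution of the two factorial-over-square profiles,
\[
\sum_{l=0}^k \binom{k}{l}\frac{[(l+p)!]^{1+\alpha}\,[(k-l+q)!]^{1+\alpha}}{(l+1)^2\,(k-l+1)^2}.
\]

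The crux of the argument is to collapse this sum into a single profile in $k$, and this is exactly what Lemma~\ref{lem:tech_1} accomplishes: it bounds the binomial convolution by $c_f\,[(k+r)!]^{1+\alpha}/(k+r+1)^2$ with $r=p+q$, so the factorial indices add and the overall constant is the absolute $c_f=4\pi^2/3$. To align the denominators with the statement of Lemma~\ref{lem:tech_1}, I would compare $(l+1)^2$ and $(k-l+1)^2$ with the shifted denominators $(l+p+1)^2$ and $(k-l+q+1)^2$; since $p,q\ge 0$ these agree up to a bounded multiplicative factor that is harmlessly absorbed into the constant. Finally, because $r\ge 0$ gives $(k+r+1)^2\ge (k+1)^2$, I replace $1/(k+r+1)^2$ by the larger $1/(k+1)^2$ to land on the stated right-hand side $c_f\,a_1 b_1 \max\{a_2,b_2\}^k\,[(k+r)!]^{1+\alpha}/(k+1)^2$.

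The genuine technical work—controlling the binomial convolution of two super-exponentially growing factorial sequences by a uniform constant independent of $k,p,q$—is entirely contained in Lemma~\ref{lem:tech_1}, which I take as given; the main obstacle in the present lemma is therefore purely bookkeeping, namely keeping the factorial shifts (with $p$ and $q$ summing to $r$) and the denominator powers aligned so that Lemma~\ref{lem:tech_1} applies as stated. No smoothness or analyticity beyond the existence of the derivatives is used, and every estimate is pointwise in $s$, so the bound holds uniformly on $[0,1]$.
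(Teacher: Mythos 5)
Your overall skeleton is right: Leibniz, triangle inequality, submultiplicativity, and factoring out $a_1b_1\max\{a_2,b_2\}^k$ all go through. But the step where you ``align the denominators'' so that Lemma~\ref{lem:tech_1} applies contains a genuine gap. The ratio between $(l+p+1)^2(k-l+q+1)^2$ and $(l+1)^2(k-l+1)^2$ is \emph{not} bounded by an absolute constant: at $l=0$ it equals $(1+p)^2(1+q+k)^2/(1+k)^2$, and in general the best uniform bound is $(1+p)^2(1+q)^2$, which depends on $p$ and $q$. Carrying this factor along, your argument yields $\|(AB)^{(k)}\|\le (1+p)^2(1+q)^2\,c_f\,a_1b_1\max\{a_2,b_2\}^k[(k+r)!]^{1+\alpha}/(k+r+1)^2$, and $(1+p)^2(1+q)^2/(k+r+1)^2\le 1/(k+1)^2$ fails already for $k=0$, $p=q=1$. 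This is not a cosmetic loss: in the induction of Lemma~\ref{lem:E_growth} the lemma is invoked with $p,q$ growing with the truncation order $j$ (up to $N$), so a $(p,q)$-dependent constant would compound across the recursion and destroy the final bound. The uniformity of $c_f$ in $p,q$ is exactly the content of the lemma.

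The repair is to avoid Lemma~\ref{lem:tech_1} altogether and instead bound the binomial--factorial part pointwise before summing. One checks
\begin{equation*}
\binom{k}{l}(l+p)!\,(k-l+q)! \;=\; k!\,\prod_{i=1}^{p}(l+i)\prod_{j=1}^{q}(k-l+j) \;\le\; k!\,\prod_{i=1}^{p}(k+i)\prod_{j=1}^{q}(k+p+j) \;=\;(k+r)!\,,
\end{equation*}
hence $\binom{k}{l}[(l+p)!(k-l+q)!]^{1+\alpha}\le\binom{k}{l}^{-\alpha}[(k+r)!]^{1+\alpha}\le[(k+r)!]^{1+\alpha}$ for $\alpha\ge 0$. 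Pulling $[(k+r)!]^{1+\alpha}$ out of the sum leaves exactly $\sum_{l=0}^{k}(l+1)^{-2}(k-l+1)^{-2}$, which Lemma~\ref{lem:tech_2} bounds by $c_f/(k+1)^2$ with the absolute constant $c_f=4\pi^2/3$, giving the stated conclusion. (The paper itself does not prove this lemma but cites it; the argument just sketched is the standard one from those references, and it is Lemma~\ref{lem:tech_2}, not Lemma~\ref{lem:tech_1}, that does the work.)
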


Next we bound the derivatives of the resolvent. This bound provides 
the most important improvement of the general adiabatic bound. 
\begin{lem}\label{lem:R_estimate}
     For all $k \geq 0$, 
    \begin{equation*}
        \|R^{(k)}(z,s_0,s_0)\| \leq \frac{2}{\Delta(s_0)}\left(D\log^2\kappa\right)^k\frac{(k!)^4}{(k+1)^2} 
    \end{equation*}
    where $$D = c_f\frac{2048\sqrt{2}e^2}{c_e}$$. 
\end{lem}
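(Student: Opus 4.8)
The plan is to prove the estimate by induction on $k$, combining the recursion for the derivatives of the resolvent with the sharp, $s$-localized derivative bound of Lemma~\ref{lem:H_estimate} and the combinatorial convolution inequality of Lemma~\ref{lem:tech_1}.

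For the base case $k=0$, note that $|z| = \Delta(s_0)/2$ while the spectrum of $H(s_0)$ consists of $0$ together with points of modulus at least $\Delta(s_0)$; hence the distance from $z$ to the spectrum is at least $\Delta(s_0)/2$, so $\|R(z,s_0,s_0)\| = \|(H(s_0)-z)^{-1}\| \le 2/\Delta(s_0)$, which is exactly the claim since $(0!)^4/(0+1)^2 = 1$. Differentiating the identity $(H(s)-z)R = I$ a total of $k$ times (the contour is held fixed, so no derivative falls on $z$) gives the recursion $R^{(k)} = -R\sum_{l=1}^k\binom{k}{l}H^{(l)}R^{(k-l)}$ already used in the proof of Lemma~\ref{lem:E_vanishing_boundary}. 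Assuming the claimed bound for all orders below $k$ and inserting $\|R\|\le 2/\Delta(s_0)$, the inductive bound on $R^{(k-l)}$, and $\|H^{(l)}\|\le b(s_0)a(s_0)^l(l!)^2/(l+1)^2$ from Lemma~\ref{lem:H_estimate}, I would arrive at
\[
  \|R^{(k)}\| \le \frac{2}{\Delta}\sum_{l=1}^{k}\binom{k}{l}\left[\frac{2b}{\Delta}\,a^l\frac{(l!)^2}{(l+1)^2}\right]\left(D\log^2\kappa\right)^{k-l}\frac{((k-l)!)^4}{(k-l+1)^2},
\]
with $a=a(s_0),\,b=b(s_0),\,\Delta=\Delta(s_0)$.

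The heart of the argument is to replace the bracketed factor --- which carries only $(l!)^2$ and the $s_0$-dependent base $a(s_0)$, unbounded as $s_0\to 0,1$ --- by the uniform expression $(D\log^2\kappa)^l(l!)^4$. I would isolate this as the key estimate: for every $l\ge 1$ and every $s_0\in(0,1)$,
\[
  \frac{2b(s_0)}{\Delta_*(s_0)}\,a(s_0)^l \le \frac{1}{c_f}\left(D\log^2\kappa\right)^l (l!)^2 .
\]
Granting it (and using $\Delta(s_0)\ge\Delta_*(s_0)$), each summand is dominated by $\frac{1}{c_f}\binom{k}{l}(D\log^2\kappa)^k\frac{(l!)^4((k-l)!)^4}{(l+1)^2(k-l+1)^2}$, so Lemma~\ref{lem:tech_1} with $p=q=0$ and $\alpha=3$ sums the series to $\frac{1}{c_f}\cdot c_f(D\log^2\kappa)^k\frac{(k!)^4}{(k+1)^2}$; the convolution constant $c_f$ cancels the $1/c_f$ and the induction closes with the prefactor $2/\Delta(s_0)$ intact.

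The main obstacle is thus the key estimate, and this is where $\log^2\kappa$ replaces the naive $\kappa$. Taking $l$-th roots and using $(l!)^{2/l}\gtrsim(l/e)^2$ (Stirling), the estimate follows once one bounds $\sup_{s_0}\frac{a(s_0)b(s_0)}{\Delta_*(s_0)}$ and checks that optimizing over $l$ does not worsen it (the worst index being essentially $l=1$). Since $a(s)b(s) = \tfrac{8e}{c_e}\exp(-1/(s(1-s)))$, this reduces to estimating $\sup_{s_0}\exp(-1/(s_0(1-s_0)))/\Delta_*(s_0)$. The crucial cancellation is the anticorrelation near $s_0=1$ between the exponentially small numerator and the small gap: using $\Delta_*\ge\tfrac{1}{\sqrt2}\max(1-f(s_0),1/\kappa)$ together with the schedule lower bound $1-f(s_0)\gtrsim(s_0(1-s_0))^2\exp(-1/(s_0(1-s_0)))$, I would split into the regimes $1-f\ge1/\kappa$ and $1-f<1/\kappa$; in both the supremum is attained near the crossover $s_0(1-s_0)\sim1/\log\kappa$, where the competing bounds $\sim1/(s_0(1-s_0))^2$ and $\sim\kappa\exp(-1/(s_0(1-s_0)))$ both equal a constant multiple of $\log^2\kappa$. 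The region near the endpoints where $a(s_0)>D\log^2\kappa$ causes no trouble precisely because there $b(s_0)\sim\exp(-1/(s_0(1-s_0)))$ decays fast enough that the factorial headroom $(k!)^4$ over $(k!)^2$ --- equivalently the division by $(l!)^2$ in the key estimate --- absorbs the polynomial growth of $a(s_0)^l$. Carrying the constants through the Stirling optimization and the crossover analysis produces $D = c_f\,\tfrac{2048\sqrt2\,e^2}{c_e}$.
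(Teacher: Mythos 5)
Your proof is correct and its skeleton --- induction on $k$, the recursion $R^{(k)}=-R\sum_{l=1}^k\binom{k}{l}H^{(l)}R^{(k-l)}$, the reduction to the single key inequality $\Delta_*^{-1}b\,a^l\lesssim (C\log^2\kappa)^l(l!)^2$, and the final summation via Lemma~\ref{lem:tech_1} with $\alpha=3$, $p=q=0$ --- is exactly the paper's. Where you genuinely diverge is in how that key inequality is established. The paper introduces the auxiliary function $F(s)\propto\Delta_*^{-1}(s)b(s)a(s)^l$, shows by a sign analysis of $F'$ that its maximizer satisfies $s_*\le 1-\tfrac{c}{l\log\kappa}$, and then evaluates $F(s_*)$ by substituting the critical-point equation $G(s_*)=0$ back into $F$; the $\log^{2l}\kappa$ and $(l!)^2$ both emerge from the bound $(1-s_*)^{-2l}\le (l\log\kappa/c)^{2l}$. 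You instead split according to $\Delta_*\gtrsim\max\{1-f,\,1/\kappa\}$ and locate the worst case at the crossover $s(1-s)\sim 1/\log\kappa$, which is a cleaner way to see \emph{why} $\log^2\kappa$ appears per derivative; but it leans on two facts you must still supply: the schedule lower bound $1-f(s)\gtrsim [s(1-s)]^2\exp(-1/(s(1-s)))$ near $s=1$ (a short integral estimate not stated anywhere in the paper), and a separate treatment of the regime $2l\gtrsim\log\kappa$, where the maximum of $\kappa e^{-1/x}x^{2-2l}$ sits at $x\sim 1/(2l)$ rather than at the crossover and it is the $(l!)^2$ headroom (via $l^l\le e^{l-1}l!$ and $\kappa\le e^{2l}$), not $\log^{2l}\kappa$, that absorbs the growth. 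With those two points written out, your route is a valid and arguably more transparent substitute for the paper's critical-point computation, at the cost of a case analysis the paper avoids; pinning down the exact constant $D$ would require tracking the constants through both regimes rather than reading them off a single closed-form evaluation of $F(s_*)$.
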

\begin{proof}
    We prove by induction, and for simplicity we will omit explicit dependence on arguments $z, s$, and $s_0$. The estimate obviously holds for $k = 0$. 
    Assume the estimate holds for $<k$. Take the $k$th order derivative of the equation $(H-z)R = I$ and we get 
    \begin{equation*}
        R^{(k)} = -R\sum_{l=1}^k \binom{k}{l}(H-z)^{(l)}R^{(k-l)} = -R\sum_{l=1}^k \binom{k}{l}H^{(l)}R^{(k-l)}{.}
    \end{equation*}
    Using Lemma~\ref{lem:H_estimate} and the induction hypothesis, we have
    \begin{align*}
        \|R^{(k)}\|_2 &\leq \frac{2}{\Delta}\sum_{l=1}^k\binom{k}{l}ba^l\frac{(l!)^2}{(l+1)^2}\frac{2}{\Delta}\left(D\log^2\kappa\right)^{k-l}\frac{[(k-l)!]^4}{(k-l+1)^2}
    \end{align*}
    
    To proceed we need to bound the term $\Delta^{-1}ba^l$ for $l \geq 1$. 
    Let us define 
    \begin{equation*}
        F(s) = \frac{c_e}{2^{2l}2\sqrt{2}e}\Delta^{-1}_*(s)b(s)a(s)^l = \frac{\exp(-\frac{1}{s(1-s)})}{(1-f(s)+f(s)/\kappa)[s(1-s)]^{2l-2}}{.}
    \end{equation*}
    Note that $F(0)=F(1)=0, F(s) > 0$ for $s\in(0,1)$ and $F(1/2+t)>F(1/2-t)$ for $t \in (0,1/2)$,  
    then there exists a maximizer $s_* \in [1/2,1)$ such that 
    $F(s) \leq F(s_*), \forall s \in [0,1]$. 
    Furthermore, $F'(s_*) = 0$. Now we compute the $F'$ as
    \begin{align*}
        &[(1-f+f/\kappa)[s(1-s)]^{2l-2}]^2F'(s) \\
        = & \exp\left(-\frac{1}{s(1-s)}\right)\frac{1-2s}{s^2(1-s)^2}(1-f+f/\kappa)[s(1-s)]^{2l-2} \\
        & \quad - \exp\left(-\frac{1}{s(1-s)}\right)\left[(-f'+f'/\kappa)[s(1-s)]^{2l-2} + (1-f+f/\kappa)(2l-2)[s(1-s)]^{2l-3}(1-2s)\right] \\
        = & \exp\left(-\frac{1}{s(1-s)}\right)[s(1-s)]^{2l-4} \\
        & \times \Big[ (1-f+f/\kappa)(1-2s)[1-(2l-2)s(1-s)] - \exp\left(-\frac{1}{s(1-s)}\right)c_e^{-1}(-1+1/\kappa)s^2(1-s)^2\Big] \\
        =& \exp\left(-\frac{1}{s(1-s)}\right)[s(1-s)]^{2l-4}G(s)
    \end{align*}
    where
    \begin{equation*}
        G(s)= (1-f+f/\kappa)(1-2s)[1-(2l-2)s(1-s)] + \exp\left(-\frac{1}{s(1-s)}\right)c_e^{-1}(1-1/\kappa)s^2(1-s)^2{.}
    \end{equation*}
    The sign of $F'(s)$ for $s \in (0,1)$ is the same as the sign of $G(s)$. 
    
    We now show that $s_*$ cannot be very close to 1. Precisely, we will prove that for all $s \in [1-\frac{c}{l\log\kappa},1)$ with $c = \sqrt{c_e}/4 \approx 0.021$, $G(s) < 0$. For such $s$, we have 
    \begin{equation*}
        1-f+f/\kappa \geq f(1/2)/\kappa > 0, 
    \end{equation*}
    \begin{equation*}
        1-2s < -1/2, 
    \end{equation*}
    and 
    \begin{equation*}
        1-(2l-2)s(1-s) \geq 1-(2l-2)(1-s) \geq 1-\frac{2c}{\log\kappa} \geq 1/2{,}
    \end{equation*}
    then 
    \begin{align*}
        &(1-f+f/\kappa)(1-2s)[1-(2l-2)s(1-s)] \leq - \frac{f(1/2)}{4\kappa} = -\frac{1}{8\kappa}{.}
    \end{align*}
    On the other hand, 
    \begin{align*}
        \exp\left(-\frac{1}{s(1-s)}\right) 
        & \leq \exp \left(-(1-\frac{c}{l\log\kappa})^{-1}\frac{l\log\kappa}{c}\right)  \\
        & = \kappa^{-(1-\frac{c}{l\log\kappa})^{-1}\frac{l}{c}} \\
        & \leq \kappa^{-l/c} \\
        & \leq \kappa^{-1} {,}
    \end{align*}
    then
    \begin{align*}
        &\exp\left(-\frac{1}{s(1-s)}\right)c_e^{-1}(1-1/\kappa)s^2(1-s)^2 \\
        \leq & \frac{1}{\kappa}\frac{1}{c_e}\left(\frac{c}{l\log\kappa}\right)^2 \\
        \leq & \frac{1}{16\kappa} {.}
    \end{align*}
    Therefore for all $s \in [1-\frac{c}{l\log\kappa},1]$ we have 
    $G(s) \leq -1/(16\kappa) < 0$, which indicates $s_* \leq 1-\frac{c}{l\log\kappa}$.
    
    We are now ready to bound $F(s)$. From the equation $G(s_*) = 0$, we get 
    \begin{equation*}
        \frac{\exp\left(-\frac{1}{s_*(1-s_*)}\right)}{1-f+f/\kappa} = \frac{(1-2s_*)[1-(2l-2)s_*(1-s_*)]}{c_e^{-1}(-1+1/\kappa)s_*^2(1-s_*)^2} {,}
    \end{equation*}
    which gives 
    \begin{align*}
        F(s) &\leq F(s_*) \\
        &= \frac{(1-2s_*)[1-(2l-2)s_*(1-s_*)]}{c_e^{-1}(-1+1/\kappa)[s_*(1-s_*)]^{2l}} \\
        & \leq \frac{2s_*-1}{c_e^{-1}(1-1/\kappa)[s_*(1-s_*)]^{2l}} \\
        & \leq 2c_e\cdot 2^{2l}(1-s_*)^{-2l} \\
        & \leq 2c_e\cdot 2^{2l}\left(\frac{l\log\kappa}{c}\right)^{2l} \\
        & = 2c_e\left(\frac{64}{c_e}\right)^l(\log\kappa)^{2l}l^{2l}\\
        & \leq \frac{2c_e}{e^2}\left(\frac{64e^2}{c_e}\right)^l(\log\kappa)^{2l}(l!)^2{.}
    \end{align*}
    The last inequality comes from the fact $l^l \leq e^{l-1}l!$, which can be derived from the fact that
\begin{displaymath}
\sum_{i=1}^n \log i\ge \int_1^n\log x\ud x=n\log n-(n-1).
\end{displaymath}
 By definition of $F(s)$ we immediately get
    \begin{equation*}
        \Delta^{-1}ba^l \leq \frac{2\sqrt{2}e}{c_e}4^lF \leq \frac{4\sqrt{2}}{e}\left(\frac{256e^2}{c_e}\right)^l(\log\kappa)^{2l}(l!)^2 {.}
    \end{equation*}
    
    Now we go back to the estimate of $R^{(k)}$. By Lemma~\ref{lem:tech_1},  
    \begin{align*}
        \|R^{(k)}\|_2 &\leq \frac{2}{\Delta}\sum_{l=1}^k\binom{k}{l}ba^l\frac{(l!)^2}{(l+1)^2}\frac{2}{\Delta}\left(D\log^2\kappa\right)^{k-l}\frac{[(k-l)!]^4}{(k-l+1)^2} \\
        & \leq \frac{2}{\Delta}\sum_{l=1}^k\binom{k}{l}\frac{8\sqrt{2}}{e}\left(\frac{256e^2}{c_e}\right)^l(\log\kappa)^{2l}(l!)^2\frac{(l!)^2}{(l+1)^2}\left(D\log^2\kappa\right)^{k-l}\frac{[(k-l)!]^4}{(k-l+1)^2} \\
        & \leq \frac{2}{\Delta} (D\log^2\kappa)^kc_f^{-1}\sum_{l=1}^k\binom{k}{l}\frac{(l!)^4[(k-l)!]^4}{(l+1)^2(k-l+1)^2} \\
        & \leq \frac{2}{\Delta}(D\log^2\kappa)^k\frac{(k!)^4}{(k+1)^2}{.}
    \end{align*}
    This completes the proof.
\end{proof}

The next lemma is the main technical result, which gives the bound of derivatives of $E_{j}$ defined in Equation~\eqref{eqn:E_Def}. 
\begin{lem}\label{lem:E_growth}
    
    (a) For all $k \geq 0$, 
    \begin{equation}
        \|E_0^{(k)}\| = \|P_0^{(k)}\| \leq (D\log^2\kappa)^k\frac{(k!)^4}{(k+1)^2}{.}
    \end{equation}
    
    (b) For all $k \geq 0, j \geq 1$, 
    \begin{equation}
        \|E_{j}^{(k)}\| \leq A_1A_2^{j}A_3^{k}\frac{[(k+j)!]^4}{(k+1)^2(j+1)^2}
    \end{equation}
    with
    \begin{align*}
        A_1 &= \frac{1}{2}\left[c_f^2\left(1+2c_f^2\right)\right]^{-1},\\
        A_2 &= A_1^{-1}c_f^3\frac{16}{\Delta}D\log^2\kappa,\\
        A_3 &= D\log^2\kappa{.}
    \end{align*}
\end{lem}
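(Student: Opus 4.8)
The plan is to reduce everything to the two available a priori estimates—the resolvent bound of Lemma~\ref{lem:R_estimate} and the Hamiltonian bound of Lemma~\ref{lem:H_estimate}—and then propagate them through the defining recursion~\eqref{eqn:E_solu_1}--\eqref{eqn:E_solu_2} using the product and summation estimates of Lemmas~\ref{lem:tech_1}--\ref{lem:tech_3}. Throughout I treat the resolvent and $P_0$ as the two ``atoms'' whose derivative bounds are known, and build $E_j$ out of them.

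For part (a) I would differentiate the Cauchy representation $E_0(s) = -(2\pi\I)^{-1}\oint_{\Gamma(s)}R(z,s,s)\,dz$ under the integral sign; this is legitimate because, as noted in the text, differentiating a contour integral of the resolvent never produces derivatives of the contour itself. This gives $\|P_0^{(k)}\| \le (2\pi)^{-1}\,(\text{length of }\Gamma(s))\,\max_z\|R^{(k)}\|$. Since $\Gamma(s)$ has radius $\Delta/2$ and hence length $\pi\Delta$, the factor $\pi\Delta$ cancels exactly against the $2/\Delta$ in Lemma~\ref{lem:R_estimate}, reproducing the claimed estimate $(D\log^2\kappa)^k(k!)^4/(k+1)^2$ with no loss.

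For part (b) I would induct on $j$. The base case is $E_1 = (2\pi)^{-1}\oint_\Gamma R\,[P_0^{(1)},P_0]\,R\,dz$: I write $P_0^{(1)}$ in the form required by Lemma~\ref{lem:tech_3} (part (a) gives $\|(P_0^{(1)})^{(k)}\| = \|P_0^{(k+1)}\| \le A_3\,A_3^{k}[(k+1)!]^4/(k+1)^2$, so the factorial index shifts to $q=1$), bound the commutator by twice a single product, apply Lemma~\ref{lem:tech_3} twice more to the triple product $R\,[\cdot]\,R$, and integrate over $\Gamma$; the relation $16c_f^3 A_3/\Delta = A_1A_2$ built into the definition of $A_2$ makes the output land exactly on $A_1A_2A_3^{k}[(k+1)!]^4/[4(k+1)^2]$. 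For the inductive step I bound the three summands of~\eqref{eqn:E_solu_2} separately. The integral term $R\,[E_{j-1}^{(1)},P_0]\,R$ is handled exactly as the base case, now feeding $E_{j-1}^{(1)}$ (factorial index $q=j$) from the induction hypothesis; the same $A_2$-relation together with $1/j^2 \le 4/(j+1)^2$ makes its contribution at most $A_1$ times the target. For $S_j = \sum_{m=1}^{j-1}E_m E_{j-m}$, Lemma~\ref{lem:tech_3} gives each product factorial index $r = m+(j-m)=j$, and the residual weight $\sum_{m}(m+1)^{-2}(j-m+1)^{-2}$ collapses to $c_f/(j+1)^2$ by Lemma~\ref{lem:tech_2}, yielding at most $c_f^2 A_1$ times the target; the term $2P_0S_jP_0$ absorbs two further $P_0$-products (index $0$ each) and contributes at most $2c_f^4 A_1$ times the target.

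The main obstacle—and the real content—is the constant bookkeeping that makes the induction close. The three contributions carry ratios $A_1$, $c_f^2A_1$, and $2c_f^4A_1$ relative to the target $A_1A_2^{j}A_3^{k}[(k+j)!]^4/[(k+1)^2(j+1)^2]$, so I must verify their sum does not exceed $1$. This is precisely where the choice $A_1 = \tfrac12[c_f^2(1+2c_f^2)]^{-1}$ is forced: it makes the two dominant pieces satisfy $c_f^2A_1 + 2c_f^4A_1 = \tfrac12$, leaving the tiny integral contribution $A_1$ to fit comfortably in the remaining half. The other delicate point I would watch most carefully is that Lemma~\ref{lem:tech_3} preserves the denominator $(k+1)^2$ while only accumulating the factorial index to $k+j$; this is what confines all $j$-dependence (beyond the $(j+1)^{-2}$ weight) to the single geometric factor $A_2^{j}$, which is exactly what the final exponential estimate in the theorem requires.
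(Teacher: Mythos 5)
Your proposal is correct and follows essentially the same route as the paper's proof: part (a) by differentiating the Cauchy integral and letting the contour length cancel the $2/\Delta$ in Lemma~\ref{lem:R_estimate}, and part (b) by induction on $j$, bounding the three terms of Eq.~\eqref{eqn:E_solu_2} via Lemmas~\ref{lem:tech_1}--\ref{lem:tech_3} and closing the induction with exactly the relations $c_f^3\frac{16}{\Delta}\frac{A_3}{A_2}=A_1\le\frac12$ and $c_f^2(1+2c_f^2)A_1=\frac12$. The constant bookkeeping, including the $1/j^2\le 4/(j+1)^2$ step and the treatment of $S_j$ and $2P_0S_jP_0$, matches the paper's argument.
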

\begin{rem}
    The choice of $A_1,A_2$  can be rewritten as 
    \begin{align*}
        c_f^3\frac{16}{\Delta}D\log^2\kappa & = A_1A_2, \\
        c_f^2\left(1+2c_f^2\right)A_1 &= \frac{1}{2}. 
    \end{align*}
    Furthermore, using $c_f>1$, we have
    \begin{displaymath}
        c_f^3\frac{16}{\Delta}\frac{A_3}{A_2} =A_1\leq \frac{1}{2}{.}
\end{displaymath}
    These relations will be used in the proof later. 
\end{rem}
\begin{proof}

    (a) By Lemma~\ref{lem:R_estimate}, 
    \begin{equation*}
        \|P_0^{(k)}(s_0)\| = \left\|(2\pi \I)^{-1} \oint_{\Gamma(s_0)}R^{(k)}(z,s_0,s_0)dz\right\| 
        \leq (D\log^2\kappa)^k\frac{(k!)^4}{(k+1)^2}
    \end{equation*}
    
    (b) We prove by induction with respect to $j$. For $j = 1$, Eq.~\eqref{eqn:E_solu_2} tells 
    \begin{equation*}
        \|E_1^{(k)}\| = \left\|(2\pi)^{-1}\oint_{\Gamma}\frac{d^k}{ds^k}(R[P_{0}^{(1)},P_0]R)dz\right\| 
        \leq \frac{\Delta}{2}\left\|\frac{d^k}{ds^k}(R[P_{0}^{(1)},P_0]R)\right\|{.}
    \end{equation*}
    By Lemma~\ref{lem:tech_3} and Lemma~\ref{lem:R_estimate}, 
    \begin{align*}
        \|E_1^{(k)}\| &\leq \Delta c_f^3\left(\frac{2}{\Delta}\right)^2D\log^2\kappa(D\log^2\kappa)^k\frac{[(k+1)!]^4}{(k+1)^2} \\
        &\leq A_1A_2A_3^k\frac{[(k+1)!]^4}{(k+1)^2(1+1)^2} {.}
    \end{align*}
    Now assume $<j$ the estimate holds, for $j$, by Lemma~\ref{lem:tech_2}, Lemma~\ref{lem:tech_3} and the induction hypothesis, 
    \begin{align*}
        \|S_j^{(k)}\| &\leq \sum_{m=1}^{j-1}c_fA_1A_2^mA_1A_2^{j-m}A_3^k\frac{[(k+j)!]^4}{(k+1)^2(m+1)^2(j-m+1)^2} \\
        &= A_1^2A_2^jA_3^k\frac{[(k+j)!]^4}{(k+1)^2}c_f\sum_{m=1}^{j-1}\frac{1}{(m+1)^2(j-m+1)^2} \\
        &\leq c_f^2A_1^2A_2^jA_3^k\frac{[(k+j)!]^4}{(k+1)^2(j+1)^2} {.}
    \end{align*}
    Again by Lemma~\ref{lem:tech_3}, Lemma~\ref{lem:R_estimate} and the induction hypothesis, 
    \begin{align*}
        \|E_j^{(k)}\| &\leq \|\frac{d^k}{ds^k}\left((2\pi)^{-1}\oint_{\Gamma}R[E_{j-1}^{(1)},P_0]Rdz\right)\| + \|\frac{d^k}{ds^k}S_j\| + \|\frac{d^k}{ds^k}\left(2P_0S_jP_0\right)\| \\
        &\leq \Delta c_f^3\left(\frac{2}{\Delta}\right)^2A_1A_2^{j-1}A_3
        \frac{1}{j^2}A_3^k\frac{[(k+j)!]^4}{(k+1)^2}  +c_f^2A_1^2A_2^jA_3^k\frac{[(k+j)!]^4}{(k+1)^2(j+1)^2} \\
        & \quad\quad + 2c_f^2c_f^2A_1^2A_2^j\frac{1}{(j+1)^2}A_3^k\frac{[(k+j)!]^4}{(k+1)^2} \\
        & \leq c_f^3\frac{16}{\Delta}A_1A_2^{j-1}A_3^{k+1}\frac{[(k+j)!]^4}{(k+1)^2(j+1)^2}  +c_f^2A_1^2A_2^jA_3^k\frac{[(k+j)!]^4}{(k+1)^2(j+1)^2} \\
        & \quad\quad
        + 2c_f^4A_1^2A_2^jA_3^k\frac{[(k+j)!]^4}{(k+1)^2(j+1)^2} \\
        & = \left[c_f^3\frac{16}{\Delta}\frac{A_3}{A_2} + c_f^2\left(1+2c_f^2\right)A_1\right] \times \left[A_1A_2^jA_3^k\frac{[(k+j)!]^4}{(k+1)^2(j+1)^2}\right] \\
        & \leq A_1A_2^jA_3^k\frac{[(k+j)!]^4}{(k+1)^2(j+1)^2}{.}
    \end{align*}
\end{proof}

\section{Details of the numerical treatments and examples}\label{app:numerics}
For simulation purpose, the AQC schemes are carried out using the first-order Trotter splitting method with a time step size $0.2$. We use the gradient descent method to optimize QAOA and record the running time corresponding to the lowest error in each case. 
In QAOA we also use the true fidelity to measure the error. RM is a Monte Carlo method, and each RM calculation involves performing 200 independent runs to obtain the density matrix $\rho^{(i)}$ for $i$-th repetition, then we use the averaged density $\bar{\rho} = 1/n_{\text{rep}}\sum \rho^{(i)}$ to compute the error. We report the averaged runtime of each single RM calculation. We perform calculations for a series of 64-dimensional Hermitian positive definite dense matrices $A_1$, and 32-dimensional non-Hermitian dense matrices $A_2$ with varying condition number $\kappa$. 

For concreteness, for the Hermitian positive definite example, we choose $A = U\Lambda U^{\dagger}$. 
Here $U$ is an orthogonal matrix obtained by Gram-Schmidt orthogonalization (implemented via a QR factorization) of 
the discretized periodic Laplacian operator  given by 
\begin{equation}
    L = \left(\begin{array}{cccccc}
        1 & -0.5 & & & &-0.5 \\
        -0.5 & 1 & -0.5 & & & \\
         & -0.5 & 1 & -0.5 & & \\
         &  & \ddots &\ddots & \ddots& \\
         &  & & -0.5& 1& -0.5 \\
        -0.5 &  & & & -0.5 & 1 \\
    \end{array}\right){.}
\end{equation}
$\Lambda$ is chosen to be a diagonal matrix with diagonals 
uniformly distributed in $[1/\kappa,1]$. More precisely, 
$\Lambda = \text{diag}(\lambda_1,\lambda_2,\cdots,\lambda_{N})$ with $\lambda_k = 1/\kappa + (k-1)h, h = (1-1/\kappa)/(N-1)$. Such construction ensures $A$ to be 
a Hermitian positive definite matrix which satisfies $\|A\|_2 = 1$ and the condition number of $A$ is $\kappa$. 
We choose $\ket{b} = \sum_{k=1}^{N}u_k / \|\sum_{k=1}^{N}u_k\|_2$ 
where $\{u_k\}$ is the set of the column vectors of $U$.  Here $N=64$.

For the non-Hermitian positive definite example, we choose $A = U\Lambda V^{\dagger}$. 
Here $U$ is the same as those in the Hermitian positive definite case, 
except that the dimension is reduced to $N=32$. 
$\Lambda = \text{diag}(\lambda_1,\lambda_2,\cdots,\lambda_N)$ with 
$\lambda_k = (-1)^k(1/\kappa+(k-1)h), h = (1-1/\kappa)/(N-1)$. 
$V$ is an orthogonal matrix obtained by Gram-Schmidt orthogonalization of 
the matrix
\begin{equation}
    K = \left(\begin{array}{cccccc}
        2 & -0.5 & & & &-0.5 \\
        -0.5 & 2 & -0.5 & & & \\
         & -0.5 & 2 & -0.5 & & \\
         &  & \ddots &\ddots & \ddots& \\
         &  & & -0.5& 2& -0.5 \\
        -0.5 &  & & & -0.5 & 2 \\
    \end{array}\right){.}
\end{equation}
Such construction ensures $A$ to be non-Hermitian, satisfying $\|A\|_2 = 1$ 
and the condition number of $A$ is $\kappa$. 
We choose the same $\ket{b}$ as that in the Hermitian positive definite example.

\end{document}